\documentclass[aps,prx,twocolumn,10pt,notitlepage,longbibliography,superscriptaddress]{revtex4-2}

\usepackage[utf8]{inputenc}
\usepackage[T1]{fontenc}
\usepackage{lmodern}
\usepackage{microtype}

\usepackage{newunicodechar}
\newunicodechar{Ș}{\textcommabelow{S}}
\newunicodechar{ș}{\textcommabelow{s}}
\newunicodechar{Ț}{\textcommabelow{T}}
\newunicodechar{ț}{\textcommabelow{t}}

\usepackage{amsmath,amssymb,amsthm}
\usepackage{mathtools}
\usepackage{tikz-cd}      
\usepackage{quantikz}     

\usepackage{enumitem}
\usepackage{booktabs}
\usepackage{graphicx}
\usepackage{manfnt}       
\usepackage{marginnote}   
\usepackage{subcaption}
\usepackage{booktabs}
\usepackage{multirow}
\usepackage{siunitx}
\usepackage{xfp}
\usepackage[colorlinks=true,linkcolor=blue,citecolor=blue,urlcolor=blue]{hyperref}
\usepackage[capitalize]{cleveref}   

\ExplSyntaxOn
\AddToHook{env/thebibliography/begin}{
  \providecommand\bibinfo[2]{#2}
  \cs_set_eq:NN \paper_bibinfo:nn \bibinfo
  \RenewDocumentCommand\bibinfo{m m}{
    \str_if_eq:nnTF {#1} {note}
      {
        \regex_match:nnTF {\A https?://} {#2}
          {\begingroup\urlstyle{same}\url{#2}\endgroup}
          {\paper_bibinfo:nn{#1}{#2}}
      }
      {\paper_bibinfo:nn{#1}{#2}}
  }
}
\ExplSyntaxOff

\newtheorem{theorem}{Theorem}
\newtheorem{proposition}{Proposition}
\newtheorem{lemma}{Lemma}
\newtheorem{corollary}{Corollary}

\newtheorem*{xmeasurementrestatement}{Theorem~\ref{thm:x-measurements}}
\newtheorem*{efficientsimulationrestatement}{Theorem~\ref{thm:efficient-simulation}}
\newtheorem*{dcpnormalformrestatement}{Proposition~\ref{prop:dcp-normal-form}}
\newtheorem*{dcpzmeasurementrestatement}{Proposition~\ref{prop:z-update}}

\theoremstyle{definition}
\newtheorem{definition}{Definition}
\newtheorem{example}{Example}

\theoremstyle{remark}
\newtheorem{remark}{Remark}

\numberwithin{equation}{section}
\renewcommand{\theequation}{\arabic{section}.\arabic{equation}}

\newcommand{\Z}{\mathbb{Z}}

\newcommand{\C}{\mathbb{C}}
\newcommand{\F}{\mathbb{F}}

\DeclareMathOperator{\supp}{supp}
\DeclareMathOperator{\Fix}{Fix}
\DeclareMathOperator{\rank}{rank}

\DeclareMathOperator{\image}{Im}
\renewcommand{\Im}{\image}
\renewcommand{\S}{S}
\newcommand{\T}{T}
\newcommand{\Hgate}{H}
\DeclareMathOperator{\CNOT}{CNOT}
\DeclareMathOperator{\CCZ}{CCZ}
\DeclareMathOperator{\CS}{CS}
\DeclareMathOperator{\CZ}{CZ}
\DeclareMathOperator{\Span}{span}

\newcommand{\gcomm}[2]{\mathopen{[\![}#1,#2\mathclose{]\!]}}

\renewcommand{\ket}[1]{|#1\rangle}
\renewcommand{\bra}[1]{\langle#1|}

\renewcommand{\vec}[1]{#1}

\begin{document}

\title{Polynomial-time simulation of non-Clifford quantum error correction}
\author{Șerban Cercelescu}
\email{serban-ion.cercelescu@cs.ox.ac.uk}
\affiliation{Department of Computer Science, University of Oxford, Oxford, UK}
\author{Mark Koch}
\email{mark.koch@cs.ox.ac.uk}
\affiliation{Department of Computer Science, University of Oxford, Oxford, UK}
\author{Arthur Pesah}
\email{arthur.pesah@fu-berlin.de}
\affiliation{Dahlem Center for Complex Quantum Systems, Freie Universität Berlin, 14195 Berlin, Germany}

\begin{abstract}
We show that every intermediate state of a broad class of non-Clifford quantum error-correction circuits is a third-order phase-polynomial state, even under stochastic Pauli noise. The class includes magic state distillation and cultivation, code switching and gauge fixing, transversal non-Clifford gates with syndrome extraction, and injection of diagonal magic states. Third-order phase-polynomial states strictly generalize stabilizer states, and we prove that they are exactly the stabilizer states of the diagonal-Clifford-and-Pauli (DCP) stabilizer formalism that we introduce. We characterize these states and show that their representations can be updated in polynomial time. This yields an exact polynomial-time simulation algorithm for these circuits and, more broadly, a framework for reasoning about their internal states and about magic states in general. We provide an open-source implementation, \texttt{merlin}, and benchmark it against existing non-Clifford simulators on distillation, cultivation, and code switching circuits. The benchmarks demonstrate improved runtime and memory scaling as the number of logical outputs grows in Bravyi-Haah distillation, and simulation of a code switching circuit beyond the reach of all other tested simulators.
\end{abstract}
\maketitle

\begingroup
\makeatletter
\let\l@subsection\@gobbletwo
\let\l@subsubsection\@gobbletwo
\makeatother
\tableofcontents
\endgroup

\section{Introduction}

Designing a fault-tolerant quantum architecture requires choosing an error-correcting code, a syndrome-extraction procedure, and protocols for implementing logical gates. Comparing these choices often requires circuit-level simulation under physically motivated noise models, since their logical error rates and thresholds are generally difficult to determine analytically. Efficient Clifford simulation tools, such as \texttt{stim} \cite{gidney2021stim}, have made it possible to evaluate and compare a wide range of quantum memory experiments and logical Clifford operations.

Extending this capability to non-Clifford operations, which are required for universal quantum computation, remains a challenge. Numerous fault-tolerant protocols have been proposed, either to implement such operations directly or to prepare encoded \emph{magic states} that enable them through state injection. These include magic state distillation \cite{bravyi2005universal,bravyi2012magic,campbell2017unified,litinski2019magic}, cultivation \cite{gidney2024magic,vaknin2026high}, code switching \cite{beverland2021cost,bombin2014gauge,kubica2015universal,tan2025singleshot}, and pieceable fault tolerance \cite{yoder2016universal}. Recent constructions have strengthened the prospects for preparing magic states at high rates, including constant-overhead distillation protocols based on codes with nonvanishing encoding rate \cite{golowich2024asymptotically, golowich2024quantum, he2025asymptotically, li2026transversal, nguyen2024good, scruby2025quantum, steinberg2025universal, tan2025singleshot, wills2024constantoverhead, wills2026improved, xu2025batched, zhu2025nonclifford}. Such approaches reduce the overhead per output by processing many magic states within a growing code block, making their performance at large numbers of logical qubits particularly important to assess.

The need to evaluate these protocols has motivated recent simulation algorithms and software libraries tailored to non-Clifford quantum error correction \cite{haenel2026tsim,chase2026clifft,umbrarescu2026syqma,fang2026symft,surti2026efficient}. These methods substantially extend the range of accessible circuits, but generally retain potentially exponential runtime dependencies. In particular, Surti, Daguerre, and Kim \cite{surti2026efficient} establish polynomial scaling in the number of physical qubits for broad families of magic state preparation protocols when the number $k$ of logical qubits is fixed. Their complexity also depends on a stabilizer or Pauli decomposition of the target logical state, whose size can grow exponentially with $k$. Consequently, efficient simulation at fixed $k$ does not guarantee efficient simulation of high-rate protocols, where the number of logical outputs grows with the block size.

This motivates simulation methods that scale polynomially in both the physical and logical system sizes, even as the number of non-Clifford operations grows. In this work, we identify a broad class of circuits relevant to fault-tolerant magic state preparation that admit such simulation under stochastic Pauli noise.
\subsection{Summary of results}

We identify a class of non-Clifford quantum circuits that admit classical simulation in time polynomial in the number of qubits and the circuit size. The supported operations are:
\begin{itemize}
    \item Preparation of $|0\rangle$ and $|+\rangle$ states;
    \item Pauli, $\CNOT$, and $\T$ gates;
    \item Arbitrary $Z$-type Pauli measurements and compatible $X$-type Pauli measurements.
\end{itemize}
To describe the measurement restriction, we first identify the states generated by the unitary operations above. A \emph{third-order phase polynomial (3PP) state} is a uniform-magnitude superposition over an affine subspace of computational-basis labels (that is, the solution set of a system of linear equations over $\F_2$), with phases specified by a third-order phase polynomial over $\mathbb{Z}_8$. These are exactly the states obtainable from the specified preparations using Pauli, $\CNOT$, and $\T$ gates. A projective Pauli measurement is \emph{compatible} with a 3PP state if the post-measurement state is a 3PP state for every outcome having nonzero probability. We call circuits composed of the operations above, with each measurement compatible on every reachable branch, \emph{3PP circuits}.

The simulation maintains an exact, efficient representation of the conditional state throughout the circuit and samples intermediate measurement outcomes according to their probabilities. It also supports exact evaluation of computational-basis amplitudes in a fixed global-phase convention, exact Pauli expectation values, computational-basis sampling, and equality testing between represented states up to global phase. Stochastic Pauli noise is incorporated by sampling faults and evolving the corresponding conditional states, provided that measurements remain compatible along the noisy trajectories.

Alongside the phase-polynomial description, we develop an equivalent \emph{diagonal-Clifford-and-Pauli (DCP) stabilizer formalism}, in which states are uniquely specified, up to global phase, by stabilizing operators generated by Pauli gates and diagonal Clifford gates. We prove that a state is a DCP stabilizer state if and only if it is a 3PP state. We establish structural properties of DCP stabilizer groups, construct efficient tableau representations, and develop efficient update rules for both the tableau and phase-polynomial descriptions.

We give necessary and sufficient conditions for an $X$-type Pauli measurement to be compatible, together with polynomial-time compatibility tests and conditional-state updates. These results extend efficient simulation beyond unitary CNOT-dihedral circuits---circuits made of $\T$, $\CNOT$, and Pauli gates \cite{cross2016scalable}---to circuits with intermediate measurements and polynomial-time classical feedforward.

Our main result for quantum error correction reduces compatibility under noise to a condition on the ideal circuit.
\Cref{thm:iterative-measurements} shows that, if every ideal $X$ measurement is deterministic or anticommutes with a current $Z$-type stabilizer, then every measurement remains compatible under arbitrary inserted DCP faults, provided that outcome-dependent operations are restricted to DCP corrections.
Thus, these ideal-circuit conditions certify exact polynomial-time simulation of the entire noisy protocol, without a separate analysis of noisy branches.
This covers broad families of magic state preparation protocols, including distillation, cultivation, and code switching, as well as non-Clifford logical operations interleaved with syndrome extraction.

Beyond circuits that preserve the DCP state class, we introduce \emph{DCP rank}: the minimum number of DCP stabilizer states needed to express a state as a linear combination. Such decompositions extend the framework to universal circuit simulation, with a cost that depends on the number of terms. Since every Pauli stabilizer state is a DCP state, DCP rank is no greater than stabilizer rank, although it can still grow exponentially.

Finally, we provide \texttt{merlin}~\cite{koch2026merlin}, an open-source implementation of our simulation methods. We compare its runtime and memory usage with existing non-Clifford simulation libraries on distillation, cultivation, and code switching circuits.
The benchmarks demonstrate improved scaling as the number of logical outputs grows in Bravyi-Haah distillation \cite{bravyi2012magic}, and simulation of a recent code switching circuit based on a $[[81,3,5]]$ code \cite{li2026transversal} on which all other tested libraries exhaust memory or fail.
They also illustrate the complementary strengths of existing methods, which achieve lower per-shot runtimes on the single-output distillation and cultivation examples.

\subsection{Related work}

\subsubsection{Non-Clifford simulation algorithms}

Classical simulation beyond Clifford circuits includes stabilizer decompositions \cite{bravyi2016improved,bravyi2019simulation}, quasiprobability and Pauli-propagation methods \cite{pashayan2015estimating,rall2019simulation}, and hybrid stabilizer--tensor-network representations \cite{masot-llima2024stabilizer,nakhl2025stabilizer}.
Recent work has adapted and optimized these approaches for quantum error correction.
Tsim \cite{haenel2026tsim} combines stabilizer decompositions with ZX-calculus simplification and parametric compilation \cite{kissinger2022classical,sutcliffe2025fast}, while cultivation-specific decompositions demonstrate that circuit structure can substantially reduce the number of required Clifford terms \cite{wan2025cutting,wan2026simulating}.
SOFT \cite{li2025soft}, Clifft \cite{chase2026clifft}, and SymFT \cite{fang2026symft} combine stabilizer or Clifford-frame descriptions with explicit coefficients for non-stabilizer degrees of freedom.
Other approaches include sparse Pauli-frame representations with truncation \cite{tuloup2026computing} and polynomial-memory symbolic simulation through auxiliary qubits, as implemented in SyQMA \cite{umbrarescu2026syqma}.

More specifically exploiting magic-state preparation, Surti, Daguerre, and Kim \cite{surti2026efficient} propagate circuit-level Pauli faults into Clifford errors, with complexity depending on a decomposition of the target magic state.
Magdalena de la Fuente et al.~\cite{fuente2026highthreshold} develop a complementary spacetime approach for efficiently sampling syndrome statistics in a class of non-Clifford CSS-type circuits.
Our approach establishes polynomial-time simulation for circuits preserving the DCP state class, including compatible intermediate measurements, without requiring a small $\T$ count or a small number of logical magic states.

\subsubsection{Simulating \texorpdfstring{$\CNOT+\T$}{CNOT+T} circuits}

The unitary part of 3PP circuits consists of $\CNOT$, $\T$, and Pauli gates acting on qubits initialized in $\ket{0}$ or $\ket{+}$. Circuits with this gate set have been studied in several contexts in the quantum information literature.

This gate set is closely related to the Z-network formulation of Instantaneous Quantum Polynomial-time (IQP) circuits \cite[Section 5]{shepherd2009instantaneous}. With $\ket{+}$ inputs and $X$-basis measurements, Z-networks generated by $\CNOT$, $X$, and $e^{i\theta Z}$ are equivalent to IQP sampling up to efficient classical post-processing. Restricting the rotation angles to integer multiples of $\pi/8$ gives our gate set, up to global phase. For this choice of angles, Shepherd subsequently showed that marginal distributions on logarithmically many output bits can be computed efficiently \cite[Theorem 4]{shepherd2010binary}.

The unitary group generated by $\CNOT$, $\T$, and Pauli gates was subsequently named the CNOT-dihedral group and studied in the contexts of randomized benchmarking and circuit synthesis \cite{cross2016scalable, amy2018finite, garion2020synthesis}. Efficient polynomial representations and composition algorithms were established in \cite{cross2016scalable}, while \cite{amy2018finite} developed a finite presentation using phase polynomials.

Here, we characterize when $X$ measurements preserve the class of 3PP states and derive efficient conditional-state update rules, enabling simulation of circuits with intermediate measurements satisfying these conditions.

\subsubsection{Generalized stabilizer formalisms}

Our DCP stabilizer formalism can be seen as a special case of the \textit{monomial stabilizer formalism} introduced by Van den Nest \cite{nest2011monomial}, whose stabilizers are products of diagonal unitaries and permutation matrices. DCP operators have this form because Pauli $X$ operators are permutation matrices. In this framework, each stabilizer state is an equal-magnitude superposition of computational-basis states, connected by the stabilizers' permutation actions. For DCP states, these actions are binary translations, so the support is an affine subspace. Our phase-polynomial representation satisfies the monomial framework's sufficient conditions for efficient computational-basis sampling and approximate local expectation values. We additionally provide exact Pauli expectation values and efficient gate and compatible-measurement updates.

More directly, the XS formalism \cite{ni2015noncommuting} uses tensor products of $X$ and $\S$ operators with eighth-root global phases. XS states already admit affine-support, third-order phase-polynomial descriptions, but with additional constraints linking the polynomials to the support. By allowing $\CZ$ factors in stabilizers, DCP removes these constraints and captures all third-order phase-polynomial states.

The XP formalism \cite{webster2022xp} further generalizes XS to arbitrary phase precision. Its measurement analysis \cite[Section 7 and Appendix F]{webster2022xp} provides outcome probabilities for XP measurements and updates for diagonal Pauli measurements. For Pauli $X$ measurements, its relative-phase calculation corresponds to the discrete derivative of the phase polynomial in our formulation. However, it does not classify when non-diagonal measurements preserve the XP state class. We provide such a classification for DCP states, together with polynomial-time compatibility tests and post-measurement updates.

\vspace{\baselineskip}
\paragraph*{Note added.}
While completing this manuscript, we became aware of two independent related works on efficient simulation of noisy non-Clifford quantum error correction.
Takada, Bartlett, and Williamson \cite{takada2026exact} develop a Clifford-stabilizer method closely related to our tableau approach.
Here, we also describe the states using phase polynomials, establish the equivalence of the two representations, and classify the measurements preserving this state class.
Magdalena de la Fuente and Scruby \cite{fuente2026efficiently} instead use a spacetime framework to sample measurement outcomes and bound logical failure rates without explicitly tracking the logical state, whereas our method maintains the conditional state throughout the circuit.
Additionally, we provide an open-source implementation and extend our framework to universal simulation through DCP state decompositions (\Cref{sec:universal-simulation}).
Exploring the precise relationships and practical tradeoffs between these approaches is an interesting direction for future work.

\subsection{Reading guide}

We develop the simulation framework through two complementary descriptions of the same states and circuits.
We begin with third-order phase-polynomial states and compatible circuits in \Cref{sec:phase-polynomial-states}, using this representation to construct a simulator in \Cref{sec:simulator-phase-polynomial-representation}.
We then recast these states in the DCP stabilizer formalism in \Cref{sec:dcp-stabilizer-formalism}, leading to an alternative, tableau-based simulator in \Cref{sec:simulator-dcp-representation}.
With both descriptions in place, \Cref{sec:noisy-simulations} establishes their applicability to noisy quantum error-correction circuits, and \Cref{sec:benchmarking-results} benchmarks our implementation, \texttt{merlin}, against existing simulators.
Finally, \Cref{sec:universal-simulation} extends the approach beyond the efficiently simulable class through DCP decompositions, before we discuss further directions in \Cref{sec:discussion}.
Technical proofs and algorithmic details are collected in the appendices.

Readers primarily interested in understanding the simulator can focus on \Cref{sec:phase-polynomial-states,sec:simulator-phase-polynomial-representation} and defer the DCP formalism on a first reading. The stabilizer picture in \Cref{sec:dcp-stabilizer-formalism,sec:simulator-dcp-representation} provides further structural insights and an alternative way to understand the same states and circuits, as well as an alternative implementation of the simulator.

\section{Phase-polynomial states and circuits}
\label{sec:phase-polynomial-states}

Our simulation framework rests on a simple description of states prepared by Pauli, $\CNOT$, and $\T$ gates: a uniform superposition over an affine space, with phases specified by a third-order polynomial.
We first introduce the CNOT-dihedral group and its phase-polynomial description, then define the corresponding class of third-order phase-polynomial (3PP) states.
Finally, we introduce compatible measurements, which preserve this state class, and use them to specify the circuits considered throughout the paper.

\subsection{The CNOT-dihedral group and phase polynomials}

We consider circuits generated by Pauli, $\CNOT$, and $\T$ gates, whose unitary operations form the third-order CNOT-dihedral group \cite{cross2016scalable}.

\begin{definition}[Third-order CNOT-dihedral group]
    The third-order CNOT-dihedral group on $n$ qubits is
    \begin{align}
        G_3 = \langle X_j, \CNOT_{ij}, \T_j : i \neq j \rangle / \langle \omega_8 I \rangle,
    \end{align}
    where $\omega_8 = e^{i\pi/4}$ and $\T = \operatorname{diag}(1,\omega_8)$.
    The quotient identifies operators that differ by a power of the global phase $\omega_8$.
\end{definition}

Throughout this work, we refer to $G_3$ simply as the \emph{CNOT-dihedral group}.
The name comes from the fact that $X$ and $\T$ generate a dihedral group up to global phase, to which $\CNOT$ gates are added.
Since $Z=\T^4$ and $Y=iXZ$, this group contains all Pauli operators up to global phase.

CNOT-dihedral operators admit an efficient description through their action on computational-basis states.
The $X$ and $\CNOT$ gates implement invertible affine transformations of the bit strings, while $\T$ gates contribute phases that are powers of $\omega_8$.
Consequently, every $U\in G_3$ has a unitary representative satisfying
\begin{align}
    U\ket{x} = \omega_8^{p(x)}\ket{Ax\oplus b},
    \qquad x\in\mathbb{F}_2^n,
\end{align}
where $A\in\mathrm{GL}(n,\mathbb{F}_2)$, $b\in\mathbb{F}_2^n$, and $p:\mathbb{F}_2^n\to\mathbb{Z}_8$.
The phase functions $p$ arising in this way are precisely the third-order phase polynomials defined below \cite{amy2013meetinthemiddle,campbell2017unified,amy2019tcount,amy2018finite}.

\begin{definition}[Third-order phase polynomial]\label{def:third-order-phase-polynomial}
A function $p:\F_2^n\to\Z_8$ is a \emph{third-order phase polynomial} if it can be written as
\begin{align}
    p(x) ={}& d + \sum_{i=1}^n a_i x_i
        + 2 \sum_{1 \le i < j \le n} b_{ij} x_i x_j \notag \\
    &+ 4 \sum_{1 \le i < j < k \le n} c_{ijk} x_i x_j x_k,
\end{align}
with $d,a_i \in \Z_8$, $b_{ij} \in \Z_4$, and $c_{ijk} \in \Z_2$, and where arithmetic is modulo $8$.
\end{definition}

This expression is sometimes called a \emph{weighted polynomial}~\cite{campbell2017unified,vuillot2022quantum}, while the term ``phase polynomial'' is often reserved for expressions of the form
\begin{align} \label{eqn:phase-polynomial-def}
    p(x)=a_0+\sum_{u\ne0}\lambda_u\langle u,x\rangle\pmod8,
\end{align}
where the inner product is taken modulo $2$, and the rest is taken modulo $8$. We show in \Cref{thm:phase-polynomial-equivalence} that these two forms are equivalent, and we therefore use the terms interchangeably.

Within this class, diagonal Clifford operators correspond to a more restricted form.

\begin{definition}[Clifford phase polynomial]\label{def:clifford-phase-polynomial}
A function $d:\F_2^n\to\Z_8$ is a \emph{Clifford phase polynomial} if it has the form
\begin{align}
    d(x)=d_0+2\sum_i \alpha_i x_i
        +4\sum_{i<j}\beta_{ij}x_ix_j \pmod 8,
\end{align}
where $d_0\in\Z_8$, $\alpha_i\in\Z_4$, and $\beta_{ij}\in\F_2$.
\end{definition}

These are precisely the phase functions of diagonal Clifford operators, up to global phase. \Cref{thm:clifford-phase-polynomial} proves this characterization and gives the equivalent parity-sum form.

\subsection{Third-order phase-polynomial states}

The efficient representation of CNOT-dihedral operators extends to the states they prepare. Indeed, for $U\in G_3$ on $n=m+\ell$ qubits,
\begin{align}
    U\ket{+}^{\otimes m} \ket{0}^{\otimes \ell} = \frac{1}{\sqrt{2^m}}\sum_{x \in \F_2^m} \omega_8^{p((x,0))} \ket{A(x,0) \oplus b}
\end{align}
where $(x,0)$ is the concatenation of $x \in \F_2^m$ and $0 \in \F_2^\ell$. Writing $\tilde{p}(x):= p((x,0))$ and $\tilde{A}$ for the first $m$ columns of $A$, we can rewrite the above equation as
\begin{align}
    U\ket{+}^{\otimes m} \ket{0}^{\otimes \ell} = \frac{1}{\sqrt{2^m}}\sum_{x \in \F_2^m} \omega_8^{\tilde{p}(x)} \ket{\tilde{A}x \oplus b}.
\end{align}
Since $\tilde{A}$ has full column rank, choose a left inverse $L$ satisfying $L\tilde{A}=I_m$ and define $q(y):=\tilde p(L(y\oplus b))$ for every $y\in\F_2^n$. Then $q(\tilde{A}x\oplus b)=\tilde p(x)$, and $q$ is a third-order phase polynomial by \Cref{cor:phase-polynomial-affine-pullback}. Changing variables therefore gives
\begin{align}
    U\ket{+}^{\otimes m} \ket{0}^{\otimes \ell} = \frac{1}{\sqrt{|S|}}\sum_{x \in S} \omega_8^{q(x)} \ket{x},
\end{align}
where $S = b \oplus \Im(\tilde{A})$. This motivates the following state class.

\begin{definition}[3PP states]\label{def:3pp}
    A \emph{third-order phase-polynomial state} (3PP state) is a state that is, up to global phase, of the form
    \begin{align}
        \ket{\psi} = \frac{1}{\sqrt{|S|}} \sum_{x \in S} \omega_8^{q(x)} \ket{x},
    \end{align}
    where $S \subseteq \F_2^n$ is a nonempty affine subspace and $q:\F_2^n\to\Z_8$ is a third-order phase polynomial.
\end{definition}

We note that the polynomial associated to a 3PP state is not unique, since its value only matters on the affine support $S$. We present later in \cref{thm:efficient-simulation} an algorithm to efficiently determine whether two 3PP states, represented by their phase polynomial and support, are equal up to a global phase.

The preparation argument above also has a converse.

\begin{proposition}[Preparation of 3PP states]\label{prop:3pp-preparation}
    A state is a 3PP state if and only if, up to global phase, it can be prepared from $\ket{0}$ and $\ket{+}$ states by a circuit of Pauli, $\CNOT$, and $\T$ gates.
\end{proposition}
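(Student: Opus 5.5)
The forward direction (circuits give 3PP states) is the computation carried out just before \Cref{def:3pp}. Any circuit of Pauli, $\CNOT$ and $\T$ gates is, up to global phase, an element $U\in G_3$. Up to a qubit permutation, which is a product of $\CNOT$s and so lies in $G_3$, the input is $\ket{+}^{\otimes m}\ket{0}^{\otimes \ell}$. The displayed change of variables via a left inverse $L$ of $\tilde A$, together with \Cref{cor:phase-polynomial-affine-pullback}, then gives a 3PP state. So all the work is in the converse.

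For the converse, let $\ket{\psi}=|S|^{-1/2}\sum_{x\in S}\omega_8^{q(x)}\ket{x}$ with $S=b\oplus V$ and $\dim V=m$. I proceed in three steps.

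\textbf{Step 1: the support.} Choose a basis $v_1,\dots,v_m$ of $V$ and extend it to an invertible $A\in\mathrm{GL}(n,\F_2)$ whose first $m$ columns are the $v_i$. Every invertible binary matrix is a product of elementary row operations and swaps, hence of $\CNOT$ gates. The resulting circuit $C_A$ satisfies $C_A\ket{y}=\ket{Ay}$. Applying $X$ gates on the support of $b$ then gives a circuit $W$ with
\[
W\ket{+}^{\otimes m}\ket{0}^{\otimes \ell}=|S|^{-1/2}\sum_{x\in S}\ket{x}.
\]

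\textbf{Step 2: the phases.} It remains to apply a diagonal operator $D_q\ket{x}=\omega_8^{q(x)}\ket{x}$ built from Pauli, $\CNOT$ and $\T$ gates; the support is irrelevant here, since $D_q$ is diagonal. I would invoke \Cref{thm:phase-polynomial-equivalence} to write
\[
q(x)=a_0+\sum_{u\neq 0}\lambda_u\langle u,x\rangle \pmod 8 .
\]
The constant $a_0$ is a global phase. Each term $\omega_8^{\lambda_u\langle u,x\rangle}$ is implemented by three operations: a $\CNOT$ ladder that computes the parity $\langle u,x\rangle$ onto one qubit $j\in\supp u$, then $\T_j^{\lambda_u}$, then the inverse ladder. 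These factors commute, so $D_q$ is their product. The final state is $D_qW\ket{+}^{\otimes m}\ket{0}^{\otimes \ell}=\ket{\psi}$ up to global phase.

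\textbf{Step 3: the main obstacle.} The only nontrivial input is the equivalence between the weighted-polynomial form of \Cref{def:third-order-phase-polynomial} and the parity-sum form. This is assumed from \Cref{thm:phase-polynomial-equivalence}. If I had to supply it directly, I would use the standard identities over $\Z_8$:
\[
2x_ix_j = x_i+x_j-(x_i\oplus x_j),
\]
\[
4x_ix_jx_k = x_i+x_j+x_k-(x_i\oplus x_j)-(x_i\oplus x_k)-(x_j\oplus x_k)+(x_i\oplus x_j\oplus x_k).
\]
Each monomial of the weighted form thus becomes a $\Z_8$-combination of parities. Everything else is routine linear algebra over $\F_2$.
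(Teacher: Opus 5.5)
Your proof is correct and follows the paper's argument: you prepare the uniform superposition over $S$ with $X$ and $\CNOT$ gates by extending a basis of $V$ to an invertible matrix, then imprint the phase with a diagonal CNOT-dihedral circuit. The only difference is in the phase step: the paper applies $\prod_i \T_i^{a_i}\prod_{i<j}\CS_{ij}^{b_{ij}}\prod_{i<j<k}\CCZ_{ijk}^{c_{ijk}}$ from the weighted form and simply asserts that $\CS$ and $\CCZ$ can be synthesized from $\T$ and $\CNOT$, whereas your parity-sum form with $\CNOT$-ladder gadgets (via \Cref{thm:phase-polynomial-equivalence}) gives that synthesis explicitly.
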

\begin{proof}
    The calculation above proves that every state prepared by such a circuit is a 3PP state. Conversely, let $\ket{\psi}$ have the form in \Cref{def:3pp}. A uniform superposition over the affine space $S$ can be prepared from $\ket{0}$ and $\ket{+}$ states using $X$ and $\CNOT$ gates: choose an affine parametrization $S=b\oplus\Im(B)$ with $B$ of full column rank, extend the columns of $B$ to an invertible binary matrix, and implement the resulting affine transformation. Writing $q$ in the weighted form above and applying
    \[
        \prod_i \T_i^{a_i}
        \prod_{i<j}\CS_{ij}^{b_{ij}}
        \prod_{i<j<k}\CCZ_{ijk}^{c_{ijk}}
    \]
    then imprints the phase $\omega_8^{q(x)}$ up to the global phase $\omega_8^d$.
    Since $\CS$ and $\CCZ$ can themselves be synthesized from $\T$ and $\CNOT$ gates, this gives the required preparation.
\end{proof}

Keeping track of the affine space $S$ and the $O(n^3)$ coefficients of $q$ is therefore enough to represent any 3PP state.

\subsection{Compatible measurements and 3PP circuits}\label{sec:compatible-measurements}

To extend the gate set to measurements, we require each measurement branch to remain within the 3PP state class.

\begin{definition}[Compatible measurement]\label{def:compatible-measurement}
    Given a 3PP state $\ket{\psi}$, we say that a projective measurement of a Pauli operator is \emph{compatible} with it if the post-measurement state is a 3PP state for every outcome with nonzero probability.
\end{definition}

The need for introducing the notion of compatibility comes from the observation that not every $X$ measurement is compatible, as shown in the following simple example.
\begin{example}\label{ex:incompatible-x-measurement}
    Consider the 3PP state $\ket{\CCZ}=\CCZ\ket{+}^{\otimes 3}$. For an outcome $\varepsilon\in\{+1,-1\}$ of measuring the first qubit in the $X$ basis, the unnormalized post-measurement vector is
    \begin{align}
        \ket{\widetilde\psi_\varepsilon}=\frac{1}{2}(1+\varepsilon X_1)\ket{\CCZ}.
    \end{align}
    The outcome $\varepsilon=-1$ leaves the 3PP state $\ket{\psi_-}\propto\ket{-}\otimes\ket{11}$. The outcome $\varepsilon=+1$, which also has nonzero probability, leaves
    \begin{align}
        \ket{\psi_+}\propto\ket{+}\otimes
        \bigl(\ket{00}+\ket{01}+\ket{10}\bigr).
    \end{align}
    Its computational-basis support has six elements, whereas every affine subspace over $\F_2$ has cardinality a power of two. Thus its support is not affine, so $\ket{\psi_+}$ is not a 3PP state and the measurement is incompatible.
\end{example}

\begin{figure}[tbp]
    \centering
    \begin{quantikz}
        \lstick{$\ket{\psi}$} & \ctrl{1} & \meter{X} & \cwbend{1} \\
        \lstick{$\ket{+}$}    & \control{} & \qw & \gate{X} & \qw
    \end{quantikz}
    \caption{Hadamard gadget. A $\CZ$ gate with a $\ket{+}$ ancilla, followed by an $X$ measurement of the input qubit and an outcome-dependent $X$ correction, leaves the ancilla in the state $\Hgate\ket{\psi}$.}
    \label{fig:hadamard-gadget}
\end{figure}

The fact that our framework cannot simulate arbitrary $X$ measurements is not surprising.
Indeed, preparing a $|+\rangle$ ancilla state, applying $\CZ$, measuring the input qubit in the $X$ basis, and applying an outcome-dependent $X$ correction to the ancilla, implements a Hadamard gate (see \cref{fig:hadamard-gadget}).
Allowing arbitrary $X$ measurements therefore gives universal quantum computation. Consequently, a polynomial-time classical sampler for these circuits would imply $\mathrm{BQP}\subseteq\mathrm{BPP}$.

By contrast, as we discuss in \cref{sec:z-measurements-3pp}, a $Z$-type Pauli measurement automatically preserves the 3PP form: conditioning on a nonzero-probability outcome intersects the affine support with an affine hyperplane and leaves the phase polynomial unchanged.

Together, these closure properties specify the circuits considered in our simulation algorithms.

\begin{definition}[3PP circuit]\label{def:3pp-circuit}
A \emph{3PP circuit} is composed of $\ket{0}$ and $\ket{+}$ preparations, Pauli, $\CNOT$, and $\T$ gates, arbitrary $Z$-type Pauli measurements, and $X$-type Pauli measurements that are compatible on every state reachable with nonzero probability. The operations may depend on earlier measurement outcomes through classical feedforward.
\end{definition}

Thus every measurement branch of a 3PP circuit can be described using an affine support and a third-order phase polynomial.

\section{Simulation in the phase-polynomial representation}
\label{sec:simulator-phase-polynomial-representation}

We now describe how to update the phase-polynomial representation under the
operations in a 3PP circuit.  We represent the current state as
\begin{equation}
    \ket{\psi}
    = \frac{1}{\sqrt{|S|}}
      \sum_{x\in S}\omega_8^{q(x)}\ket{x},
    \qquad S=x_0\oplus V,
    \label{eq:phase-polynomial-state-update}
\end{equation}
where $S\subseteq\F_2^n$ is affine and $q$ is a third-order phase
polynomial.  For the initial state
$\ket{0}^{\otimes a}\ket{+}^{\otimes b}$, one takes
$S=\{0\}^a\times\F_2^b$ and $q=0$.
Thus, the representation can be initialized and all the updates below can be carried out in polynomial time by linear algebra over $\F_2$ and arithmetic on the coefficients of $q$.

\subsection{Unitary operations}

Every CNOT-dihedral unitary has an affine permutation and a third-order
phase polynomial $p$ such that
\begin{equation}
    U\ket{x}=\omega_8^{p(x)}\ket{\phi(x)},
    \qquad \phi(x)=Ax\oplus b,
    \label{eq:cnot-dihedral-action}
\end{equation}
where $A\in\F_2^{n\times n}$ is invertible.  Applying $U$ to
\cref{eq:phase-polynomial-state-update} gives the updated data
\begin{equation}
    S'=\phi(S),
    \qquad
    q'(y)=q\bigl(\phi^{-1}(y)\bigr)
           +p\bigl(\phi^{-1}(y)\bigr)\pmod 8.
    \label{eq:phase-polynomial-unitary-update}
\end{equation}
The set $S'$ is affine, and $q'$ is again a third-order phase polynomial by
\Cref{cor:phase-polynomial-affine-pullback}.  In particular, the elementary
updates are as follows, where
$\tau_{a\to b}(x):=x\oplus x_a e_b$:
\begin{align}
    \T_a^{\pm1}:&\quad
        S'=S,
        &q'(y)&=q(y)\pm y_a, \\
    Z_a:&\quad
        S'=S,
        &q'(y)&=q(y)+4y_a, \\
    X_a:&\quad
        S'=S\oplus e_a,
        &q'(y)&=q(y\oplus e_a), \\
    \CNOT_{a\to b}:&\quad
        S'=\tau_{a\to b}(S),
        &q'(y)&=q\bigl(\tau_{a\to b}(y)\bigr).
\end{align}
Updates for arbitrary Pauli operators follow by composing the $X_a$ and
$Z_a$ rules. Global phases may be discarded.

\subsection{Z-measurements}
\label{sec:z-measurements-3pp}

Let us consider a measurement of a $Z$-type Pauli operator $Z(v)=\prod_{a=1}^n Z_a^{v_a}$, and encode its measurement outcome
as $(-1)^\mu$, with $\mu\in\F_2$.
The projection onto that outcome can be written
\begin{align}
    \ket{\psi'_\mu} &= \frac{1}{2}\left( 1 + (-1)^\mu Z(v) \right)\ket{\psi} \\
    &= \frac{1}{2\sqrt{|S|}} \sum_{x\in S}
    \left( 1 + (-1)^{\mu+\langle v,x\rangle} \right)\ket{x} \\
\end{align}
where we used $Z(v)\ket{x}=(-1)^{\langle v,x\rangle}\ket{x}$.
Therefore, the phase is only nonzero when $\langle v,x\rangle=\mu$, and the affine support is restricted to that hyperplane:
\begin{equation} \label{eq:phase-polynomial-z-measurement-update}
    S'_\mu
    =S\cap\{x\in\F_2^n:\langle v,x\rangle=\mu\},
    \qquad
    q'_\mu=q\big|_{S'_\mu}.
\end{equation}
The probability of the outcome is
\begin{equation}
    \Pr(\mu)=\frac{|S'_\mu|}{|S|}.
\end{equation}
The linear function $x\mapsto\langle v,x\rangle$ is either constant on
$S$ or balanced on $S$.  Consequently, the two outcome probabilities are
either $0$ and $1$, or both $1/2$.  Gaussian elimination determines which
case applies and produces an affine description of $S'_\mu$ in polynomial
time.  A single-qubit $Z$-measurement is the special case $v=e_a$.

\subsection{X-measurements}

Recall from \Cref{sec:compatible-measurements} that an $X$-type Pauli measurement is compatible with a 3PP state when every outcome of nonzero probability leaves a 3PP state. We first introduce the discrete derivative and classify single-qubit $X$ measurements with their phase-polynomial updates, then extend the results to arbitrary $X$-type Pauli measurements.

\begin{definition}[Discrete derivative]\label{def:discrete-derivative}
    Given a function $f: V \to \mathbb Z_8$, we define its discrete derivative $\Delta_v f$ w.r.t. $v \in V$ by $\Delta_v f(x) := f(x \oplus v) - f(x) \pmod 8.$
\end{definition}

\newcommand{\xMeasurementClassificationStatement}{%
    Consider a 3PP state
    \[ \ket{\psi} = \frac{1}{\sqrt{|S|}} \sum_{x\in S} \omega_8^{q(x)}\ket{x} \]
    over an affine subspace $S = x_0 \oplus V$ with phase polynomial $q$.
    An $X$ measurement on qubit $a$ is compatible with $\ket{\psi}$ iff one of the following cases apply.
    In each case, we get a post-measurement affine subspace $S'_\mu \subseteq \mathbb F_2^n$ and phase polynomial $q'_\mu: S'_\mu \to \mathbb Z_8$ depending on the measurement outcome $\mu \in \{0, 1\}$, which occurs with some probability $Pr(\mu)$.
    \begin{enumerate}
        \item $e_a \not\in V$. In that case, $Pr(\mu) = \frac{1}{2}$ and the post-measurement state is defined by $S'_\mu = S \cup (S \oplus e_a)$ and the unique phase polynomial $q'_\mu$ with $q'_\mu(x) = q(x)$ and $q'_\mu(x \oplus e_a) = q(x) + 4\mu$ for all $x \in S$.

        \item
        $e_a \in V$ and one of the following apply:
        \begin{enumerate}
            \item \label{thm:x-measurements:even-0-4}
            $\Delta_{e_a} q(x) = 4f(x) \pmod 8$ for some affine function $f: S \to \mathbb F_2$. In that case
            \[
                Pr(\mu) = \begin{cases}
                    0 & \text{if } f(x) \neq \mu \text{ for all } x \in S, \\
                    1 & \text{if } f(x) = \mu \text{ for all } x \in S, \\
                    \frac{1}{2} & \text{otherwise},
                \end{cases}
            \]
            and the post-measurement state is defined by $S'_\mu = S \cap \{ x ~|~ f(x) = \mu \pmod 2 \}$ and $q'_\mu(x) = q(x)$ for all $x \in S'_\mu$.

            \item \label{thm:x-measurements:even-2-6}
            $\Delta_{e_a} q(x) = 2 + 4g(x) \pmod 8$ for some function $g: S \to \mathbb F_2$ such that $2g$ is a third-order phase polynomial.
            In that case, $Pr(\mu) = \frac{1}{2}$ and the post-measurement state is defined by $S'_\mu = S$ and $q'(x) = q(x) - 2(-1)^\mu g(x)$

            \item \label{thm:x-measurements:odd}
            $\Delta_{e_a} q(x) = (-1)^{h(x)}d \pmod 8$ for some affine function $h: S \to \mathbb F_2$ and $d \in \{ 1, 3, 5, 7 \}$.
            In that case, $Pr(\mu) = \frac{1}{2}(1 + (-1)^\mu\cos(d\pi/4))$ and the post-measurement state is defined by $S'_\mu = S$ and $q'(x) = q(x) + (4\mu - d)h(x)$.
        \end{enumerate}
    \end{enumerate}
}
\begin{theorem}\label{thm:x-measurements}
    \xMeasurementClassificationStatement
\end{theorem}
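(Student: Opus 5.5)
Write the unnormalized post-measurement vector as $\ket{\widetilde\psi_\mu}=\tfrac12(1+(-1)^\mu X_a)\ket{\psi}$ and compute its amplitudes directly. Since $X_a\ket{x}=\ket{x\oplus e_a}$, the amplitude at $y$ is proportional to $\omega_8^{q(y)}[y\in S]+(-1)^\mu\omega_8^{q(y\oplus e_a)}[y\oplus e_a\in S]$. The first split is whether $e_a\in V$.

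\textbf{Case $e_a\notin V$.} The sets $S$ and $S\oplus e_a$ are disjoint, so each $y$ receives exactly one term. The support is then $S\cup(S\oplus e_a)=x_0\oplus(V+\langle e_a\rangle)$, which is affine, and both outcomes have norm $1/2$. The phase on $S\oplus e_a$ is $q(x)+4\mu$, which is realized by the polynomial $q(y\oplus y_a' e_a)+4\mu\,\ell(y)$. Here $\ell$ is an affine function that is $0$ on $S$ and $1$ on $S\oplus e_a$, and $y\mapsto y\oplus \ell(y)e_a$ is an affine map. By \Cref{cor:phase-polynomial-affine-pullback} this is again third order, so the measurement is always compatible in this case.

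\textbf{Case $e_a\in V$.} Here $S\oplus e_a=S$, and the amplitude at $x\in S$ is $\omega_8^{q(x)}\bigl(1+(-1)^\mu\omega_8^{\Delta q(x)}\bigr)/2$ with $\Delta=\Delta_{e_a}$. The outcome probability is the average over $S$ of $\tfrac12\bigl(1+(-1)^\mu\cos(\pi\Delta q(x)/4)\bigr)$. Compatibility requires two things: the support $\{x: 1+(-1)^\mu\omega_8^{\Delta q(x)}\neq 0\}$ must be affine, and the resulting phase function must be a third-order phase polynomial on it.

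The key structural step is to show that $\Delta q$ is a second-order object. Expanding the weighted form, $\Delta q(x)=a_a+2\sum_j b_{aj}x_j+4\sum c_{ajk}x_jx_k$ (up to the $2$-vs-$4$ interactions from flipping $x_a$). Its parity $\Delta q \bmod 2$ is therefore the constant $a_a \bmod 2$, and this forces a trichotomy.

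\begin{itemize}
\item \textbf{Odd $\Delta q$.} Write $\Delta q=(-1)^{h}d$ after restricting to values in $\{d,-d\}$.
\item \textbf{$\Delta q\equiv 2\pmod 4$.} The values are $2$ or $6$, so $\Delta q=2+4g$.
\item \textbf{$\Delta q\equiv 0\pmod 4$.} The values are $0$ or $4$, so $\Delta q=4f$.
\end{itemize}

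The converse direction of each subcase is a direct computation. If $\Delta q=4f$, the factor $1+(-1)^{\mu+f}$ kills the points with $f\neq\mu$. The support is then $S\cap\{f=\mu\}$, which is affine exactly when $f$ is affine on $S$ (or constant), and the phase is unchanged. If $\Delta q=2+4g$, then $1\pm i(-1)^g$ has constant modulus $\sqrt2$ and phase $\omega_8^{\mp 2(-1)^{\ldots}}$, i.e.\ a $\pm 1$ shift in units of $\omega_8^2$ sign-flipped by $g$. This yields $q-2(-1)^\mu g$, which is third order iff $2g$ is, because $2(-1)^g=2-4g$. If $\Delta q=\pm d$ with $d$ odd, then $1+(-1)^\mu\omega_8^{\pm d}$ has modulus independent of the sign and argument $\pm$ a fixed angle. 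This gives phase $q+(4\mu-d)h$ up to a global constant, which is third order iff $h$ can be chosen affine; the probabilities follow by averaging $\cos$, which is sign-invariant.

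\textbf{Main obstacle: necessity in the odd case.} Here the phase shift $(4\mu-d)h$ has an odd coefficient, so it is third order iff $h$ is affine (indeed, iff $h$ has degree $\le1$ as an $\F_2$-function on $S$). That $\Delta q$ alone only guarantees degree $\le 2$ for $h$ is precisely what makes these conditions nontrivial rather than automatic. In the $4f$ case one must also show a non-affine zero set is genuinely not affine: the support $\{f=\mu\}$ of a quadratic $f$ that is not affine has size that is not a power of two or is not a coset. I would establish this by noting that a level set of $f$ is affine only if $f$ restricted to $S$ is affine or constant, up to the degenerate probability-zero outcome. These converse statements, pinned to the precise degree characterization of when $2g$, $4f$, or $h$ yields a valid phase polynomial (via \Cref{thm:phase-polynomial-equivalence}), are where I expect the most care to be needed.
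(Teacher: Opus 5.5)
\textbf{Gap in the case $e_a\in V$.} You never use the fact that a 3PP state has equal-magnitude amplitudes on its support, and without it the ``only if'' direction of the classification fails.

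Your parity argument only shows that $\Delta_{e_a}q$ is uniformly odd or uniformly even on $S$. In the even case it does not make $\Delta_{e_a}q \bmod 4$ constant, and in the odd case it does not confine the values to a pair $\{d,-d\}$. For example, $q=2x_1x_2$ gives $\Delta_{e_1}q=2x_2-4x_1x_2$, with values $\{0,2,6\}$. Likewise, $q=x_1+2x_1x_2$ gives $\Delta_{e_1}q=1-2x_1+2x_2-4x_1x_2$, which takes every value in $\{1,3,5,7\}$. So your trichotomy is not exhaustive, and ``after restricting to values in $\{d,-d\}$'' silently assumes what has to be proved.

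The two requirements you list, affine support and a third-order phase, cannot detect these incompatible cases. Take $\CS_{12}\ket{++}$ and outcome $+1$ of $X_1$. The projected vector is proportional to $\ket{+}\otimes(2\ket{0}+(1+i)\ket{1})$. Its support is affine and its phase is $\omega_8^{x_2}$, yet it is not a 3PP state, because its amplitudes have magnitudes $2$ and $\sqrt2$.

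\textbf{The missing step.} The paper closes this with a magnitude computation. One has $|1+\varepsilon\omega_8^{t}|=2|\cos(t\pi/8)|$ for $\varepsilon=+1$ and $2|\sin(t\pi/8)|$ for $\varepsilon=-1$. Require this to be constant on the nonvanishing set for every outcome of nonzero probability. Note that a zero-probability outcome forces $\Delta_{e_a}q$ to be constantly $0$ or $4$ on $S$. Together these force the image of $\Delta_{e_a}q$ on $S$ to lie in one of $\{0,4\}$, $\{2,6\}$, $\{1,7\}$, $\{3,5\}$. Only after this step does your subcase analysis cover all possibilities. Your odd-case degree argument, which you flag as the main obstacle, is then routine: it is just reduction modulo $2$.

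\textbf{A smaller error in case (2a).} A single level set of $f$ being affine does not force $f$ to be affine. For instance, $f=x_1x_2$ has $\{f=1\}$ equal to a single point. You need both outcomes, which have nonzero probability whenever $f$ is nonconstant. Then use that two disjoint nonempty affine subsets covering $S$ must be parallel hyperplanes of $S$, so $f$ is affine.
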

The proof can be found in \cref{appendix:x-measurements}.

For a general $X$-type Pauli operator $X(m)=\prod_j X_j^{m_j}$ with $m\in\F_2^n\setminus\{0\}$, choose $a$ with $m_a=1$ and set $C_m=\prod_{j\ne a:\,m_j=1}\CNOT_{a\to j}$. Then $C_mX(m)C_m^\dagger=X_a$, so
\begin{align}
    \frac{I+(-1)^\mu X(m)}{2}
    =C_m^\dagger\frac{I+(-1)^\mu X_a}{2}C_m.
    \label{eq:x-measurement-reduction}
\end{align}
Since both $C_m$ and $C_m^\dagger$ preserve 3PP states, measuring $X(m)$ on $\ket{\psi}$ has the same compatibility and outcome probabilities as measuring $X_a$ on $C_m\ket{\psi}$. Applying $C_m^\dagger$ to the conditional state gives the original measurement's conditional state. In the original coordinates, the same cases and update formulas apply with $e_a$ replaced by $m$ and $\Delta_{e_a}q$ replaced by $\Delta_mq$; in particular, case~(1) means $m\notin V$. We use these case labels for arbitrary $X$-type Pauli measurements throughout the paper. For $m=0$, the operator is the identity and the outcome is deterministically $+1$.

\subsection{Efficient simulation}

\newcommand{\efficientSimulationStatement}[1]{%
An $n$-qubit 3PP state admits a representation
\begin{align}
    \ket{\psi}
    = 2^{-r/2}\sum_{y\in\mathbb{F}_2^r}
    \omega_8^{p(y)}\ket{x_0\oplus By},
    #1
\end{align}
where $B\in\mathbb{F}_2^{n\times r}$ has full column rank and $p:\mathbb{F}_2^r\to\mathbb{Z}_8$ is a third-order phase polynomial.
This representation requires $O(n^3)$ bits and supports the following operations.

\begin{enumerate}
    \item \textbf{Circuit simulation.}
    Let $C$ be a 3PP circuit made of $m$ operations acting on at most $n$ qubits.
    There exists a classical algorithm that samples the measurement record according to its exact distribution and maintains the corresponding conditional state, up to global phase, in the form \eqref{eq:simulation-representation}.
    Each operation takes $O(n^3)$ expected time, giving a total expected runtime of
    \begin{align}
        O(mn^3)+T_{\mathrm{classical}},
    \end{align}
    where $T_{\mathrm{classical}}$ is the total time required for classical feedforward.
    The maintained state representation uses $O(n^3)$ bits, excluding storage for the measurement record and classical processing.

    \item \textbf{Queries on represented states.}
    Given a state in the form \eqref{eq:simulation-representation}, the following tasks can be performed:
    \begin{enumerate}
        \item Compute $\langle x|\psi \rangle$ exactly for any $x\in\mathbb{F}_2^n$ in $O(n^3)$ time.
        \item Sample $x\in\mathbb{F}_2^n$ according to $|\langle x|\psi \rangle|^2$ in $O(n^2)$ time.
        \item Compute $\bra{\psi}P\ket{\psi}$ exactly for any specified Pauli operator $P$ in $O(n^3)$ time.
        \item Given a second represented state on the same ordered qubits, decide whether the two states are equal up to global phase in $O(n^4)$ time.
    \end{enumerate}
\end{enumerate}
}

\begin{theorem}[Efficient simulation of 3PP circuits]
\label{thm:efficient-simulation}
\efficientSimulationStatement{\label{eq:simulation-representation}}
\end{theorem}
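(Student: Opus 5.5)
We work throughout in the parametrized form \eqref{eq:simulation-representation}. Every 3PP state $\frac{1}{\sqrt{|S|}}\sum_{x\in S}\omega_8^{q(x)}\ket{x}$ can be put in this form. Choose a basis $B$ of $V$ with $S=x_0\oplus V$, and set $p(y)=q(x_0\oplus By)$. By \Cref{cor:phase-polynomial-affine-pullback}, $p$ is again a third-order phase polynomial. The representation stores $x_0$, $B$ (which has $O(n^2)$ bits), and the weighted coefficients of $p$, at most $O(r^3)\le O(n^3)$ of them with constant bit size each. The main advantage of this form is that the map $y\mapsto x_0\oplus By$ is a bijection onto $S$. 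Two consequences follow. Amplitudes are simply $\omega_8^{p(y)}/2^{r/2}$, and derivatives along $V$ become derivatives of $p$ along coordinate directions $e_j\in\F_2^r$.

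\textbf{Circuit simulation.} We go through the operations one at a time.
\begin{enumerate}
\item \emph{Preparations.} Adding a qubit in $\ket{0}$ or $\ket{+}$ appends a row to $B$ (and a column in the $\ket{+}$ case).
\item \emph{Pauli and CNOT gates.} These act only on $(x_0,B)$ via the affine map $\phi$. The $X$ and $\CNOT$ rules in \eqref{eq:phase-polynomial-unitary-update} never touch $p$, since $q\circ\phi^{-1}$ pulled back along the new parametrization is just $p$.
\item \emph{$\T_a$ and $Z_a$ gates.} These add $k\,(x_0\oplus By)_a$ to $p$. This is an affine $\F_2$-form in $y$. Converting it to weighted form uses the identity $x\oplus x' = x+x'-2xx'$, expanded over the $O(n)$ terms of the parity, which costs $O(n^2)$ or $O(n^3)$.
\item \emph{$Z$-measurements.} Compute $u=B^Tv$ and $c=\langle v,x_0\rangle$. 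If $u=0$ the outcome is deterministic. Otherwise the outcome is uniform. After sampling it, eliminate one variable $y_j$ with $u_j=1$ by substituting $y_j=\mu\oplus c\oplus\sum_{i\neq j}u_iy_i$. This updates $B$ by column operations and $p$ by an affine substitution. Expanding the cubic terms under that substitution is where the $O(n^3)$ cost comes from.
\item \emph{$X$-measurements.} Use \eqref{eq:x-measurement-reduction} to conjugate by $C_m$, which is $O(n)$ CNOT updates, so that we only need to treat $X_a$. Decide whether $e_a\in\Im B$ by Gaussian elimination.
\begin{itemize}
\item \emph{Case (1), $e_a\notin\Im B$.} Append $e_a$ as a new column of $B$ with variable $y_{r+1}$, and add $4\mu y_{r+1}$ to $p$.
\item \emph{Case (2), $e_a\in\Im B$.} Perform column operations (an invertible linear substitution on $y$) so that $e_a$ is the column $B e_1$. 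Then $\Delta_{e_a}q$ corresponds to $\Delta_{e_1}p$, and we can read off its weighted polynomial in $O(n^2)$ coefficients. Test the three subcases on the constant and linear/quadratic parts. In subcase (a), $4f$ must have no quadratic part, and $f$ is read off from the linear terms. In subcase (b), the constant must be $2$ mod $4$. In subcase (c), the constant is odd, and the required form $(-1)^{h}d$ forces a specific structure that we check coefficient-wise.
\end{itemize}
Then apply the update of \Cref{thm:x-measurements}, sample $\mu$ with the stated probability, and undo $C_m$.
\end{enumerate}
Each step costs $O(n^3)$. The only randomness is the sampling of outcomes, hence the expected-time statement. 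The main obstacle is this $X$-measurement step. Rather than invoking the theorem abstractly, we must produce the required representations of $f$, $g$ or $h$ and their phase-polynomial forms efficiently. For subcase (c) we must show that membership can be decided from the weighted coefficients. Our first attempt would be to note that $(-1)^h d \equiv d-2d\,h \pmod 8$ for $h\in\{0,1\}$, and then match $\Delta_{e_1}p$ against $d-2d\,h$ with $h$ affine, written via the parity expansion.

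\textbf{Queries.}
\begin{enumerate}
\item \emph{Amplitudes.} Solve $x\oplus x_0=By$ in $O(n^2r)$ time. If there is no solution the amplitude is $0$; otherwise evaluate $p(y)$ in $O(r^3)$.
\item \emph{Sampling.} Draw a uniform $y$ and output $x_0\oplus By$, in $O(nr)$ time.
\item \emph{Pauli expectations.} For $P\propto X(m)Z(v)$ we need $m\in\Im B$; otherwise the expectation is $0$. Given $m=Bw$, the expectation is
\[
2^{-r}\sum_y \omega_8^{\,p(y\oplus w)-p(y)}(-1)^{\langle v,x_0\oplus By\rangle},
\]
up to a known phase. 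The exponent $\Delta_wp$ is a polynomial of degree at most $2$ in weighted form. This is a quadratic Gauss-type sum over $\Z_8$, because the $\pm$ factor adds $4\cdot(\text{affine})$. We evaluate it by the standard iterative elimination of variables used for Clifford amplitude sums, in $O(r^3)$.
\item \emph{Equality up to global phase.} First compare the supports by rank computations. If they agree, reparametrize one state onto the other's $B$ using an affine change of variables, costing $O(n^3)$ per cubic expansion. Then test whether $p_1-p_2$ is constant on $\F_2^r$. Canonical weighted coefficients are unique as functions (they come from Möbius inversion with the stated ranges), so it suffices to compare all non-constant coefficients. Computing the canonical form might cost up to $O(n^4)$, which matches the claimed bound.
\end{enumerate}
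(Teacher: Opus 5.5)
Your overall route matches the paper's: the same parametrization $x_0\oplus By$, the same operation-by-operation cost analysis, and the same handling of amplitudes, sampling, Gauss-sum expectations and equality. There is, however, a genuine gap at the step you yourself flag as the main obstacle. It sits in case (2b), and the paper needs a nontrivial lemma there that your proposal does not supply.

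In case (2b) the update is $p'=p-2(-1)^\mu g$, where $g$ is the quadratic Boolean function read off from $\Delta_{e_1}p=2+4g$. To store $p'$ you must write $2g$ in weighted form modulo $8$.

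\begin{itemize}
\item Under the compatibility promise, your test ``the constant is $2$ mod $4$'' does tell you that you are in case (2b). It says nothing about how to carry out this expansion.
\item The naive route fails the bound. Writing $g$ as an XOR of its $O(r^2)$ monomials and applying $2(a\oplus b)=2a+2b-4ab$ produces $O(r^4)$ cross terms, including quartic monomials $4y_iy_jy_ky_l$. This exceeds the claimed $O(n^3)$ per operation. It also yields a third-order polynomial only because those quartic terms cancel, which is exactly what needs an argument.
\item The paper resolves this with \Cref{lem:3pp-rank}: $2g$ is third-order iff the alternating matrix of $g$ has rank at most $2$. Binary elimination then gives $g=\ell\oplus uv$ with $\ell$ affine and $u,v$ linear. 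The identity $2g=2\ell+2uv-4\ell uv \pmod 8$ then expands in $O(r^3)$ time.
\item An alternative exists but must be stated. Given the promise, you could recover the weighted coefficients of $2g$ by M\"obius inversion from its values at the $O(r^3)$ points of Hamming weight at most three.
\end{itemize}

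Your sketch for case (2c), matching $\Delta_{e_1}p$ against $d-2dh$ with $h$ affine, is the right idea. It is what \Cref{prop:x-measurement-case-decision} carries out.

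A second, smaller gap concerns exact outcome sampling. In case (2c) the probabilities are $(2\pm\sqrt2)/4$, which cannot be produced exactly from fair coins in bounded worst-case time. The ``expected time'' in the statement comes from precisely this, not from the mere randomness of outcomes as you suggest. The paper compares a uniformly random binary expansion against successively refined rational bounds on the probability. This terminates after an expected constant number of bits. You need this, or an equivalent exact-sampling argument, to justify both that the record follows its exact distribution and the expected $O(n^3)$ per-operation bound.
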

The proof can be found in \cref{appendix:efficient-simulation}.

The Pauli-expectation claim does not require the
corresponding measurement to be compatible: evaluating
an expectation value does not require representing the
post-measurement state. By linearity, it also gives
efficient exact expectations for observables supplied
as polynomial-size sums of Pauli operators.

Equality testing should be distinguished from evaluating
an arbitrary overlap between two DCP states. Such an
overlap generally involves a third-order phase sum,
rather than the quadratic sum appearing in
\cref{eq:pauli-expectation-gauss-sum}, and is not
covered by the theorem.

For circuits with stochastic Pauli noise, the theorem
applies to each sampled fault trajectory, provided that
all measurements remain compatible along that trajectory.
The represented state and its exact expectation values
are conditional on the sampled faults and measurement
outcomes.

Our simulation theorem recovers efficient simulation of Clifford circuits. Every Pauli stabilizer state is a 3PP state, and Pauli measurements preserve the stabilizer-state class. While $\CNOT$ and $\S=\T^2$ are already supported, a Hadamard gate can be implemented by the measurement-based gadget shown in \cref{fig:hadamard-gadget}.

For stabilizer inputs, every intermediate state remains a stabilizer state, so this measurement is compatible. Thus any Clifford circuit can be compiled into a 3PP circuit with constant overhead per gate.

Finally, we note that alternative sparse representations of the phase polynomial (for example in the form of Eq. \eqref{eqn:phase-polynomial-def}), can yield better complexity for circuits with few non-Clifford gates, which we have exploited in our implementation.

\section{The DCP stabilizer formalism}
\label{sec:dcp-stabilizer-formalism}

We now develop an equivalent stabilizer description of 3PP states, generalizing the Pauli stabilizer formalism by allowing products of diagonal Clifford and Pauli operators as stabilizers. They provide a Heisenberg-picture representation suited to tracking non-Clifford circuits. We establish the structure and canonical representation of those stabilizer groups, then prove that the states they uniquely specify are exactly the 3PP states.

\subsection{DCP operators and normal form}

The stabilizers corresponding to 3PP states belong to the following operator class.

\begin{definition}[DCP operator]\label{def:dcp}
    A \textit{DCP operator} is a unitary operator $U$ on $n$ qubits which can be expressed as a product of diagonal Clifford and Pauli operators. Explicitly, the group of DCP
    operators on $n$ qubits is
    \begin{align}
        \mathcal{D}^n
        = \langle \omega_8 I,\; X_i,\; \S_i,\; \CZ_{ij}
        \mid 1\leq i,j\leq n,\; i \neq j \rangle.
    \end{align}
\end{definition}

DCP operators can be efficiently represented using the following normal form.

\newcommand{\dcpNormalFormStatement}{%
Every DCP operator $U$ on $n$ qubits corresponds to a unique quadruple $(k,\vec{s},M,\vec{x})$ where $k \in \Z_8, \vec{s} \in \Z_4^{n}, \vec{x} \in \F_2^n$, and $M$ is an $n\times n$ symmetric matrix over $\F_2$ that is zero on the diagonal, such that
\begin{align}
    U = \omega_8^{k}\, D(\vec{s},M)\, X(\vec{x}),
\end{align}
where $D(\vec{s},M)$ is the diagonal Clifford operator defined by
\begin{align}
    D(\vec{s},M) = \S^{s_1} \otimes \cdots \otimes \S^{s_n} \; \cdot \; \prod_{1\le i < j \le n} \CZ_{ij}^{M_{ij}},
\end{align}
and $X(\vec{x})$ is the Pauli operator defined by
\begin{align}
    X(\vec{x}) = X^{x_1} \otimes \cdots \otimes X^{x_n}.
\end{align}
}

\begin{proposition}[Normal form for DCP operators]\label{prop:dcp-normal-form}
\dcpNormalFormStatement
\end{proposition}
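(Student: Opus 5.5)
Existence and uniqueness are handled separately. For existence, I will show that the set $\mathcal{N}$ of operators of the form $\omega_8^{k} D(\vec s,M) X(\vec x)$ contains every generator and is closed under multiplication. Then $\mathcal{N}=\mathcal{D}^n$. The generators are immediate: $\omega_8 I$ has $k=1$, $X_i$ has $\vec x=e_i$, $\S_i$ has $\vec s=e_i$, and $\CZ_{ij}$ has $M_{ij}=M_{ji}=1$. Since $\mathcal{N}$ contains the identity and consists of unitaries in a group generated by finite-order elements, closure under products is enough.

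For closure, it suffices to move a Pauli $X(\vec x)$ past a diagonal Clifford $D(\vec s',M')$. The cleanest way is to work on computational basis states. The operator $D(\vec s,M)$ acts as $\ket{y}\mapsto \omega_8^{d(y)}\ket{y}$ with
\[
d(y)=2\sum_i s_i y_i+4\sum_{i<j}M_{ij}y_iy_j .
\]
This is a Clifford phase polynomial in the sense of \Cref{def:clifford-phase-polynomial}. Conjugating by $X(\vec x)$ replaces $d(y)$ by $d(y\oplus \vec x)$. Using $y_i\oplus x_i = y_i + x_i - 2x_iy_i$ over the integers, a direct expansion shows that $d(y\oplus\vec x)$ is again a Clifford phase polynomial. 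The linear coefficients change as $s_i\mapsto (-1)^{x_i}s_i + 2\sum_j M_{ij}x_j$ (mod $4$), the quadratic part is unchanged, and a constant is produced. Hence
\[
X(\vec x)D(\vec s',M') = \omega_8^{c}\,D(\vec s'',M')\,X(\vec x),
\]
and the product of two normal forms collapses to
\[
\omega_8^{k+k'+c}\,D(\vec s+\vec s'',M+M')\,X(\vec x+\vec x').
\]
Here I use that diagonal operators commute and that $\S^4=I$ and $\CZ^2=I$, so exponents add in $\Z_4$ and $\F_2$.

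For uniqueness, suppose
\[
\omega_8^{k}D(\vec s,M)X(\vec x)=\omega_8^{k'}D(\vec s',M')X(\vec x').
\]
Apply both sides to $\ket{0}$. The left side gives a nonzero multiple of $\ket{\vec x}$ and the right side a multiple of $\ket{\vec x'}$, so $\vec x=\vec x'$. Cancelling $X(\vec x)$ leaves an equality of diagonal operators, $\omega_8^{k}D(\vec s,M)=\omega_8^{k'}D(\vec s',M')$. Comparing the eigenvalues on basis states then recovers all the parameters:
\begin{itemize}
\item At $y=0$, we get $k=k'$ in $\Z_8$.
\item At $y=e_i$, we get $\omega_8^{2s_i}=\omega_8^{2s'_i}$, so $s_i=s'_i$ in $\Z_4$.
\item At $y=e_i+e_j$, the remaining difference is $4(M_{ij}-M'_{ij})$ mod $8$, so $M_{ij}=M'_{ij}$.
\end{itemize}
Equivalently, this is the uniqueness of the coefficients of a Clifford phase polynomial, obtainable by Möbius inversion on the points $0$, $e_i$, and $e_i+e_j$.

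The main obstacle is the commutation computation in the closure step. One has to track the integer expansion of $y\oplus x$ carefully mod $8$, for example checking that $4\,(y_i\oplus x_i)(y_j\oplus x_j)$ produces only linear terms with even coefficients and a constant. Apart from this bookkeeping the argument is elementary.
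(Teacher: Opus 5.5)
Your proof is correct and follows the paper's approach. For existence, you move the $X$ factors to the right of the diagonal part; for uniqueness, you match $X$-parts and then evaluate on $\ket{0}$, $\ket{e_i}$, and $\ket{e_i\oplus e_j}$. The only difference is that you push $X(\vec x)$ past a whole $D(\vec s',M')$ at once through the substitution $y\mapsto y\oplus\vec x$ in its phase polynomial, which is essentially the paper's later product formula in \Cref{prop:dcp-group-ops}. The paper instead rewrites words using the generator relations $X_i\S_i=i\S_i^{-1}X_i$ and $X_i\CZ_{ij}=Z_j\CZ_{ij}X_i$.
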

We prove this proposition in \Cref{appendix:dcp-algebra}, where we also give explicit formulas to compute certain operations on DCP operators (multiplication, inverse, conjugation and commutator) using the normal form.

\subsection{Stabilizer groups and stabilizing closure}

We now use DCP operators to specify states by their common fixed space. Write
\begin{align}
    \Fix(U) = \{ v \in \C^{2^n} \mid Uv = v \}
\end{align}
for the space fixed by a unitary operator $U$, and
\begin{align}
    \Fix(S)=\bigcap_{U \in S} \Fix(U)
\end{align}
for the space fixed by a set of operators $S$.

\begin{definition}[Maximal DCP stabilizer group and DCP state]\label{def:dcp-state}
    Given a state $\ket{\psi}$ on $n$ qubits, its \textit{maximal DCP stabilizer group}, $\mathcal{D}(\psi)$, is the subgroup of $\mathcal{D}^n$ defined by
    \begin{align}
        \mathcal{D}(\psi) &= \{ g \in \mathcal{D}^n \mid \ket{\psi} \in \Fix(g) \}
    \end{align}
    We call a state $\ket{\psi}$ a \textit{DCP-stabilizer state} or for short a \textit{DCP state} if the space fixed by its DCP-stabilizer group is spanned by $\ket{\psi}$ itself, that is:
    \[
    \dim \Fix(\mathcal{D}(\psi)) = 1
    \]
    In other words, a state is a DCP state if it is determined up to global phase by the DCP operators that fix it.
\end{definition}

\begin{remark}
    This is analogous to the fact that a state is a (Pauli) stabilizer state if it is determined by its Pauli stabilizer group up to a global phase. Moreover, note that every stabilizer state is a DCP state, but the converse is not true: for example, the state $|T\rangle=\T\ket{+}$ is a DCP state but not a stabilizer state.
\end{remark}

In contrast to the Pauli case, DCP stabilizer groups need not be abelian. For instance, the state $\ket{0+}$ is stabilized by both $X_2$ and $\CZ_{12}$, but
\begin{align}
    \gcomm{X_2}{\CZ_{12}}=Z_1\neq I,
\end{align}
where $\gcomm{g}{h}:=ghg^{-1}h^{-1}$ denotes the group commutator.
More generally, the group commutator of two DCP stabilizers is characterized by the following proposition:

\begin{proposition}[Commutators of DCP stabilizers]
    Let $\ket{\psi}$ be a nonzero state and
    $g,h\in\mathcal{D}(\psi)$. Then
    \begin{align}
        \gcomm{g}{h}=\varepsilon Z(\vec{z})
        \in\mathcal{D}(\psi)
    \end{align}
    for some $\varepsilon\in\{+1,-1\}$ and
    $\vec{z}\in\F_2^n$.
\end{proposition}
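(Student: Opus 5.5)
The proof has two parts. First, a purely algebraic fact: the group commutator of \emph{any} two DCP operators has the form $\omega_8^{c} Z(\vec z)$ for some $c\in\Z_8$ and $\vec z\in\F_2^n$. Second, the stabilizing property of $\ket{\psi}$ forces $\omega_8^{c}=\pm1$. Membership of $\gcomm{g}{h}$ in $\mathcal{D}(\psi)$ needs no work: $\mathcal{D}(\psi)$ is a group, as the set of elements of the group $\mathcal{D}^n$ fixing $\ket{\psi}$.

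\textbf{Step 1: the commutator is diagonal.} I would use the normal form of \Cref{prop:dcp-normal-form} to write $g=\omega_8^{k}D(\vec s,M)X(\vec x)$ and $h=\omega_8^{k'}D(\vec s',M')X(\vec x')$. The map $U\mapsto \vec x$ sends a DCP operator to the translation part of its action on computational-basis labels. Since each DCP operator acts as $\ket{y}\mapsto(\text{phase})\ket{y\oplus\vec x}$, this map is a group homomorphism $\mathcal{D}^n\to\F_2^n$ whose kernel is the group of diagonal DCP operators. The target $\F_2^n$ is abelian, so $\gcomm{g}{h}$ lies in the kernel. Hence $\gcomm{g}{h}$ is diagonal, acting as $\ket{y}\mapsto\omega_8^{e(y)}\ket{y}$ for some phase function $e$.

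\textbf{Step 2: compute the phase function $e$.} This is the main computation. Write $g,h$ as $\ket{y}\mapsto \omega_8^{d_1(y)}\ket{y\oplus \vec x}$ and $\ket{y}\mapsto \omega_8^{d_2(y)}\ket{y\oplus \vec x'}$, where $d_1,d_2$ are Clifford phase polynomials (\Cref{def:clifford-phase-polynomial}, up to global constants). Tracking a basis state through $ghg^{-1}h^{-1}$, the phase function $e$ becomes a signed combination of the discrete derivatives $\Delta_{\vec x}d_2$ and $\Delta_{\vec x'}d_1$, evaluated at shifted points, plus constants.

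The key lemma is that, for a Clifford phase polynomial $d$ and any $v\in\F_2^n$, the derivative $\Delta_v d$ equals a constant plus $4\langle \vec z,y\rangle$ for some $\vec z\in\F_2^n$. To check this, expand term by term:
\begin{itemize}
\item for the linear terms, $2\alpha_i\bigl((y_i\oplus v_i)-y_i\bigr)=2\alpha_i v_i(1-2y_i)$, which is a constant plus $4\cdot(\text{linear})$;
\item for the quadratic terms, $4\beta_{ij}\bigl[(y_i\oplus v_i)(y_j\oplus v_j)-y_iy_j\bigr]$ is $4$ times an affine function mod $2$.
\end{itemize}
Moreover, evaluating such a function at a shifted point $y\oplus u$ only changes the constant. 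Hence $e(y)=c+4\langle\vec z,y\rangle$, so $\gcomm{g}{h}=\omega_8^{c}Z(\vec z)$.

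Carrying out the bookkeeping of shifts and signs in this composition is the step I expect to be the most error-prone. The structural claim itself is robust: diagonal Cliffords commute among themselves, and conjugating one by $X(\vec x)$ changes it only by a Pauli $Z$ factor up to phase.

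\textbf{Step 3: fix the global phase.} Since $g,h$ fix $\ket{\psi}$, so does $\gcomm{g}{h}$, giving $\omega_8^{c}Z(\vec z)\ket{\psi}=\ket{\psi}$. Thus $\ket{\psi}$ is an eigenvector of $Z(\vec z)$ with eigenvalue $\omega_8^{-c}$. The eigenvalues of $Z(\vec z)$ are $\pm1$, so $\omega_8^{c}\in\{+1,-1\}$. Setting $\varepsilon=\omega_8^{c}$ completes the proof.
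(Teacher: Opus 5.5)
Your proof is correct and follows the paper's argument. The paper simply cites the explicit commutator formula of \Cref{prop:dcp-group-ops} to obtain $\gcomm{g}{h}=i^{c}Z(\vec z)$, whereas you re-derive that form inline from the derivative property of Clifford phase polynomials (essentially \Cref{thm:clifford-phase-derivative}, with $e(y)=\Delta_{\vec x}d_2(y\oplus\vec x')-\Delta_{\vec x'}d_1(y\oplus\vec x)$); it then concludes with exactly your eigenvalue argument.
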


\begin{proof}
    By \cref{prop:dcp-group-ops}, we have
    $\gcomm{g}{h}=i^c Z(\vec{z})$ for some $c\in\Z_4$
    and $\vec{z}\in\F_2^n$.
    Since $g$ and $h$ fix $\ket{\psi}$, so does their
    commutator. Therefore
    \begin{align}
        Z(\vec{z})\ket{\psi}=i^{-c}\ket{\psi}.
    \end{align}
    As $\ket{\psi}\neq 0$ and $Z(\vec{z})$ has
    eigenvalues in $\{+1,-1\}$, we obtain
    $i^c\in\{+1,-1\}$.
\end{proof}

Another difference from the Pauli case is that distinct DCP groups may fix the same space.
For example, the distinct groups
\begin{align}
    G=\langle Z_1,\S_2\rangle,
    \qquad
    H=\langle \S_1,Z_2\rangle
\end{align}
both have fixed space $\C\ket{00}$: the $+1$ eigenspace
of either $Z$ or $\S$ on a single qubit is $\C\ket{0}$.
Consequently, equality of fixed spaces does not imply
equality of groups. This motivates identifying groups
through their \emph{stabilizing closure}.

\begin{definition}[Stabilizing closure]\label{def:stabilizing-closure}
    Given a DCP subgroup $G \le \mathcal{D}^n$, its \emph{stabilizing closure} $\overline{G}$ is the subgroup of $\mathcal{D}^n$ defined by
    \begin{align}
        \overline{G} = \{ g \in \mathcal{D}^n \mid \Fix(G) \subseteq \Fix(g) \}.
    \end{align}
    That is, $\overline{G}$ is the largest subgroup of $\mathcal{D}^n$ that fixes the same space as $G$.
\end{definition}

The stabilizing closure therefore captures exactly the fixed space, as expressed by the following equivalence.

\begin{proposition}
    For any groups $G, H \le \mathcal{D}^n$, we have $\Fix(G) = \Fix(H)$ if and only if $\overline{G} = \overline{H}$.
\end{proposition}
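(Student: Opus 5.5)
The plan is to first prove the key auxiliary identity $\Fix(\overline{G}) = \Fix(G)$ for every $G \le \mathcal{D}^n$, and then deduce both directions of the equivalence from it together with the definition of $\overline{G}$.

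\emph{Step 1: the identity.} For the inclusion $\Fix(G)\subseteq\Fix(\overline{G})$, note that every $g\in\overline{G}$ satisfies $\Fix(G)\subseteq\Fix(g)$ by \Cref{def:stabilizing-closure}. Intersecting over all $g\in\overline{G}$ gives the claim. For the reverse inclusion, observe that $G\subseteq\overline{G}$, because each $g\in G$ trivially satisfies $\Fix(G)\subseteq\Fix(g)$. Fixed spaces are antitone in the operator set, so $\Fix(\overline{G})\subseteq\Fix(G)$.

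\emph{Step 2: the forward direction.} Suppose $\Fix(G)=\Fix(H)$. The set $\overline{G}$ depends only on the subspace $\Fix(G)$ and the ambient group $\mathcal{D}^n$. Indeed, $\overline{G}=\{g\in\mathcal{D}^n \mid W\subseteq\Fix(g)\}$ with $W=\Fix(G)$. Replacing $W$ by $\Fix(H)=W$ yields the same set, so $\overline{G}=\overline{H}$.

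\emph{Step 3: the reverse direction.} Suppose $\overline{G}=\overline{H}$. By Step 1,
\[
\Fix(G)=\Fix(\overline{G})=\Fix(\overline{H})=\Fix(H).
\]

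There is no genuine obstacle. The only point needing care is the side remark that $\overline{G}$ is actually a subgroup, which the definition takes for granted. The plan is to check this briefly: the identity fixes everything; if $g,h$ fix every vector of $W$, then so does $gh$; and if $g$ fixes every vector of $W$, applying $g^{-1}$ to $gv=v$ gives $g^{-1}v=v$. This check is not needed for the set-level equivalence itself.
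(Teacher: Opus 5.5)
Your proposal is correct and follows essentially the same route as the paper: first establish $\Fix(\overline{G})=\Fix(G)$ from $G\subseteq\overline{G}$ and the definition of the closure, then get the forward direction directly from the definition and the reverse direction from this identity. Your extra check that $\overline{G}$ is a subgroup is a harmless addition that the paper leaves implicit.
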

\begin{proof}
    We first show that taking the stabilizing closure
    preserves the fixed space:
    \begin{align}
        \Fix(\overline{G})=\Fix(G).
    \end{align}
    Since $G\subseteq\overline{G}$, we have
    $\Fix(\overline{G})\subseteq\Fix(G)$.
    Conversely, every element of $\overline{G}$ fixes every vector in $\Fix(G)$ by definition, so $\Fix(G)\subseteq\Fix(\overline{G})$.

    If $\Fix(G)=\Fix(H)$, then their stabilizing closures
    coincide by definition.
    Conversely, if
    $\overline{G}=\overline{H}$, the identity above gives
    \begin{align}
        \Fix(G)=\Fix(\overline{G})
        =\Fix(\overline{H})=\Fix(H).
    \end{align}
\end{proof}

In our simulation procedures, any DCP group uniquely fixing the current state can serve as its representation.

\begin{definition}[DCP-stabilizer group]\label{def:dcp-stabilizer-group}
    Given a DCP state $\ket{\psi}$ and a group $G \le \mathcal{D}^n$, we say that $G$ is a \textit{DCP-stabilizer group} of $\ket{\psi}$ if $\Fix(G) = \C\ket{\psi}$, or equivalently if $\overline{G} = \mathcal{D}(\psi)$.
\end{definition}

\subsection{Canonical representation}

To manipulate DCP subgroups, compute their fixed spaces, and use them in a tableau-based simulation algorithm, we first need to develop a canonical representation: a generating set which uniquely determines a DCP subgroup.

We start by separating its purely diagonal elements from the ones that contain $X$ operators.
By \Cref{prop:dcp-group-ops}, the map $\pi:\mathcal{D}^n\to\F_2^n$ defined by
\begin{align}
    \pi\left(\omega_8^kD(\vec{s},M)X(\vec{x})\right)=\vec{x},
\end{align}
is a group homomorphism. For a subgroup
$G\leq\mathcal{D}^n$, define
\begin{align}
    \Delta(G):=\ker\pi|_G,
    \qquad
    X(G):=\pi(G).
\end{align}
Thus, $\Delta(G)$ is the subgroup of diagonal elements
of $G$, while $X(G)$ is a subspace of $\F_2^n$ representing the $X$ part of each element of $G$.
In other words, $G$ fits into the short exact sequence
\begin{equation}\label{eq:dcp-ses}
    1 \rightarrow \Delta(G) \rightarrow G \xrightarrow{\pi} X(G) \rightarrow 1,
\end{equation}
and the first isomorphism theorem gives
\begin{align}
    G/\Delta(G)\cong X(G).
\end{align}
More explicitly, if $h\in G$ has $\pi(h)=\vec{x}$, then
\begin{align}
    G\cap\pi^{-1}(\vec{x})
    =h\Delta(G)=\Delta(G)h.
\end{align}
This yields the following generating set for $G$.

\begin{proposition}[Generating sets for DCP subgroups]
    \label{prop:dcp-generators}
    Let $G\leq\mathcal{D}^n$, let
    $\vec{x}^{(1)},\ldots,\vec{x}^{(r)}$ be a basis of $X(G)$,
    and choose $h_j\in G$ with $\pi(h_j)=\vec{x}^{(j)}$.
    If $\delta_1,\ldots,\delta_m$ generate $\Delta(G)$,
    then
    \begin{align}
        G=\langle
            \delta_1,\ldots,\delta_m,h_1,\ldots,h_r
        \rangle.
    \end{align}
\end{proposition}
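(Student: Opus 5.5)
The plan is the standard extension argument for a short exact sequence. One inclusion is immediate: every $\delta_i$ and $h_j$ lies in $G$, so $\langle \delta_1,\ldots,\delta_m,h_1,\ldots,h_r\rangle\subseteq G$. Write $K$ for the subgroup on the left. It then suffices to show that an arbitrary $g\in G$ lies in $K$.

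Fix $g\in G$ and put $\vec{x}=\pi(g)\in X(G)$. Since the $\vec{x}^{(j)}$ form a basis of $X(G)$, we can write $\vec{x}=\sum_{j\in J}\vec{x}^{(j)}$ over $\F_2$ for a unique subset $J\subseteq\{1,\ldots,r\}$. Let $h=\prod_{j\in J}h_j$, taken in any fixed order. Because $\pi$ is a group homomorphism into the additive group $\F_2^n$, as stated via \Cref{prop:dcp-group-ops}, we get $\pi(h)=\sum_{j\in J}\pi(h_j)=\vec{x}=\pi(g)$. Hence $\pi(gh^{-1})=0$, so $gh^{-1}\in\ker\pi|_G=\Delta(G)$. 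Equivalently, $g\in\Delta(G)h$, which matches the coset description $G\cap\pi^{-1}(\vec{x})=\Delta(G)h$ already noted above.

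Since $\Delta(G)=\langle\delta_1,\ldots,\delta_m\rangle\le K$ and $h\in K$, it follows that $g=(gh^{-1})h\in K$. This gives $G\subseteq K$, and so $G=K$.

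There is no substantial obstacle. The only point to check is that $\pi$ really is a homomorphism, so that the ordering of the factors in $h$ and the non-commutativity of $G$ do not matter. This is exactly what \Cref{prop:dcp-group-ops} provides: the $X$-part of a product of normal forms is the sum of the $X$-parts, because conjugating $X(\vec{x})$ past diagonal Clifford factors only produces diagonal corrections.
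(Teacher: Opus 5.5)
Your proof is correct and follows the paper's argument essentially verbatim: expand $\pi(g)$ in the basis, form $h$ as the corresponding product of lifts, observe that $gh^{-1}\in\Delta(G)$ because $\pi$ is a homomorphism, and conclude $g=(gh^{-1})h$ lies in the generated subgroup. The reverse inclusion is immediate in both.
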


\begin{proof}
    We show that every $g\in G$ is generated by these
    elements. Write
    \begin{align}
        \pi(g)=\sum_{j=1}^r a_j\vec{x}^{(j)},
    \end{align}
    for $a_j\in\F_2$, and set $h=h_1^{a_1}\cdots h_r^{a_r}$.
    Since $\pi(h)=\pi(g)$, the element $gh^{-1}$ belongs to $\Delta(G)$. Thus, $g=(gh^{-1})h$ belongs to the claimed generated subgroup.
    The reverse inclusion holds because all the generators belong to $G$.
\end{proof}

We next choose canonical coordinates for the diagonal
subgroup. Set $d=1+n+\binom{n}{2}$ and define $\iota:\ker\pi\hookrightarrow\Z_8^d$ by
\begin{align}
    \iota \left( \omega_8^kD(\vec{s},M) \right) =
    \big(k,2\vec{s},(4M_{ab})_{a<b}\big).
\end{align}
We fix the coordinate order, taking the pairs $(a,b)$
in lexicographic order. By the normal form and product formula for DCP operators, $\iota$ is an injective group homomorphism into the additive group $\Z_8^d$.
Consequently, $\iota(\Delta(G))$ can be seen as a $\Z_8$-submodule of the module $\Z_8^d$.

\begin{lemma}\label{lem:diagonal-generators}
Given generators $g_1,\ldots,g_m$ of a DCP group $G\leq\mathcal D^n$, generators of $\Delta(G)$ and lifts of a basis of $X(G)$ can be computed in time polynomial in $n$ and $m$.
\end{lemma}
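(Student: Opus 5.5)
The plan is to run a Schreier-type procedure adapted to the short exact sequence \eqref{eq:dcp-ses}. We use the group operations of \Cref{prop:dcp-group-ops} in normal form, each costing $\mathrm{poly}(n)$, and maintain a $\Z_8$-module representation of $\iota(\Delta(G))$ in Howell (or Hermite-type) normal form over $\Z_8^d$. First, we compute $X(G)$. Since $\pi$ is a homomorphism, $X(G)=\Span\{\pi(g_1),\ldots,\pi(g_m)\}$. Gaussian elimination over $\F_2$ on these vectors, while multiplying the corresponding group elements alongside each row operation, yields a basis $\vec{x}^{(1)},\ldots,\vec{x}^{(r)}$ with $r\le n$ and lifts $h_j\in G$ with $\pi(h_j)=\vec{x}^{(j)}$. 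Every product involves at most $m$ factors per elimination step, so the cost is polynomial. The same elimination expresses each $\pi(g_i)$ in the basis, $\pi(g_i)=\sum_j a_{ij}\vec{x}^{(j)}$. This gives diagonal elements $\delta_i:=g_i\,(h_1^{a_{i1}}\cdots h_r^{a_{ir}})^{-1}\in\Delta(G)$.

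Next, we identify generators of $\Delta(G)$ as a normal subgroup closure, following Schreier's lemma for the transversal $\{h^{a}:=h_1^{a_1}\cdots h_r^{a_r}\}$ of $G/\Delta(G)\cong X(G)$. Rather than enumerating the exponentially many Schreier generators, we exploit the structure of the group. First, \Cref{prop:dcp-group-ops} gives that conjugation of a diagonal element by any DCP operator is again diagonal and is computable in normal form. Second, $h_j^2$ and the commutators $\gcomm{h_j}{h_k}$ are diagonal. A presentation-style argument then says that $\Delta(G)$ is the smallest subgroup containing the $\delta_i$, the squares $h_j^2$, and the commutators $\gcomm{h_j}{h_k}$, and closed under conjugation by the $h_j$. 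Indeed, any word in the $g_i$ with trivial $\pi$-image can be rewritten, by moving diagonal factors left through conjugation, into a product of conjugates of these elements times a word in the $h_j$ with trivial $X$-image. That residual word reduces to squares and commutators, because $X(G)$ is elementary abelian with the given basis. Since $\Delta(G)$ is abelian (diagonal operators commute), "subgroup" is the same as "$\Z$-span in $\Z_8^d$ under $\iota$." Because every element has order dividing $8$, it is also the same as a $\Z_8$-submodule.

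The closure step is the main obstacle. We initialize a module $N\le\Z_8^d$ spanned by the $\iota$-images of the $O(m+n^2)$ elements above. We then iterate: for each module generator $\nu$ and each $j$, we compute $\iota(h_j\,\iota^{-1}(\nu)\,h_j^{-1})$, add it to $N$, and renormalize, repeating until $N$ stabilizes. Termination within polynomially many rounds follows from a length argument. Each strict enlargement of a $\Z_8$-submodule of $\Z_8^d$ at least doubles its order, and $|N|\le 8^d$, so there are at most $3d=O(n^2)$ strict enlargements. Each round costs polynomially many normal-form conjugations plus one Howell normal form computation, which is polynomial over $\Z_{8}$. We will also check that conjugation by $h_j$ acts on $\iota(\ker\pi)$ as a group automorphism, by \Cref{prop:dcp-group-ops}. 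It therefore suffices to close the generators of $N$ rather than all of its elements; a set of module generators closed under the $h_j$-action spans an invariant module.

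Finally, we output the Howell-form generators of $N$ as generators of $\Delta(G)$, pulled back via the injective homomorphism $\iota$, together with the lifts $h_1,\ldots,h_r$. By \Cref{prop:dcp-generators}, these generate $G$. Correctness requires $N=\iota(\Delta(G))$. The inclusion $\subseteq$ holds because every element we add lies in $G$ and is diagonal. The inclusion $\supseteq$ is exactly the rewriting argument of the second paragraph, and it is the part we would write out most carefully.
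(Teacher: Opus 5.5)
Your proposal is correct and follows the paper's proof essentially step for step. The shared steps are Gaussian elimination with tracked lifts $h_j$, the diagonal correctors $\delta_i$, an initial set consisting of the $\delta_i$, the squares $h_j^2$ and the commutators $\gcomm{h_j}{h_k}$, closure under conjugation by the $h_j$ with Howell reduction, and the bound of $3d$ strict enlargements. The only difference is the reverse inclusion: instead of your rewriting of words, the paper notes that at termination the diagonal subgroup $N$ is normal in $G$ (finiteness upgrades $h_jNh_j^{-1}\subseteq N$ to equality, and the $\delta_i$ commute with $N$), so $G/N$ is generated by $r$ commuting elements of order at most two and has at most $2^r=|X(G)|$ elements, which forces the induced surjection $G/N\to X(G)$ to be a bijection and hence $N=\Delta(G)$.
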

\begin{proof}
We first construct lifts of a basis of $X(G)$, then recover $\Delta(G)$ by closing a diagonal subgroup under conjugation.
Gaussian elimination on $\pi(g_1),\ldots,\pi(g_m)$, tracking the corresponding operator products, gives elements $h_1,\ldots,h_r\in G$ whose images form a basis of $X(G)$.
For each $a$, write
\begin{align}
    \pi(g_a)=\bigoplus_{i=1}^r c_{ai}\pi(h_i),
    \qquad
    \delta_a=g_a\left(h_1^{c_{a1}}\cdots h_r^{c_{ar}}\right)^{-1}.
\end{align}
Then each $\delta_a$ is diagonal and
\begin{align}
    G=\langle\delta_1,\ldots,\delta_m,h_1,\ldots,h_r\rangle.
\end{align}

To recover $\Delta(G)$, start with
\begin{align}
    N_0=\left\langle\delta_a,\ h_i^2,\ \gcomm{h_i}{h_j}\ :\ 1\leq a\leq m,\ 1\leq i,j\leq r\right\rangle.
\end{align}
All these generators are diagonal, since $\pi$ takes values in an elementary abelian $2$-group.
Given a generating set $\mathcal S_k$ of $N_k$, define
\begin{align}
    N_{k+1}=\left\langle N_k,\ h_i d h_i^{-1}\ :\ d\in\mathcal S_k,\ 1\leq i\leq r\right\rangle,
\end{align}
using Howell normal form in the coordinates $\iota$ to reduce the generating set after each step.
Every $N_k$ is contained in $\Delta(G)$.
Stop when $N_{k+1}=N_k$ and denote the resulting subgroup by $N$.

We now show that $N=\Delta(G)$.
At termination, $h_iNh_i^{-1}\subseteq N$ for every $i$; since $N$ is finite, these inclusions are equalities.
Moreover, the diagonal generators $\delta_a$ commute with $N$, so $N$ is normal in $G$.
The quotient $G/N$ is generated by the commuting elements $h_iN$, each of order at most two, and therefore has at most $2^r$ elements.
Since $N\subseteq\Delta(G)$, the map $\pi$ induces a surjection
\begin{align}
    G/N\twoheadrightarrow X(G).
\end{align}
The target has $2^r$ elements, so this map is an isomorphism.
Its kernel is $\Delta(G)/N$, proving that $N=\Delta(G)$.

Finally, we verify the polynomial runtime.
Set $d=1+n+\binom{n}{2}$.
Since $\iota(\Delta(G))\leq\mathbb Z_8^d$, every strict enlargement of $N_k$ at least doubles its order, and there can be at most $3d$ such enlargements.
Howell normal form supplies at most $d$ generators after each reduction, so each round involves polynomially many DCP operations and a polynomial-time Howell reduction.
The initial Gaussian elimination and construction of $N_0$ are also polynomial in $n$ and $m$, completing the proof.
\end{proof}

To specify $G$, we must record not only its diagonal subgroup $\Delta(G)$ and its space of $X$-parts $X(G)$, but also which diagonal factors accompany those $X$-parts.
For example, $\langle X\rangle$ and $\langle -X\rangle$ have the same diagonal subgroup $\{I\}$ and the same image $\F_2$ under $\pi$, but are distinct groups.

Choose a basis $\vec{x}^{(1)},\ldots,\vec{x}^{(r)}$ of $X(G)$.
For each basis vector, choose an operator
\begin{align}
    h_j=D_jX(\vec{x}^{(j)})\in G,
\end{align}
where $D_j$ is diagonal.
Such an operator is called a \emph{lift} of $\vec{x}^{(j)}$, because $\pi(h_j)=\vec{x}^{(j)}$.
By \Cref{prop:dcp-generators}, these operators, together with generators of $\Delta(G)$, determine $G$.

We now make these choices canonical.
First, choose the basis of $X(G)$ in row-reduced echelon form and represent $\iota(\Delta(G))$ in Howell normal form.
By \Cref{lem:diagonal-generators}, the required generators and lifts can be obtained from arbitrary generators of $G$, so both normal forms can be computed in polynomial time.
We recall the definition and relevant properties of Howell normal form in the \hyperref[app:howell]{Howell normal form subsection} of \Cref{appendix:dcp-algebra}.

It remains to choose each lift canonically.
If $h_j'=D_j'X(\vec{x}^{(j)})$ is another lift of the same basis vector, then
\begin{align}
    h_j'h_j^{-1}=D_j'D_j^{-1}\in\Delta(G).
\end{align}
Conversely, multiplying $h_j$ on the left by any $\delta\in\Delta(G)$ gives another lift of $\vec{x}^{(j)}$.
Thus the possible diagonal factors are exactly
\begin{align}
    \{\,\delta D_j:\delta\in\Delta(G)\,\}.
\end{align}
Since $\iota$ turns multiplication of diagonal operators into addition, their coordinate vectors form the additive coset
\begin{align}
    \iota(D_j)+\iota(\Delta(G)).
\end{align}
By \Cref{lem:howell-remainders}, reduction using the Howell normal form selects a unique representative of this coset.
It therefore selects one lift independently of the initial choice of $h_j$.

\begin{definition}[Canonical representation]
    \label{def:canon-form}
    Let $G\leq\mathcal{D}^n$.
    With the coordinate order fixed as above, the canonical representation of $G$ consists of:
    \begin{enumerate}
        \item the ordered basis $\vec{x}^{(1)},\ldots,\vec{x}^{(r)}$ of $X(G)$ in row-reduced echelon form;
        \item the Howell normal form $H$ of $\iota(\Delta(G))$;
        \item for each $j$, the lift $h_j=D_jX(\vec{x}^{(j)})\in G$ whose diagonal coordinate vector satisfies
        \begin{align}
            \iota(D_j)=\operatorname{rem}_H\big(\iota(D_j)\big).
        \end{align}
    \end{enumerate}
\end{definition}

\begin{proposition}[Uniqueness of the canonical representation]
    Two DCP subgroups $G,G'\leq\mathcal{D}^n$ have the same canonical representation if and only if $G=G'$.
\end{proposition}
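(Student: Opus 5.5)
The easy direction comes first. If $G=G'$, then every ingredient of the canonical representation is a function of the group alone. The subspace $X(G)=\pi(G)$ is intrinsic, so its row-reduced echelon basis is unique. The subgroup $\Delta(G)=\ker\pi|_G$ is intrinsic, so $\iota(\Delta(G))$ is a well-defined submodule of $\Z_8^d$, and its Howell normal form $H$ is unique by the uniqueness property of Howell normal form recalled in the appendix. Finally, for each basis vector $\vec{x}^{(j)}$, the diagonal factors of all lifts in $G$ form exactly the coset $\iota(D_j)+\iota(\Delta(G))$, as computed just before \Cref{def:canon-form}. By \Cref{lem:howell-remainders}, this coset has a unique representative fixed by $\operatorname{rem}_H$. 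The lift in item~3 is therefore uniquely determined by $G$, and the canonical representations coincide.

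For the converse, I would argue that the canonical data reconstruct the group. Suppose $G$ and $G'$ have the same canonical representation $(\vec{x}^{(j)},H,h_j)$. The Howell form $H$ is a generating set of the submodule it represents, so $\iota(\Delta(G))=\iota(\Delta(G'))$. Since $\iota$ is an injective homomorphism, this gives $\Delta(G)=\Delta(G')$; to pass back through $\iota$, I use the fact that $\iota$ restricted to $\ker\pi$ is injective, and both groups lie in $\ker\pi$. The same basis $\vec{x}^{(j)}$ spans both $X(G)$ and $X(G')$, and the operators $h_j$ lie in both $G$ and $G'$ and lift this basis. Applying \Cref{prop:dcp-generators} to $G$ and to $G'$, with generators $\delta_1,\dots,\delta_m$ of the common diagonal subgroup (for instance the preimages under $\iota$ of the rows of $H$) together with the common lifts $h_1,\dots,h_r$, gives
\begin{align}
    G=\langle\delta_1,\ldots,\delta_m,h_1,\ldots,h_r\rangle=G'.
\end{align}

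I do not expect a serious obstacle. The only point needing care is the lift in item~3: I must check that it depends only on $G$ and not on the initial choice of $h_j$. This is exactly the coset argument, namely that $h_j'h_j^{-1}\in\Delta(G)$ and that left multiplication by $\Delta(G)$ exhausts all lifts, combined with the canonical-remainder property of \Cref{lem:howell-remainders}. I would state that step explicitly and treat the rest as bookkeeping.
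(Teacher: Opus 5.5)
Your proof is correct and follows the paper's argument essentially step for step. In one direction, the row-reduced echelon basis, the Howell normal form, and the canonical remainder of the lift coset are all determined by the group alone. In the other, $G$ is reconstructed from the canonical data via \Cref{prop:dcp-generators}; your explicit appeal to the injectivity of $\iota$ on $\ker\pi$ to recover $\Delta(G)$ from $H$ just spells out a step the paper leaves implicit.
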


\begin{proof}
    We first show that the representation depends only on the group.
    The row-reduced echelon basis of $X(G)$ and the Howell normal form of $\iota(\Delta(G))$ are uniquely determined.
    For each basis vector, the diagonal factors of its lifts form one additive coset modulo $\iota(\Delta(G))$, whose remainder is unique by \Cref{lem:howell-remainders}.
    Thus equal groups have equal canonical representations.

    Conversely, the representation determines generators of $\Delta(G)$ and lifts of a basis of $X(G)$.
    These generate $G$ by \Cref{prop:dcp-generators}, so equal representations determine equal groups.
\end{proof}

\subsection{Structure of DCP spaces}

For the rest of the section, we'll analyze the structure of the space fixed by DCP groups, and work towards describing an explicit basis of it in \Cref{thm:atomic-basis} and a characterization of its dimension in \cref{cor:dimension-dcp-space}. Throughout, let $G \le \mathcal{D}^n$ be a DCP subgroup, let $r = \dim X(G)$, and let
\begin{align}
    \Sigma_{\Delta(G)} := \{ \vec{v} \in \F_2^n \mid \delta\ket{\vec{v}} = \ket{\vec{v}} \ \ \forall \delta \in \Delta(G) \}
\end{align}
denote the set of computational basis states fixed by every element of $\Delta(G)$. Equivalently, it is an algebraic set defined as the common zero set of the phase polynomials of the diagonal operators in $\Delta(G)$.
Indeed, writing $\delta\ket{\vec{v}}=\omega_8^{p_\delta(\vec{v})}\ket{\vec{v}}$, we have
\begin{align}
    \Sigma_{\Delta(G)}
    =\{\vec{v}\in\F_2^n:p_\delta(\vec{v})=0 \ \ \forall \delta\in\Delta(G)\}
\end{align}
where the equations are taken modulo $8$.
It suffices to impose these equations for a generating set of $\Delta(G)$, since multiplication of diagonal operators adds their phase polynomials modulo $8$.
Thus $\Sigma_{\Delta(G)}$ is an algebraic set of binary solutions to quadratic congruences.

\begin{example}\label{ex:dcp-atomic-basis}
    As a running example, consider $G=\langle A,B,C\rangle$ with
    \begin{align}
        A=\omega_8^{-1}\S_1X_1,
        \quad B=\CZ_{12},
        \quad C=\CZ_{34}.
    \end{align}
    The generators square to the identity, $C$ commutes with $A$ and $B$, and $\gcomm{A}{B}=Z_2$, which commutes with all three generators.
    Hence every element has the form $A^aZ_2^bB^cC^d$ with $a,b,c,d\in\F_2$, and
    \begin{align}
        X(G)=\langle e_1\rangle,
        \qquad
        \Delta(G)=\langle Z_2,\CZ_{12},\CZ_{34}\rangle.
    \end{align}
    The phase polynomials of these diagonal generators are $4v_2$, $4v_1v_2$, and $4v_3v_4$, respectively.
    Their common zeros therefore satisfy $v_2=0$ and $v_3v_4=0$, with the constraint $v_1v_2=0$ redundant.
    Consequently,
    \begin{align}
        \Sigma_{\Delta(G)}
        &=\{(t,0,a,b):t,a,b\in\F_2,\ ab=0\}\\
        &=\{0000,1000,0001,1001,0010,1010\}.
    \end{align}
    Note that this set can be written as a disjoint union of three cosets of $X(G)$:
    \begin{align}
        \Sigma_{\Delta(G)}
        =X(G)\sqcup\big(e_4+X(G)\big)\sqcup\big(e_3+X(G)\big).
    \end{align}
    The goal of this section is to show that this is generally the case, and that the space $\Fix(G)$ has a basis such that each basis vector is supported on one of the cosets.
\end{example}

\begin{lemma}\label{lem:supp-partition}
    Let $\ket{\psi} \in \Fix(G)$. Then $\supp(\psi)$ is a disjoint union of cosets of $X(G)$, and the amplitudes have equal magnitude within each coset. Equivalently, $\ket{\psi}$ can be written as
    \begin{align}
        \ket{\psi} = \sum_{j=1}^k a_j \sum_{v \in X(G)} b_{j,v} \ket{x_j + v}
    \end{align}
    for some $k \geq 1$, $a_j \in \C \setminus \{0\}$, representatives $x_j \in \F_2^n$ of distinct cosets, and $b_{j,v} \in \C$ with $|b_{j,v}|=1$.
\end{lemma}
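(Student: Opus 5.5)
The plan is to use only the action of the lifts on computational basis states. Every element $g \in G$ has the normal form $g = \omega_8^k D(\vec{s},M) X(\vec{x})$ by \Cref{prop:dcp-normal-form}. Applied to a basis state, it gives
\begin{align}
    g\ket{v} = \omega_8^{k + p_{\vec{s},M}(v \oplus \vec{x})}\ket{v \oplus \vec{x}},
\end{align}
where $p_{\vec{s},M}$ is the Clifford phase polynomial of $D(\vec{s},M)$. So $g$ is a monomial matrix: a permutation $v \mapsto v \oplus \pi(g)$ of basis labels, followed by unit-modulus phases.

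Write $\ket{\psi} = \sum_v \psi_v \ket{v}$ and impose $g\ket{\psi} = \ket{\psi}$. Comparing coefficients of $\ket{w}$ gives
\begin{align}
    \psi_w = \omega_8^{k + p_{\vec{s},M}(w)}\,\psi_{w \oplus \vec{x}}
\end{align}
for every $w \in \F_2^n$. Since $\omega_8^{(\cdot)}$ has modulus one, $|\psi_w| = |\psi_{w \oplus \pi(g)}|$. It follows that $\psi_w \neq 0$ exactly when $\psi_{w \oplus \pi(g)} \neq 0$.

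As $g$ ranges over $G$, $\pi(g)$ ranges over all of $X(G)$ by definition of $X(G) = \pi(G)$. Hence $\supp(\psi)$ is invariant under translation by every element of the subspace $X(G)$, so it is a union of cosets of $X(G)$. Distinct cosets of a subspace are disjoint, so this union is automatically disjoint. Likewise, $|\psi_w|$ is constant on each coset $w + X(G)$. Since $\ket{\psi}$ is a state and hence nonzero, its support is nonempty, so $k \ge 1$.

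For the explicit form, let $x_1, \ldots, x_k$ be representatives of the distinct cosets contained in the support. For each $j$, take $a_j := \psi_{x_j} \neq 0$ and $b_{j,v} := \psi_{x_j + v}/a_j$. The previous paragraph gives $|b_{j,v}| = 1$. Summing over cosets then reproduces $\ket{\psi}$, since every support element lies in exactly one coset $x_j + X(G)$.

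There is no real obstacle here. The only point needing care is the coefficient comparison: the diagonal factor acts after the $X$ part, so the phase is evaluated at the shifted label. Only the unit modulus of that phase matters for this lemma, and in particular the lemma needs no facts about $\Delta(G)$.
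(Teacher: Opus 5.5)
Your proof is correct and follows essentially the same route as the paper: for each $v\in X(G)$ you take an element of $G$ with $X$-part $v$, compare coefficients in $g\ket{\psi}=\ket{\psi}$ to get $|\psi_w|=|\psi_{w\oplus v}|$, and then pick coset representatives and normalize to obtain $a_j$ and $b_{j,v}$. The only cosmetic difference is that you read the unit modulus off the explicit normal-form phase $\omega_8^{k+p_{\vec{s},M}(w)}$, whereas the paper simply invokes unitarity of $g$.
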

\begin{proof}
    Write $\ket{\psi} = \sum \alpha_x \ket{x}$. We need to show that if $\alpha_x \neq 0$, then $\alpha_{x+v} \neq 0$ for all $v \in X(G)$. Let $x \in \supp(\psi)$, $v \in X(G)$ and a corresponding stabilizer $g_v = \delta_v X(v) \in G$ where $\delta_v$ is a diagonal Clifford operator. Writing $g_v \ket{x} = \lambda_v(x) \ket{x+v}$ for some $\lambda_v(x) \in \C$, we get
    \begin{align}
        g_v \ket{\psi} = \sum_{x} \alpha_x g_v \ket{x} = \sum_{x} \alpha_x \lambda_v(x) \ket{x+v}.
    \end{align}
    Since $g_v \ket{\psi} = \ket{\psi}$, we can compare the coefficients, getting
    \begin{align}
        \alpha_x \lambda_v(x) = \alpha_{x+v}.
    \end{align}
    Since $g_v$ is unitary, $|\lambda_v(x)|=1$, so taking absolute values gives
    \[
        |\alpha_{x+v}|=|\alpha_x|.
    \]
    The coefficient relation also gives
    \begin{align}
        g_v \ket{x} = (\alpha_{x+v}/\alpha_x) \ket{x+v}.
    \end{align}
    Thus $\alpha_{x+v} \neq 0$ and $x+v \in \supp(\psi)$. As $v$ was arbitrary, the whole coset $x + X(G)$ is contained in the support of $\ket{\psi}$, with equal amplitude magnitudes throughout. Choosing one representative $x_j$ per support coset and setting $a_j=\alpha_{x_j}$ and $b_{j,v}=\alpha_{x_j+v}/\alpha_{x_j}$ gives the claimed form with $|b_{j,v}|=1$.
\end{proof}

\begin{lemma}\label{lem:supp-in-sigma}
    Let $\ket{\psi} \in \Fix(G)$. Then $\supp(\psi) \subseteq \Sigma_{\Delta(G)}$.
\end{lemma}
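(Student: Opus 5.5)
The plan is to read the claim off coefficientwise from the fixed-point equations for the diagonal subgroup alone. Write $\ket{\psi}=\sum_x\alpha_x\ket{x}$ and take an arbitrary $\delta\in\Delta(G)$.

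First, since $\Delta(G)=\ker\pi|_G\subseteq G$, every element of $\Delta(G)$ fixes $\ket{\psi}$. Each such $\delta$ is diagonal, of the form $\omega_8^kD(\vec{s},M)$ by \Cref{prop:dcp-normal-form}. Hence it acts as $\delta\ket{x}=\omega_8^{p_\delta(x)}\ket{x}$, where $p_\delta$ is the associated phase polynomial (a Clifford phase polynomial in the sense of \Cref{def:clifford-phase-polynomial}).

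Next, applying $\delta$ to $\ket{\psi}$ gives $\sum_x\alpha_x\omega_8^{p_\delta(x)}\ket{x}$. Comparing coefficients with $\ket{\psi}$ in the computational basis yields $\alpha_x\bigl(\omega_8^{p_\delta(x)}-1\bigr)=0$ for every $x$. So whenever $x\in\supp(\psi)$, i.e.\ $\alpha_x\neq0$, we must have $\omega_8^{p_\delta(x)}=1$. This is equivalent to $p_\delta(x)=0\pmod 8$, that is, $\delta\ket{x}=\ket{x}$.

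Since $\delta\in\Delta(G)$ was arbitrary, every $x\in\supp(\psi)$ is fixed by all of $\Delta(G)$. By the definition of $\Sigma_{\Delta(G)}$, this means $x\in\Sigma_{\Delta(G)}$, which gives $\supp(\psi)\subseteq\Sigma_{\Delta(G)}$.

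No step here is a real obstacle. The only point needing care is that "diagonal" here means diagonal in the computational basis, which holds because diagonal Clifford operators and global phases are diagonal. That is what lets the fixed-point equation decouple coordinatewise. Unlike \Cref{lem:supp-partition}, the non-diagonal lifts are not needed at all.
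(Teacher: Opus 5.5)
Your proof is correct and is essentially the paper's own argument: both compare computational-basis coefficients in $\delta\ket{\psi}=\ket{\psi}$ for an arbitrary diagonal $\delta\in\Delta(G)$ and conclude that $\delta$ acts trivially on every $\ket{x}$ with $\alpha_x\neq 0$. The only difference is that you write the eigenvalue explicitly as $\omega_8^{p_\delta(x)}$, whereas the paper leaves it as a generic nonzero scalar $\lambda_x$; this changes nothing.
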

\begin{proof}
    Let $\ket{\psi} = \sum \alpha_x \ket{x}$, and $\delta \in \Delta(G)$. We write $\delta \ket{x} = \lambda_x \ket{x}$ for some $\lambda_x \in \C \setminus \{0\}$. Then $\delta \ket{\psi} = \ket{\psi}$ implies
    \begin{align}
        \sum_x \alpha_x \lambda_x \ket{x} = \sum_x \alpha_x \ket{x},
    \end{align}
    which implies that $\lambda_x = 1$ for all $x$ with $\alpha_x \neq 0$. Therefore, $\supp(\psi) \subseteq \Sigma_{\Delta(G)}$.
\end{proof}

The smallest possible nonempty support allowed by these constraints is a single translation coset.

\begin{definition}[Atomic state]\label{def:atomic}
    A state $\ket{\psi} \in \Fix(G)$ is said to be \emph{atomic} for $G$ if $\supp(\psi)$ is a single coset of $X(G)$, that is, if there exists $x \in \F_2^n$ such that
    \begin{align}
        \ket{\psi} = \sum_{v \in X(G)} \alpha_v \ket{x+v}
    \end{align}
    for some $\alpha_v \in \C \setminus \{0\}$.
\end{definition}

Atomic states will provide a basis of $\Fix(G)$. Let us first establish uniqueness on each coset.

\begin{lemma}\label{lem:atomic-overlap}
    If $\ket{\psi_1}, \ket{\psi_2} \in \Fix(G)$ are atomic and $\supp(\psi_1) \cap \supp(\psi_2) \neq \emptyset$, then $\ket{\psi_2} = \lambda \ket{\psi_1}$ for some scalar $\lambda$ with $|\lambda|=1$.
\end{lemma}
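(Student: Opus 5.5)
Since both states are atomic and their supports intersect, the supports are the same coset of $X(G)$: each support is a single coset, and two cosets that meet are equal. Write this common support as $x + X(G)$ and expand
\begin{align}
    \ket{\psi_i} = \sum_{v \in X(G)} \alpha^{(i)}_v \ket{x+v}, \qquad i=1,2,
\end{align}
with all $\alpha^{(i)}_v \neq 0$. The plan is to show that the stabilizer condition fixes every amplitude on the coset from a single one. Then the ratio $\alpha^{(2)}_v/\alpha^{(1)}_v$ is a constant $\lambda$ independent of $v$.

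The key step reuses the coefficient relation from the proof of \Cref{lem:supp-partition}. For each $v \in X(G)$, pick a lift $g_v = \delta_v X(v) \in G$ and write $g_v\ket{y} = \lambda_v(y)\ket{y+v}$ with $|\lambda_v(y)|=1$. Comparing coefficients in $g_v\ket{\psi_i}=\ket{\psi_i}$ at the basis state $\ket{x+v}$ gives
\begin{align}
    \alpha^{(i)}_v = \lambda_v(x)\, \alpha^{(i)}_0 ,
\end{align}
where $\alpha^{(i)}_0$ is the amplitude at $x$. The factor $\lambda_v(x)$ depends only on $g_v$ and $x$, not on $i$. So with $\lambda := \alpha^{(2)}_0/\alpha^{(1)}_0$ we get $\alpha^{(2)}_v = \lambda \alpha^{(1)}_v$ for every $v$, hence $\ket{\psi_2} = \lambda\ket{\psi_1}$.

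The only delicate point is the choice of lift $g_v$. Two lifts differ by an element of $\Delta(G)$. By \Cref{lem:supp-in-sigma}, such an element acts trivially on the support, since the support lies in $\Sigma_{\Delta(G)}$. Because the relation holds for any fixed lift and we use the same lift for both states, consistency does not actually need to be checked.

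The remaining issue is that $|\lambda|=1$ is only meaningful under a normalization. I would read ``state'' as a unit vector. Then atomicity together with \Cref{lem:supp-partition} makes all $|\alpha^{(i)}_v|$ equal to $|X(G)|^{-1/2}$, so $|\lambda| = |\alpha^{(2)}_0|/|\alpha^{(1)}_0| = 1$. I expect this normalization convention to be the main (minor) obstacle, since without it the statement only holds with $\lambda \neq 0$.
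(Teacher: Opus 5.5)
Your proposal is correct and follows essentially the same argument as the paper. Both supports are the same coset, the coefficient relation from the proof of \Cref{lem:supp-partition} gives $\alpha^{(i)}_v=\lambda_v(x)\,\alpha^{(i)}_0$ with a factor independent of $i$, so the amplitude ratio is constant, and unit normalization yields $|\lambda|=1$.
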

\begin{proof}
    Two cosets of $X(G)$ that intersect must coincide, so there is some $x_0 \in \F_2^n$ with $\supp(\psi_1) = \supp(\psi_2) = x_0 + X(G)$. Write $\ket{\psi_1} = \sum_{x} \alpha^{(1)}_{x} \ket{x}$ and $\ket{\psi_2} = \sum_{x} \alpha^{(2)}_{x} \ket{x}$, and let $\lambda = \alpha^{(2)}_{x_0} / \alpha^{(1)}_{x_0}$. For any $y \in x_0 + X(G)$, pick a stabilizer $g \in G$ with $\pi(g) = y + x_0$. As in the proof of \Cref{lem:supp-partition}, we have $g\ket{x_0} = (\alpha^{(i)}_{y}/\alpha^{(i)}_{x_0})\ket{y}$ for both $i = 1, 2$. Since the left-hand side does not depend on $i$, it follows that $\alpha^{(2)}_{y} = \lambda\, \alpha^{(1)}_{y}$ for all $y \in x_0 + X(G)$, that is, $\ket{\psi_2} = \lambda \ket{\psi_1}$.
    Since both states have unit norm, $|\lambda|=1$.
\end{proof}

\begin{lemma}\label{lem:atomic-existence}
    For every $x\in\Sigma_{\Delta(G)}$, there exists an atomic state $\ket{\psi_x}\in\Fix(G)$ with support $x+X(G)$.
    This state is unique up to global phase among atomic states in $\Fix(G)$ whose support contains $x$.
\end{lemma}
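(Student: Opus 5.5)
Uniqueness is immediate from \Cref{lem:atomic-overlap}: any two atomic states in $\Fix(G)$ whose supports contain $x$ have intersecting supports, so they agree up to a phase of modulus one. The work is therefore in existence. I would construct the state by group averaging. Since $G$ is a finite group (it sits inside $\mathcal{D}^n$, which is finite modulo nothing, as all phases are powers of $\omega_8$), the operator
\begin{align}
    P_G=\frac{1}{|G|}\sum_{g\in G} g
\end{align}
is the orthogonal projector onto $\Fix(G)$. The candidate is $\ket{\psi_x}\propto P_G\ket{x}$. Because each $g=\delta X(\pi(g))$ maps $\ket{x}$ to a multiple of $\ket{x+\pi(g)}$, the vector $P_G\ket{x}$ is supported inside $x+X(G)$ automatically. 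It lies in $\Fix(G)$ by construction. So it remains to show $P_G\ket{x}\neq 0$; then \Cref{lem:supp-partition} forces its support to be a full coset of $X(G)$, which must be $x+X(G)$, making it atomic.

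To show nonvanishing, I would compute the coefficient of $\ket{x}$ in $P_G\ket{x}$, i.e.\ $\bra{x}P_G\ket{x}=\frac{1}{|G|}\sum_{g\in G}\bra{x}g\ket{x}$. Only elements with $\pi(g)=0$, that is $g\in\Delta(G)$, contribute, and for those $\bra{x}\delta\ket{x}=\omega_8^{p_\delta(x)}=1$ precisely because $x\in\Sigma_{\Delta(G)}$. Hence $\bra{x}P_G\ket{x}=|\Delta(G)|/|G|=2^{-r}>0$, using the short exact sequence \eqref{eq:dcp-ses} to get $|G|=2^r|\Delta(G)|$. This gives $P_G\ket{x}\ne 0$, and normalizing yields $\ket{\psi_x}$.

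The only point needing care, and where I expect the main subtlety, is justifying that $P_G$ is a projector onto $\Fix(G)$: this needs $G$ to be a genuine finite group of unitaries (true, since $\mathcal{D}^n$ is finite) so that $hP_G=P_G$ for all $h\in G$ and $P_G$ is Hermitian and idempotent; in particular the phase $\omega_8^k$ convention means no quotienting by global phases is involved, so $-I\in G$ would correctly give $P_G=0$, but that case cannot occur here since then $\Sigma_{\Delta(G)}$ is empty. With that checked, the argument is complete.
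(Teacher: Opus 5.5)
Your proof is correct. It reaches the same vector as the paper by a different verification route; uniqueness you handle exactly as the paper does, via \Cref{lem:atomic-overlap}.

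\textbf{The paper's route.} The paper fixes one lift $g_v$ for each $v\in X(G)$ and writes down $\ket{\psi_x}=2^{-r/2}\sum_{v}g_v\ket{x}$ directly. Normalization and full support $x+X(G)$ are then immediate, since each $g_v\ket{x}$ is a unit-modulus multiple of a distinct basis vector. The work goes into checking $h\ket{\psi_x}=\ket{\psi_x}$, using that $g_{w+v}^{-1}hg_v\in\Delta(G)$ fixes $\ket{x}$.

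\textbf{Your route.} You use the group-averaging projector $P_G$, so membership in $\Fix(G)$ comes for free. The work shifts to two other places:
\begin{itemize}
\item nonvanishing, via your computation $\bra{x}P_G\ket{x}=|\Delta(G)|/|G|=2^{-r}$, which relies on finiteness of $\mathcal{D}^n$ (from the normal form) and on $|G|=2^r|\Delta(G)|$;
\item full support, via \Cref{lem:supp-partition}, which is proved independently, so there is no circularity.
\end{itemize}

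\textbf{The two vectors coincide.} Splitting $G=\bigsqcup_v g_v\Delta(G)$ and using $\delta\ket{x}=\ket{x}$ for $\delta\in\Delta(G)$ gives $P_G\ket{x}=2^{-r}\sum_v g_v\ket{x}=2^{-r/2}\ket{\psi_x}$. It is worth recording this identification. The later arguments, \Cref{thm:atomic-basis} and especially \Cref{thm:3pp-dcp}, which reads off the phase polynomial lift by lift, use the explicit transversal formula \eqref{eq:atomic-state}, not merely existence of some atomic state.

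\textbf{Comparison.} Your approach is more conceptual: it exhibits the atomic state as the normalized projection of $\ket{x}$, and its squared norm $2^{-r}$ explains the normalization constant. The paper's approach is more explicit and avoids both the projector identity and the support lemma.
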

\begin{proof}
    Let $x \in \Sigma_{\Delta(G)}$. Let us first construct a state with support $x+X(G)$, then show that it is fixed by $G$ and establish uniqueness.

    For each $v\in X(G)$, choose $g_v\in G$ with $\pi(g_v)=v$, taking $g_0=I$, and define
    \begin{align} \label{eq:atomic-state}
        \ket{\psi_x}
        :=\frac{1}{\sqrt{|X(G)|}}
        \sum_{v\in X(G)}g_v\ket{x}.
    \end{align}
    Each $g_v\ket{x}$ is a multiple of $\ket{x+v}$ with modulus $1$.
    These computational-basis vectors are distinct as $v$ varies, so $\ket{\psi_x}$ is normalized and has support exactly $x+X(G)$.

    To show that $\ket{\psi_x}\in\Fix(G)$, let $h\in G$ and write $w=\pi(h)$.
    For each $v\in X(G)$, the operators $hg_v$ and $g_{w+v}$ have the same $X$-part, since $\pi(hg_v) = \pi(h) + \pi(g_v) = w+v$. So
    \begin{align}
        \delta_{h,v}:=g_{w+v}^{-1}hg_v\in\Delta(G).
    \end{align}
    Since $x\in\Sigma_{\Delta(G)}$, we have $\delta_{h,v}\ket{x}=\ket{x}$, and hence
    \begin{align}
        hg_v\ket{x}
        =g_{w+v}\delta_{h,v}\ket{x}
        =g_{w+v}\ket{x}.
    \end{align}
    Summing over $v$ and using that $v\mapsto w+v$ permutes $X(G)$ gives
    \begin{align}
        h\ket{\psi_x}
        &=\frac{1}{\sqrt{|X(G)|}}
          \sum_{v\in X(G)}g_{w+v}\ket{x}
        =\ket{\psi_x}.
    \end{align}
    Thus $\ket{\psi_x}\in\Fix(G)$, and its support is a single coset of $X(G)$, so it is atomic.

    Finally, any other atomic state in $\Fix(G)$ whose support contains $x$ has support intersecting that of $\ket{\psi_x}$.
    By \Cref{lem:atomic-overlap}, the two states differ only by a global phase, proving uniqueness.
\end{proof}

\begin{corollary}\label{cor:sigma-cosets}
    The set $\Sigma_{\Delta(G)}$ is a disjoint union of cosets of $X(G)$.
\end{corollary}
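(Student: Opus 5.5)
The plan is to show that $\Sigma_{\Delta(G)}$ is closed under translation by $X(G)$; since the cosets of $X(G)$ partition $\F_2^n$, closure under translation immediately makes $\Sigma_{\Delta(G)}$ a disjoint union of such cosets. So it suffices to prove that for every $x\in\Sigma_{\Delta(G)}$ and every $v\in X(G)$, the point $x+v$ again lies in $\Sigma_{\Delta(G)}$.

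For this we invoke \Cref{lem:atomic-existence}: given $x\in\Sigma_{\Delta(G)}$, there is an atomic state $\ket{\psi_x}\in\Fix(G)$ whose support is exactly $x+X(G)$. Applying \Cref{lem:supp-in-sigma} to $\ket{\psi_x}$ gives $\supp(\psi_x)\subseteq\Sigma_{\Delta(G)}$, that is, $x+X(G)\subseteq\Sigma_{\Delta(G)}$. This is the required closure, so $\Sigma_{\Delta(G)}$ is the union of the cosets $x+X(G)$ over $x\in\Sigma_{\Delta(G)}$. Distinct cosets of a subgroup are disjoint, so this union is disjoint; choosing one representative per coset gives the decomposition.

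I expect no real obstacle, since the substantive work was already done in \Cref{lem:atomic-existence}. The one point needing care is that the support of $\ket{\psi_x}$ is the entire coset $x+X(G)$ and not merely contained in it. This holds because each $g_v\ket{x}$ is a unit-modulus multiple of $\ket{x+v}$ and these basis vectors are distinct for distinct $v$, so no cancellation can occur in the sum defining $\ket{\psi_x}$.

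One could also argue directly, without atomic states. For $\delta\in\Delta(G)$ and a lift $g$ of $v$, the conjugate $g^{-1}\delta g$ is again in $\Delta(G)$, and it acts on $\ket{x}$ by the phase that $\delta$ gives $\ket{x+v}$. So $\delta$ fixes $\ket{x+v}$ whenever every element of $\Delta(G)$ fixes $\ket{x}$. I would mention this only as a sanity check, since the lemma-based route is shorter given the results already established.
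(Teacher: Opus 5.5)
Your proof is correct and matches the paper's: both take the atomic state $\ket{\psi_x}$ from \Cref{lem:atomic-existence} and note that its support $x+X(G)$ lies in $\Sigma_{\Delta(G)}$, since diagonal elements of $G$ fixing $\ket{\psi_x}$ must fix every basis vector in its support (you cite \Cref{lem:supp-in-sigma} for this, while the paper repeats that one-line argument inline). Your alternative conjugation argument, which uses that $\Delta(G)=\ker\pi|_G$ is normal in $G$, is also valid and would avoid atomic states altogether.
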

\begin{proof}
    We show that every coset of $X(G)$ intersecting $\Sigma_{\Delta(G)}$ is contained in it.
    Let $x\in\Sigma_{\Delta(G)}$.
    By \Cref{lem:atomic-existence}, there exists a state $\ket{\psi_x}\in\Fix(G)$ with support $x+X(G)$.
    Since every element of $\Delta(G)$ is diagonal and fixes $\ket{\psi_x}$, it fixes each computational-basis vector in its support.
    Therefore
    \begin{align}
        x+X(G)=\supp(\psi_x)\subseteq\Sigma_{\Delta(G)}.
    \end{align}
    Thus $\Sigma_{\Delta(G)}$ is a union of cosets of $X(G)$, and distinct cosets are disjoint.
\end{proof}

\begin{theorem}[Atomic basis]\label{thm:atomic-basis}
    Let $C_1, \dots, C_m$ be the distinct cosets of $X(G)$ contained in $\Sigma_{\Delta(G)}$, and for each $j$ pick a point $\vec{x}^{(j)} \in C_j$. Then the atomic states $\ket{\psi_{\vec{x}^{(1)}}}, \dots, \ket{\psi_{\vec{x}^{(m)}}}$ defined in \cref{eq:atomic-state} form a basis of $\Fix(G)$.
\end{theorem}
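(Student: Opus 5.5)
We will prove the two halves of the basis claim separately: first that the atomic states are linearly independent, and then that they span $\Fix(G)$.

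\textbf{Linear independence.} By \Cref{lem:atomic-existence}, each $\ket{\psi_{\vec{x}^{(j)}}}$ lies in $\Fix(G)$ and has support exactly $C_j$. The cosets $C_1,\ldots,C_m$ are pairwise disjoint, so these states have pairwise disjoint, nonempty computational-basis supports. They are therefore mutually orthogonal and nonzero, hence linearly independent.

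\textbf{Spanning.} Let $\ket{\psi}\in\Fix(G)$ be arbitrary. By \Cref{lem:supp-in-sigma}, $\supp(\psi)\subseteq\Sigma_{\Delta(G)}$. By \Cref{cor:sigma-cosets}, $\Sigma_{\Delta(G)}=C_1\sqcup\cdots\sqcup C_m$. Let $P_j$ denote the orthogonal projector onto $\Span\{\ket{y}:y\in C_j\}$, so that $\ket{\psi}=\sum_j P_j\ket{\psi}$.

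The key step is to show that each $P_j\ket{\psi}$ again lies in $\Fix(G)$. Every $g\in G$ is monomial: it maps $\ket{y}$ to a unimodular multiple of $\ket{y+\pi(g)}$. Since $\pi(g)\in X(G)$, each coset $C_j$ is invariant under translation by $\pi(g)$. Hence $g$ preserves the subspace $\Span\{\ket{y}:y\in C_j\}$, and therefore $g$ commutes with $P_j$. It follows that
\[
gP_j\ket{\psi}=P_jg\ket{\psi}=P_j\ket{\psi}.
\]

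Now each nonzero $P_j\ket{\psi}$ lies in $\Fix(G)$ and has support contained in $C_j$. By \Cref{lem:supp-partition}, that support is a union of cosets of $X(G)$, so it equals $C_j$. Thus $P_j\ket{\psi}$ is a nonzero multiple of an atomic state with support $C_j$. By \Cref{lem:atomic-overlap}, after normalizing, it equals $\ket{\psi_{\vec{x}^{(j)}}}$ up to a scalar. Summing over $j$ expresses $\ket{\psi}$ as a linear combination of the $\ket{\psi_{\vec{x}^{(j)}}}$, which proves spanning.

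The only nontrivial step is the commutation of each $g\in G$ with the coset projectors $P_j$. Once that holds, the earlier lemmas do the remaining work. The one technicality is that \Cref{lem:atomic-overlap} is stated for unit-norm states, so each nonzero $P_j\ket{\psi}$ must be normalized before the lemma is applied.
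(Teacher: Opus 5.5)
Your proof is correct and follows essentially the same route as the paper. Orthogonality comes from the disjoint coset supports. For spanning, the paper says each $g\in G$ preserves every coset block, whereas you phrase this as commuting with the projectors $P_j$; in both, this shows each nonzero component of a fixed vector is again fixed, and you then conclude via \Cref{lem:supp-partition} and \Cref{lem:atomic-overlap}.
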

\begin{proof}
    The states $\ket{\psi_{\vec{x}^{(j)}}}$ lie in $\Fix(G)$ by \Cref{lem:atomic-existence} and are orthonormal because their supports $C_1,\ldots,C_m$ are pairwise disjoint.
    It remains to show that they span $\Fix(G)$.

    Let $\ket{\psi}\in\Fix(G)$.
    By \Cref{lem:supp-in-sigma}, $\supp(\psi)\subseteq\Sigma_{\Delta(G)}$.
    Grouping the computational-basis amplitudes according to the partition $\Sigma_{\Delta(G)}=C_1\sqcup\cdots\sqcup C_m$ of \cref{cor:sigma-cosets}, we may therefore write
    \begin{align}
        \ket{\psi}=\sum_{j=1}^m a_j\ket{\psi_j},
    \end{align}
    where $a_j\in\C$ and each $\ket{\psi_j}$ is a state supported within $C_j$. We show that $\ket{\psi_j}$ is atomic whenever $a_j\neq0$.

    Every $g\in G$ preserves each subspace $\Span\{\ket{\vec{y}}:\vec{y}\in C_j\}$, since $g\ket{\vec{y}}$ is proportional to $\ket{\vec{y}\oplus\pi(g)}$ and $\pi(g)\in X(G)$.
    Comparing the components in these mutually orthogonal subspaces on both sides of $g\ket{\psi}=\ket{\psi}$ gives
    \begin{align}
        a_jg\ket{\psi_j}=a_j\ket{\psi_j}.
    \end{align}
    Thus, whenever $a_j\neq0$, we have $\ket{\psi_j}\in\Fix(G)$.
    By \Cref{lem:supp-partition}, its support is a nonempty union of $X(G)$-cosets contained in $C_j$, and therefore equals $C_j$.
    Hence $\ket{\psi_j}$ is atomic and, by \Cref{lem:atomic-overlap}, proportional to $\ket{\psi_{\vec{x}^{(j)}}}$.
    Consequently,
    \begin{align}
        \ket{\psi}\in\Span\{\ket{\psi_{\vec{x}^{(1)}}},\ldots,\ket{\psi_{\vec{x}^{(m)}}}\},
    \end{align}
    proving the spanning claim.
\end{proof}

\begin{corollary}\label{cor:dimension-dcp-space}
    The dimension of the space fixed by $G$ is given by
    \begin{align}
        \dim \Fix(G) \;=\; \frac{|\Sigma_{\Delta(G)}|}{2^{r}}
    \end{align}
    In particular $|\Sigma_{\Delta(G)}|$ is a multiple of $2^r$.
\end{corollary}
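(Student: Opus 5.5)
The plan is to read the dimension off the atomic basis of \Cref{thm:atomic-basis} and then count cosets.

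First, by \Cref{cor:sigma-cosets}, $\Sigma_{\Delta(G)}$ is a disjoint union $C_1\sqcup\cdots\sqcup C_m$ of cosets of $X(G)$. \Cref{thm:atomic-basis} then says that the atomic states $\ket{\psi_{\vec{x}^{(1)}}},\ldots,\ket{\psi_{\vec{x}^{(m)}}}$, one for each coset, form a basis of $\Fix(G)$. Hence
\begin{align}
    \dim\Fix(G)=m .
\end{align}

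Second, each coset $C_j=\vec{x}^{(j)}+X(G)$ has exactly $|X(G)|=2^r$ elements, because translation is a bijection and $X(G)$ is an $r$-dimensional subspace of $\F_2^n$. Since the cosets are pairwise disjoint, summing their sizes gives
\begin{align}
    |\Sigma_{\Delta(G)}|=m\,2^r .
\end{align}
This yields both the dimension formula and the divisibility claim.

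The main point to watch is the degenerate case $\Sigma_{\Delta(G)}=\emptyset$. There $m=0$ and the basis is empty, so I would say explicitly that \Cref{thm:atomic-basis} then gives $\Fix(G)=0$. This is consistent, because \Cref{lem:supp-in-sigma} forces every $\ket{\psi}\in\Fix(G)$ to have support inside $\Sigma_{\Delta(G)}$, so any such $\ket{\psi}$ must be $0$. Apart from this case, the argument is a direct count with no real obstacle.
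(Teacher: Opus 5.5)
Your proposal is correct and follows the paper's proof essentially verbatim. You decompose $\Sigma_{\Delta(G)}$ into $m$ disjoint cosets of size $2^r$ via \Cref{cor:sigma-cosets}, then read off $\dim\Fix(G)=m$ from the atomic basis of \Cref{thm:atomic-basis}. Your extra treatment of the case $\Sigma_{\Delta(G)}=\emptyset$, which does occur (e.g.\ when $-I\in\Delta(G)$), is correct and harmless.
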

\begin{proof}
    By \Cref{cor:sigma-cosets}, $\Sigma_{\Delta(G)}$ is a disjoint union of exactly $m = |\Sigma_{\Delta(G)}| / 2^r$ cosets of $X(G)$, and by \Cref{thm:atomic-basis} the corresponding atomic states form a basis of $\Fix(G)$, so $\dim \Fix(G) = m$.
\end{proof}

\begin{remark}
    Since $\Sigma_{\Delta(G)}$ is an algebraic subset of $\F_2^n$ and not necessarily an affine subset, its cardinality need not be a power of two. This translates to the fact that the dimension of a space stabilized by a DCP group also need not be a power of two.
\end{remark}

\begin{example}
    For the group $G$ of \Cref{ex:dcp-atomic-basis}, the three cosets of $X(G)=\langle e_1\rangle$ contained in $\Sigma_{\Delta(G)}$ have representatives
    \begin{align}
        \begin{split}
            \vec{x}^{(00)}=(0,0,0,0),\\
            \vec{x}^{(01)}=(0,0,0,1),\\
            \vec{x}^{(10)}=(0,0,1,0).
        \end{split}
    \end{align}
    Choose the lifts $g_0=I$ and $g_{e_1}=A=\omega_8^{-1}\S_1X_1$.
    For each $(a,b)\in\{(0,0),(0,1),(1,0)\}$, the construction in \Cref{lem:atomic-existence} gives
    \begin{align}
        \ket{\psi_{ab}}
        &=\frac{1}{\sqrt{|X(G)|}}
          \sum_{v\in X(G)}g_v\ket{\vec{x}^{(ab)}}\\
        &=\frac{1}{\sqrt{2}}(I+A)\ket{00ab}\\
        &=\frac{1}{\sqrt{2}}
          \big(\ket{00ab}+\omega_8\ket{10ab}\big)\\
        &=\ket{T}\otimes\ket{0}\otimes\ket{ab},
    \end{align}
    where $\ket{T}=(\ket{0}+\omega_8\ket{1})/\sqrt{2}$.
    Here we used $A\ket{00ab}=\omega_8^{-1}i\ket{10ab}=\omega_8\ket{10ab}$.
    By \Cref{thm:atomic-basis}, the three states
    \begin{align}
        \ket{T}\otimes\ket{000},
        \qquad
        \ket{T}\otimes\ket{001},
        \qquad
        \ket{T}\otimes\ket{010}
    \end{align}
    form an orthonormal basis of $\Fix(G)$.
    The dimension of $\Fix(G)$ is therefore 3, which we could also have derived from \cref{cor:dimension-dcp-space}:
    \begin{align}
        \dim\Fix(G)
        =\frac{|\Sigma_{\Delta(G)}|}{2^{r}}
        =\frac{6}{2^1}=3.
    \end{align}
\end{example}

\subsection{Equivalence with 3PP states}

We first record the closure property that connects CNOT-dihedral preparation to DCP stabilizers.

\begin{proposition}[Stability of $\mathcal{D}^n$ under conjugation]\label{prop:dcp-closed}
    The group $\mathcal{D}^n$ of DCP operators is stable under conjugation by CNOT-dihedral gates. Moreover, given a DCP operator in normal form, its conjugate by an elementary Pauli, $\CNOT$, or $\T$ gate can be computed in $O(n^2)$ time.
\end{proposition}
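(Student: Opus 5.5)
Since conjugation by a fixed unitary $V$ is a group automorphism of $\mathrm{U}(2^n)$, it suffices to check that $V g V^\dagger\in\mathcal{D}^n$ for every generator $V$ of the CNOT-dihedral group ($X_a$, $\CNOT_{ab}$, $\T_a$) and every generator $g$ of $\mathcal{D}^n$ ($\omega_8 I$, $X_i$, $\S_i$, $\CZ_{ij}$). The images of products are then products of images, so $V\mathcal{D}^nV^\dagger\subseteq\mathcal{D}^n$. Equality follows because $V^{-1}$ is also a product of generators; for example, $\T^{-1}=\T^7$ up to global phase. I would organise the check by the conjugating gate.

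\emph{Pauli conjugation.} Conjugating by $X_a$ fixes $X_i$ and $\omega_8 I$, and maps $\S_a\mapsto X_a\S_aX_a = i\,\S_a^{-1}=\omega_8^2\S_a^3$. Using $\CZ_{ab}\mapsto X_a\CZ_{ab}X_a = \CZ_{ab}Z_b$ (with $Z_b=\S_b^2$), every image is in $\mathcal{D}^n$.

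\emph{$\CNOT$ conjugation.} $\CNOT_{a\to b}$ permutes basis states affinely, so it maps diagonal operators to diagonal operators. On phase polynomials it pulls back $p(x)$ to $p(\tau_{a\to b}(x))$, which replaces $x_b$ by $x_a\oplus x_b = x_a+x_b-2x_ax_b$. Applying this to a Clifford phase polynomial (\Cref{def:clifford-phase-polynomial}) gives another Clifford phase polynomial. The terms $2\alpha x_b$ produce $2\alpha x_a+2\alpha x_b-4\alpha x_ax_b$, and the terms $4\beta x_bx_j$ produce $4\beta(x_ax_j+x_bx_j)$ modulo $8$, since $8x_ax_bx_j\equiv0$. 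By the characterisation of Clifford phase polynomials as diagonal Cliffords (\Cref{thm:clifford-phase-polynomial}), these images lie in $\mathcal{D}^n$. For the $X$ generators, $\CNOT$ maps $X_a\mapsto X_aX_b$ and fixes $X_i$ for $i\neq a$.

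\emph{$\T$ conjugation.} This is the only nontrivial case. $\T_a$ commutes with every diagonal generator, and with $X_i$ for $i\neq a$. The remaining identity is
\begin{align}
    \T_aX_a\T_a^\dagger = \omega_8\,\S_a^{-1}X_a = \omega_8 \S_a^3 X_a ,
\end{align}
which I would verify on $\ket0$ and $\ket1$. This is the step where the restriction to third-order CNOT-dihedral gates matters: the phase picked up is a Clifford phase (a power of $\S$), precisely because $\T^2=\S$.

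For the complexity claim, I would use the normal form $U=\omega_8^kD(\vec s,M)X(\vec x)$ of \Cref{prop:dcp-normal-form} and write $VUV^\dagger=\omega_8^k\,(VDV^\dagger)(VX(\vec x)V^\dagger)$. Each factor is computed by the rules above, and the result is then recombined into normal form with the product formula of \Cref{prop:dcp-group-ops}. The individual updates are local:
\begin{itemize}
    \item for $\T_a$, only $s_a$ and $k$ change when $x_a=1$;
    \item for $X_a$, the entry $s_a$ flips sign and row $a$ of $M$ adds to $\vec s$ via the $Z_b$ factors;
    \item for $\CNOT$, column $b$ of $M$ is added into column $a$ (and row $a$, by symmetry), and $\vec x$ updates.
\end{itemize}
The only costly step is reordering diagonal and $X$ factors, i.e.\ commuting $X(\vec x)$ past the $\CZ$ entries of $M$. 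This touches $O(n^2)$ entries, so each conjugation takes $O(n^2)$ time.

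The main point of care is tracking signs and global $\omega_8$ phases in the $\CNOT$ update when $x_a\neq x_b$. No conceptual obstacle is expected there.
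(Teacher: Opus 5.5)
Your proposal is correct and follows essentially the paper's route: a generator-by-generator check in which the only nontrivial case is $\T_a$ acting on $X_a$, followed by local updates of the normal-form coefficients. Two slips need fixing, though neither affects closure or the $O(n^2)$ bound: first, $\T_aX_a\T_a^\dagger=\omega_8^{-1}\S_aX_a$, whereas your $\omega_8\S_a^{-1}X_a$ is $\T_a^\dagger X_a\T_a$, so the $\T_a$ update is $k\mapsto k-x_a$, $s_a\mapsto s_a+x_a$; second, your $\CNOT$ bullet omits the updates $s_a\mapsto s_a+s_b+2M_{ab}$ and $M_{ab}\mapsto M_{ab}+s_b$, which your own substitution $x_b\mapsto x_a\oplus x_b$ produces.
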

\begin{proof}
Recall that
\begin{align}
    \mathcal{D}^n=\left\langle
        \omega_8 I,\ X_i,\ \S_i,\ \CZ_{ij}
    \right\rangle.
\end{align}
Since conjugation respects products, it is enough to check the conjugation of these generators.

Every Pauli operator belongs to $\mathcal{D}^n$, so conjugation by
a Pauli preserves $\mathcal{D}^n$ by the group property.

Next, let $U=\CNOT_{c,t}$. Conjugation by $U$ maps products of $X$ operators to products of $X$ operators. Moreover, it maps diagonal Clifford operators to diagonal Clifford operators.
Indeed, conjugation by a Clifford preserves the Clifford group, while conjugation by a computational-basis permutation preserves diagonality.

Finally, let $U=\T_i$. Since $\T_i$ is diagonal, it commutes with every $\S_j$ and $\CZ_{jk}$, and it commutes with $X_j$ for $j\neq i$. The remaining generator satisfies
\begin{align}
    \T_iX_i\T_i^\dagger=\omega_8^{-1}\S_iX_i\in \mathcal{D}^n.
\end{align}
This proves that $UgU^\dagger\in \mathcal{D}^n$ for every $g\in \mathcal{D}^n$
and every CNOT-dihedral gate $U$. The same argument applies to their inverses, using
\begin{align}
    \T_i^\dagger X_i\T_i=\omega_8 \S_i^\dagger X_i,
\end{align}
so in fact $U\mathcal{D}^nU^\dagger=\mathcal{D}^n$.

To establish the complexity bound, represent $g$ by its normal-form coefficients $(k,s,M,x)$, requiring $O(n^2)$ entries.
Conjugation by $\T_i$ changes only the global phase and the coefficient of $\S_i$, depending on $x_i$, and therefore takes $O(1)$ time.
Conjugation by a single-qubit Pauli changes at most $O(n)$ coefficients: conjugation by $Z_i$ changes only the global phase, while conjugation by $X_i$ updates the diagonal phase by substituting $y_i\mapsto y_i\oplus1$.
Conjugation by $\CNOT_{c,t}$ updates the $X$ component by a binary linear transformation and the diagonal phase by substituting $y_t\mapsto y_t\oplus y_c$.
Only the $O(n)$ terms containing $y_t$ are affected, each producing a bounded number of terms, so this update also takes $O(n)$ time.
An arbitrary Pauli operator can be handled using at most $n$ single-qubit Pauli conjugations, and each update takes $O(n)$ time, giving an $O(n^2)$ bound.
\end{proof}

\begin{theorem}\label{thm:3pp-dcp}
    A state $\ket{\psi}$ is a DCP state if and only if it is a 3PP state.
\end{theorem}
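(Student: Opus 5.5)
For the direction ``3PP $\Rightarrow$ DCP'' I will use \Cref{prop:3pp-preparation} together with \Cref{prop:dcp-closed}. By \Cref{prop:3pp-preparation}, $\ket{\psi}=U\ket{+}^{\otimes m}\ket{0}^{\otimes \ell}$ up to global phase, with $U$ a CNOT-dihedral circuit. The product state is stabilized by the DCP group
\[
G_0=\langle X_1,\ldots,X_m,Z_{m+1},\ldots,Z_n\rangle ,
\]
and $\Fix(G_0)$ is one-dimensional. By \Cref{prop:dcp-closed}, $G=UG_0U^\dagger\le\mathcal D^n$. Since $U$ is unitary, $\Fix(G)=U\Fix(G_0)=\C\ket{\psi}$. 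Then $G\le\mathcal D(\psi)$ gives $\Fix(\mathcal D(\psi))\subseteq\Fix(G)$, so this space is one-dimensional and $\ket{\psi}$ is a DCP state.

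For the direction ``DCP $\Rightarrow$ 3PP'' I will work with $G=\mathcal D(\psi)$, for which $\dim\Fix(G)=1$. By \Cref{thm:atomic-basis} and \Cref{cor:dimension-dcp-space}, $\Sigma_{\Delta(G)}$ is a single coset $x_0+X(G)$, and $\ket{\psi}$ is the atomic state
\[
\ket{\psi}\propto\sum_{v\in X(G)} g_v\ket{x_0}.
\]
So the support $S=x_0\oplus X(G)$ is affine and all amplitudes have equal magnitude. It remains to show that the phases come from a third-order polynomial. Choose lifts $h_1,\ldots,h_r$ of a basis $\vec x^{(1)},\ldots,\vec x^{(r)}$ of $X(G)$, and parametrize $S$ by $y\in\F_2^r\mapsto x_0\oplus By$, where $B$ has these basis vectors as columns. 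Take $g_{By}=h_1^{y_1}\cdots h_r^{y_r}$. This is a legitimate choice: \Cref{lem:atomic-existence} allows any lift for each $v$, and the resulting state is the same up to phase. The amplitude at $x_0\oplus By$ is then the phase accumulated by applying the $h_j$ in sequence. Write $h_j=\omega_8^{k_j}D(\vec s^{(j)},M^{(j)})X(\vec x^{(j)})$. Applying $h_j$ to $\ket{z}$ contributes $\omega_8^{c_j(z\oplus \vec x^{(j)})}$, where $c_j$ is a Clifford phase polynomial, i.e.\ quadratic with coefficient pattern $(\Z_8,2\Z_4,4\F_2)$. The $j$-th factor is evaluated at $z=x_0\oplus\sum_{i<j}y_i\vec x^{(i)}$, and it is switched on by $y_j$. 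The total phase is therefore
\[
p(y)=\sum_j y_j\, c_j\Bigl(x_0\oplus \textstyle\sum_{i\le j} y_i\vec x^{(i)}\Bigr).
\]
Finally, by \Cref{cor:phase-polynomial-affine-pullback} I will pull back to a polynomial $q$ on $\F_2^n$ via a left inverse of $B$, exactly as in the paragraph before \Cref{def:3pp}.

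The main obstacle is showing that $p$ is a \emph{third-order} phase polynomial. The composition $c_j(\text{affine in }y)$ is again a Clifford phase polynomial in $y$, which I will show using the parity-sum form in \Cref{thm:clifford-phase-polynomial}. It then remains to show that multiplying a Clifford phase polynomial by a single binary variable $y_j$ raises the weighted degree by exactly one. I will check this on monomials: $d_0y_j$ is a linear term with a $\Z_8$ coefficient; $2\alpha y_iy_j$ has the form $2b_{ij}y_iy_j$; $4\beta y_iy_ky_j$ has the form $4c\,y_iy_jy_k$. Collapsing repeated indices via $y_j^2=y_j$ only lowers degree while keeping the factor of $2$ or $4$, which is what \Cref{def:third-order-phase-polynomial} requires. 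This step is where the weight filtration $\Z_8\supset2\Z_8\supset4\Z_8$ must be tracked carefully.
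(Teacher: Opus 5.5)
Your proof is correct and follows the paper's argument essentially step for step: for one direction you conjugate the product-state Pauli stabilizers by the CNOT-dihedral preparation circuit, and for the other you use the atomic-basis theorem to get a single-coset support, show that the phase accumulated along a product of lifts is a sum of terms $y_j$ times an affinely pulled-back Clifford phase polynomial (hence third order), and then pull back by a left inverse of $B$. One small slip: with $g_{By}=h_1^{y_1}\cdots h_r^{y_r}$ the factor $h_r$ acts first, so the $j$-th factor is evaluated at $x_0\oplus\sum_{i>j}y_i\vec x^{(i)}$ rather than $x_0\oplus\sum_{i<j}y_i\vec x^{(i)}$; either reverse the product (the paper uses $g_r^{t_r}\cdots g_1^{t_1}$) or reindex, and note that multiplying by $y_j$ raises the weighted degree by at most one, not exactly one.
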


\begin{proof}
We first show that every 3PP state is a DCP state. By \Cref{prop:3pp-preparation}, it can be prepared from $\ket{0}$ and $\ket{+}$ states using only Pauli, $\CNOT$, and $\T$ gates. The initial product state is uniquely stabilized by a Pauli subgroup of $\mathcal D^n$. If a DCP group $G$ uniquely stabilizes $\ket{\psi}$, then $UGU^\dagger$ uniquely stabilizes $U\ket{\psi}$. By \Cref{prop:dcp-closed}, this conjugated group remains a DCP group for every gate in the preparation. Hence, the prepared 3PP state is a DCP state.

For the converse direction, let $\ket{\psi}$ be a DCP state and let $G = \mathcal{D}(\psi)$. Since $\dim \Fix(G) = 1$, \Cref{thm:atomic-basis} tells us that $\ket{\psi}$ is, up to global phase, an atomic state of the form
\begin{align}
    \ket{\psi} = 2^{-r/2}\sum_{v \in X(G)} g_{v} \ket{x_0}.
\end{align}
We now show that the phases of this atomic state are given by a third-order phase polynomial.
Choose a basis $v^{(1)},\ldots,v^{(r)}$ of $X(G)$ and lifts $g_1,\ldots,g_r\in G$ with $\pi(g_i)=v^{(i)}$.
Write
\begin{align}
    g_i\ket{x}=\omega_8^{d_i(x)}\ket{x\oplus v^{(i)}},
\end{align}
where each $d_i$ is a Clifford phase polynomial.
For $t\in\mathbb F_2^r$, choose
\begin{align}
    g_{v(t)}=g_r^{t_r}\cdots g_1^{t_1},
    \qquad
    v(t)=\bigoplus_{i=1}^r t_i v^{(i)}.
\end{align}
Applying these operators successively gives
\begin{align}
    g_{v(t)}\ket{x_0}
    &=\omega_8^{P(t)}\ket{x_0\oplus v(t)},\\
    P(t)
    &=\sum_{i=1}^r t_i\,d_i\left(x_0\oplus\bigoplus_{j<i}t_jv^{(j)}\right)\pmod 8.
\end{align}
Affine substitution preserves Clifford phase polynomials by \cref{cor:phase-polynomial-affine-pullback}, so each substituted $d_i$ has the form
\begin{align}
    c_i+2\sum_{j<i}a_{ij}t_j+4\sum_{j<k<i}b_{ijk}t_jt_k.
\end{align}
Consequently,
\begin{align}
    P(t)=\sum_i c_it_i+2\sum_{j<i}a_{ij}t_it_j+4\sum_{j<k<i}b_{ijk}t_it_jt_k,
\end{align}
which is a third-order phase polynomial.

Finally, let $B$ be the matrix with columns $v^{(1)},\ldots,v^{(r)}$ and choose a left inverse $L$ of $B$ over $\mathbb F_2$.
Then
\begin{align}
    q(x)=P\bigl(L(x\oplus x_0)\bigr)
\end{align}
is a third-order phase polynomial by affine pullback, and $q(x_0\oplus Bt)=P(t)$.
Hence, up to global phase,
\begin{align}
    \ket{\psi}=2^{-r/2}\sum_{x\in x_0\oplus X(G)}\omega_8^{q(x)}\ket{x},
\end{align}
so $\ket{\psi}$ is a 3PP state.
\end{proof}

\begin{remark}
Although the full DCP stabilizer group of a state can be non-abelian, every DCP state is uniquely stabilized by an abelian DCP subgroup generated by $n$ independent commuting involutions.
Indeed, by \Cref{prop:3pp-preparation,thm:3pp-dcp}, we may write $\ket{\psi}=U\ket{\phi}$, where $\ket{\phi}$ is a product of $\ket{0}$ and $\ket{+}$ states and $U$ is a CNOT-dihedral unitary.
If $P_1,\ldots,P_n$ are the standard Pauli stabilizer generators of $\ket{\phi}$, then
\begin{align}
    H=\langle UP_1U^\dagger,\ldots,UP_nU^\dagger\rangle
\end{align}
is a DCP subgroup made of commuting order-2 elements, with $\Fix(H)=\operatorname{span}\{\ket{\psi}\}$.
\end{remark}

\section{Simulation in the DCP representation}\label{sec:simulator-dcp-representation}
In this section, we describe how to update the DCP-stabilizer group of a state under the operations of preparing $|0\rangle$ and $|+\rangle$ states, applying Pauli, $\CNOT$ and $\T$ gates, and measuring $Z$ and certain $X$ Pauli operators projectively. We show that each of these operations can be simulated in polynomial time using the canonical representation of the DCP-stabilizer group.

First, note that any product state $\ket{0}^{\otimes a} \otimes \ket{+}^{\otimes b}$ is a DCP state, since it is stabilized by the DCP group generated by $\langle Z_1,\dots,Z_a, X_{a+1},\dots,X_{a+b} \rangle$. Thus, we can initialize the DCP-stabilizer group of any product state in polynomial time. The other operations are more complicated and described in the following subsections.

\subsection{Unitary operations}
Let's consider the following very general fact about describing states in the Heisenberg picture, that is, by considering a set of matrices that stabilize the state:
\begin{proposition}\label{prop:heisenberg}
    Suppose that we have a state $\ket{\psi}$ and a set of matrices $S$ that stabilize it. Consider a unitary operation $U$ that is applied to the state. Then, the set of matrices $U S U^\dagger$ stabilizes the state $U\ket{\psi}$, where
    \[
        U S U^\dagger = \{ U g U^\dagger \mid g \in S \}.
    \]
    Moreover, $\dim \Fix(S) = \dim \Fix(U S U^\dagger)$.
\end{proposition}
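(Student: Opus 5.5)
The plan is to verify the two claims directly from the definitions: first that every conjugated operator fixes $U\ket{\psi}$, then that $U$ restricts to a linear isomorphism between the two fixed spaces.

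\emph{Stabilization.} Take any $g\in S$, so that $g\ket{\psi}=\ket{\psi}$. Using $U^\dagger U=I$, we compute
\begin{align}
    (UgU^\dagger)\,U\ket{\psi}=Ug\ket{\psi}=U\ket{\psi}.
\end{align}
Hence every element of $USU^\dagger$ fixes $U\ket{\psi}$. This settles the first claim.

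\emph{Equality of dimensions.} It suffices to prove the identity
\begin{align}
    \Fix(USU^\dagger)=U\,\Fix(S).
\end{align}
Once this holds, the dimensions agree, because $U$ is invertible and linear and so carries a subspace isomorphically onto its image.

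For the inclusion $U\,\Fix(S)\subseteq\Fix(USU^\dagger)$, take $v\in\Fix(S)$. The calculation from the first step, applied to $v$ in place of $\ket{\psi}$, shows that $Uv$ is fixed by every $UgU^\dagger$.

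For the reverse inclusion, take $w\in\Fix(USU^\dagger)$ and set $v=U^\dagger w$. For each $g\in S$, the condition $UgU^\dagger w=w$ becomes $Ugv=Uv$. Applying $U^\dagger$ to both sides gives $gv=v$. Thus $v\in\Fix(S)$, and therefore $w=Uv\in U\,\Fix(S)$.

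No step here is a real obstacle. The only point needing care is that we use invertibility of $U$, via $U^\dagger U=UU^\dagger=I$, in both directions. We do not need $S$ to be a group; it can be an arbitrary set of operators. Applied with $S$ a DCP stabilizer group and $U$ a CNOT-dihedral gate, the result shows that $\dim\Fix=1$ is preserved, and \Cref{prop:dcp-closed} guarantees that $USU^\dagger$ is again a DCP group.
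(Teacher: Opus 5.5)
Your proof is correct and follows essentially the same approach as the paper: stabilization comes from the same one-line computation, and the dimension equality comes from using $U$ to carry $\Fix(S)$ into $\Fix(USU^\dagger)$ and $U^\dagger$ for the reverse direction. The only cosmetic difference is that you prove the subspace identity $\Fix(USU^\dagger)=U\,\Fix(S)$ directly, whereas the paper maps a basis forward to get one dimension inequality and then appeals to the same argument with $U^\dagger$ for the other.
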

\begin{proof}
    We have that for any $L \in S$,
    \[
        U L U^\dagger (U \ket{\psi}) = U L \ket{\psi} = U \ket{\psi},
    \]
    so $U S U^\dagger$ stabilizes $U\ket{\psi}$.

    For the second part, note that if $\{v^{(1)}, \dots, v^{(r)}\}$ is a basis for $\Fix(S)$, then $\{U v^{(1)}, \dots, U v^{(r)}\}$ remains linearly independent and lies in $\Fix(U S U^\dagger)$, so $\dim \Fix(S) \le \dim \Fix(U S U^\dagger)$. The same argument applies to $U^\dagger$, so we also have $\dim \Fix(U S U^\dagger) \le \dim \Fix(S)$, and thus $\dim \Fix(S) = \dim \Fix(U S U^\dagger)$.
\end{proof}

By the conjugation-closure result \Cref{prop:dcp-closed}, we can apply \Cref{prop:heisenberg} to update the DCP-stabilizer group directly by conjugating its generators:

\begin{proposition}[Updating under Pauli, $\CNOT$, and $\T$ gates]\label{prop:dcp-update}
    Let $\ket{\psi}$ be a state with DCP-stabilizer group $\mathcal{D}(\psi) = \langle g_1, \dots, g_m \rangle$, and let $U$ be a Pauli operator, a $\CNOT$ gate, or a $\T$ (or $\T^\dagger$) gate. Then $Ug_1U^\dagger, \dots, Ug_mU^\dagger$ are DCP operators, by \Cref{prop:dcp-closed}, and
    \[
        \mathcal{D}(U\psi) \;=\; \big\langle\, Ug_1U^\dagger,\, \dots,\, Ug_mU^\dagger \,\big\rangle.
    \]
\end{proposition}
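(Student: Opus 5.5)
The plan is to show that conjugation by $U$ maps the maximal DCP stabilizer group of $\ket{\psi}$ exactly onto that of $U\ket{\psi}$:
\begin{align}
    U\mathcal{D}(\psi)U^\dagger=\mathcal{D}(U\psi).
\end{align}
Once this identity holds, the generator statement is immediate. Conjugation by $U$ is a group homomorphism, so if $g_1,\ldots,g_m$ generate $\mathcal{D}(\psi)$, their images generate $U\mathcal{D}(\psi)U^\dagger$. The fact that each $Ug_iU^\dagger$ is a DCP operator is already given by \Cref{prop:dcp-closed}.

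For the inclusion $U\mathcal{D}(\psi)U^\dagger\subseteq\mathcal{D}(U\psi)$, take $g\in\mathcal{D}(\psi)$. By \Cref{prop:dcp-closed}, $UgU^\dagger$ lies in $\mathcal{D}^n$. By \Cref{prop:heisenberg}, it fixes $U\ket{\psi}$. Both conditions in \Cref{def:dcp-state} are therefore satisfied.

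For the reverse inclusion, take $h\in\mathcal{D}(U\psi)$ and set $g=U^\dagger hU$. The key input is the second half of \Cref{prop:dcp-closed}, namely $U\mathcal{D}^nU^\dagger=\mathcal{D}^n$. That proof handles inverses explicitly via $\T_i^\dagger X_i\T_i=\omega_8\S_i^\dagger X_i$, and Pauli and $\CNOT$ gates are self-inverse up to phase. This gives $g\in\mathcal{D}^n$. Moreover $g\ket{\psi}=U^\dagger hU\ket{\psi}=U^\dagger U\ket{\psi}=\ket{\psi}$, so $g\in\mathcal{D}(\psi)$ and $h=UgU^\dagger$.

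The main (mild) obstacle is this reverse direction. Without the two-sided closure of $\mathcal{D}^n$ under conjugation we would only get containment, and the conjugated group would fail to be maximal. I would also note that the case $U=\T^\dagger$ is covered by the same argument. The identity $U^\dagger\mathcal{D}^nU=\mathcal{D}^n$ is symmetric in $U$ and $U^\dagger$, so for a $\T^\dagger$ gate we apply the stated closure with $\T$ playing the role of the inverse.

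Finally, I would remark that this is consistent with \Cref{def:dcp-stabilizer-group}. By \Cref{prop:heisenberg}, $\dim\Fix$ is preserved, so if $\ket{\psi}$ is a DCP state then so is $U\ket{\psi}$.
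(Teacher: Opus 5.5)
Your proposal is correct and takes essentially the same approach as the paper. Both arguments establish $U\mathcal{D}(\psi)U^\dagger=\mathcal{D}(U\psi)$ from the key fact, given by \Cref{prop:dcp-closed}, that $U^\dagger hU\in\mathcal{D}^n$ for every $h\in\mathcal{D}^n$; the paper simply compresses your two inclusions into the single chain $h(U\psi)=U\psi\iff U^\dagger hU\in\mathcal{D}(\psi)\iff h\in U\mathcal{D}(\psi)U^\dagger$ for $h\in\mathcal{D}^n$.
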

\begin{proof}
    Write $H = \langle Ug_1U^\dagger, \dots, Ug_mU^\dagger\rangle = U\,\mathcal{D}(\psi)\,U^\dagger$. For $h \in \mathcal{D}^n$, $h(U\psi) = U\psi$ iff $(U^\dagger h U)\psi = \psi$; since $U^\dagger$ is again a Pauli operator, $\CNOT$ gate, or $\T$-type gate, \Cref{prop:dcp-closed} guarantees $U^\dagger h U \in \mathcal{D}^n$ whenever $h \in \mathcal{D}^n$, so this is equivalent to $U^\dagger h U \in \mathcal{D}(\psi)$, i.e.\ to $h \in U\,\mathcal{D}(\psi)\,U^\dagger = H$. Hence $\mathcal{D}(U\psi) = H$.
\end{proof}

\subsection{Z-measurements}

When considering the issue of updating the DCP-stabilizer group under $Z$-measurements, the following is fundamental:
\begin{lemma}\label{lem:z-commute}
    Given a DCP operator $L$ and a product of Pauli $Z$ operators $z$, we have that $L$ and $z$ either commute or anticommute (i.e. $L z = \pm z L$).
\end{lemma}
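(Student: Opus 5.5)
Write $L=\omega_8^{k}D(\vec s,M)X(\vec x)$ using the normal form of \Cref{prop:dcp-normal-form}, and write $z=Z(\vec v)=\prod_a Z_a^{v_a}$. The plan is to track how $z$ passes through each factor of $L$ separately.

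First, the scalar $\omega_8^{k}$ and the diagonal factor $D(\vec s,M)$ both commute with $z$, since all three are diagonal in the computational basis. So $Lz=\omega_8^{k}D(\vec s,M)X(\vec x)z$, and everything reduces to commuting $z$ past $X(\vec x)$.

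Second, the single-qubit Pauli relations $X_aZ_b=(-1)^{\delta_{ab}}Z_bX_a$ give
\begin{align}
    X(\vec x)Z(\vec v)=(-1)^{\langle \vec x,\vec v\rangle}Z(\vec v)X(\vec x),
\end{align}
where the inner product is taken modulo $2$. Alternatively, one can check this on basis states: both sides send $\ket{y}$ to a multiple of $\ket{y\oplus\vec x}$, with phases $(-1)^{\langle \vec v,y\rangle}$ and $(-1)^{\langle\vec v,y\oplus\vec x\rangle}$ respectively.

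Substituting this relation and moving $z$ back to the left past the diagonal factors yields
\begin{align}
    Lz=(-1)^{\langle\vec x,\vec v\rangle}zL.
\end{align}
Hence $L$ and $z$ commute exactly when $\langle\vec x,\vec v\rangle=0$ and anticommute otherwise. There is no real obstacle here; the only point worth noting is that the sign depends only on the $X$-part $\pi(L)$, a fact that will be convenient for the later measurement updates.
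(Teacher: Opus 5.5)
Your proof is correct and follows essentially the same route as the paper: write $L=\omega_8^kD(\vec s,M)X(\vec x)$ in normal form, note that the diagonal factors commute with $z=Z(\vec v)$, and use $X(\vec x)Z(\vec v)=(-1)^{\langle\vec x,\vec v\rangle}Z(\vec v)X(\vec x)$ to get $Lz=(-1)^{\langle\vec x,\vec v\rangle}zL$. Your basis-state check of the Pauli relation is a harmless extra verification, and your remark that the sign depends only on $\pi(L)$ is exactly how the lemma is used later in the $Z$-measurement update.
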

\begin{proof}
    Write $L=\omega_8^kD(\vec{s},M)X(\vec{x})$ in the normal form of
    \Cref{prop:dcp-normal-form}, and write $z=Z(\vec{v})$ for some
    $\vec{v}\in\F_2^n$.  Since $D(\vec{s},M)$ and $z$ are both diagonal,
    they commute.  On the other hand, the Pauli commutation relations give
    \[
        X(\vec{x})Z(\vec{v})
        =(-1)^{\langle\vec{x},\vec{v}\rangle}
          Z(\vec{v})X(\vec{x}).
    \]
    It follows that
    $Lz=(-1)^{\langle\vec{x},\vec{v}\rangle}zL$, proving the claim.
\end{proof}

Just as in the Pauli stabilizer formalism, \Cref{lem:z-commute} lets us split the update rule for measuring a $Z$-string $z$ into a random and a deterministic case, according to how $z$ interacts with the current generators:

\newcommand{\dcpZMeasurementUpdateStatement}{%
    Let $\ket{\psi}$ be a DCP state, let
    $G=\langle g_1,\dots,g_m\rangle$ be a DCP-stabilizer group for
    $\ket{\psi}$, and let $z$ be a product of Pauli $Z$ operators.
    \begin{enumerate}[nosep,label=(\alph*)]
        \item \emph{(Random outcome.)} Suppose that $z$ anticommutes with
        some element of $G$, and let
        \[
            K=C_G(z)=\{g\in G:gz=zg\}.
        \]
        Measuring $z$ yields each outcome
        $\varepsilon\in\{+1,-1\}$ with probability $\tfrac12$, and
        \[
            H_\varepsilon=\langle K,\varepsilon z\rangle
        \]
        is a DCP-stabilizer group for the post-measurement state.  In the
        canonical representation of \Cref{def:canon-form}, a generating set
        for $K$ is obtained by retaining all diagonal generators, choosing
        an anticommuting lift $g_j$, and replacing every other anticommuting
        lift $g_i$ by $g_i g_j$.  Thus the post-measurement update amounts
        to replacing $g_j$ by $\varepsilon z$ after these row operations.
        \item \emph{(Deterministic outcome.)} If $z$ commutes with every
        element of $G$, the outcome $\varepsilon\in\{+1,-1\}$ is
        determined: it is the unique sign for which
        $\varepsilon z\in\mathcal{D}(\psi)$.  The state is unchanged, so
        $G$ remains a DCP-stabilizer group for it.
    \end{enumerate}
}

\begin{proposition}[Updating under $Z$-measurements]\label{prop:z-update}
\dcpZMeasurementUpdateStatement
\end{proposition}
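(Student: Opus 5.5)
The plan is to mirror the Pauli-stabilizer argument, using \Cref{lem:z-commute} to split $G$ by commutation with $z$. Write $z=Z(\vec v)$. By \Cref{lem:z-commute}, each $g\in G$ satisfies $gz=(-1)^{\langle\pi(g),\vec v\rangle}zg$. The commutation sign therefore factors through the homomorphism $\pi$ followed by the linear functional $\vec x\mapsto\langle \vec x,\vec v\rangle$. Consequently $K=C_G(z)$ is the kernel of a homomorphism $G\to\F_2$, and so it is either all of $G$ or a subgroup of index two. It contains $\Delta(G)$ and the lifts $h_i$ with $\langle\vec x^{(i)},\vec v\rangle=0$. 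The claimed generating set then follows from \Cref{prop:dcp-generators}, applied to $K$ with the basis of $X(K)=X(G)\cap\vec v^{\perp}$ obtained by the row operations $\vec x^{(i)}\mapsto\vec x^{(i)}+\vec x^{(j)}$. This step is just linear algebra.

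For the random case, I would compute the outcome probability directly. Pick $g\in G$ anticommuting with $z$. Then $\bra\psi z\ket\psi=\bra\psi zg\ket\psi=-\bra\psi gz\ket\psi=-\bra\psi z\ket\psi$, using $g^\dagger\ket\psi=\ket\psi$. Hence the expectation vanishes and both outcomes occur with probability $\tfrac12$. The post-measurement state $\ket{\psi_\varepsilon}\propto(I+\varepsilon z)\ket\psi$ is nonzero and is fixed by every element of $K$, because those elements commute with $z$. It is also fixed by $\varepsilon z$, so $\ket{\psi_\varepsilon}\in\Fix(H_\varepsilon)$.

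The key step is showing that $\dim\Fix(H_\varepsilon)=1$. Take $\ket\phi\in\Fix(H_\varepsilon)$ and set $\ket\chi=(I+g)\ket\phi$. Since $g$ normalizes $K$ and $g\,\varepsilon z\,g^{-1}=-\varepsilon z$, the vector $g\ket\phi$ lies in $\Fix(K)$ and in the $-\varepsilon$ eigenspace of $z$. It follows that $\ket\chi$ is fixed by $K$, and $g\ket\chi=g\ket\phi+g^2\ket\phi$.

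This is where I expect the main obstacle: unlike in the Pauli case, $g^2$ need not be the identity. To handle it, note that $g^2\in K$ because $\pi(g^2)=0$. So $g^2\ket\phi=\ket\phi$, and therefore $g\ket\chi=\ket\chi$. Since $G=\langle K,g\rangle$, we get $\ket\chi\in\Fix(G)=\C\ket\psi$. Projecting with $(I+\varepsilon z)/2$ gives back $\ket\phi$, which is then proportional to $\ket{\psi_\varepsilon}$. So the fixed space is one-dimensional and $H_\varepsilon$ is a DCP-stabilizer group. Since $\varepsilon z$ is Pauli, it lies in $\mathcal D^n$, so $H_\varepsilon$ is indeed a DCP group.

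For the deterministic case, $z$ commutes with all of $G$, so $z$ preserves $\Fix(G)=\C\ket\psi$. Hence $\ket\psi$ is an eigenvector of $z$ with some eigenvalue $\varepsilon\in\{\pm1\}$. The outcome is therefore deterministic and the state is unchanged. Moreover $\varepsilon z\in\mathcal D(\psi)$, and the sign is unique because $-\varepsilon z$ cannot also fix $\ket\psi\neq0$.
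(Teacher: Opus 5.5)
Your proof is correct and follows essentially the same route as the paper. You use the expectation-value argument for the $\tfrac12$ probabilities, and you show $\dim\Fix(H_\varepsilon)=1$ with the $(I+g)\ket{\phi}$ trick, relying on $g^2\in K$ and $g^{-1}Kg=K$. The generator update comes from $\Delta(K)=\Delta(G)$ and $X(K)=X(G)\cap\vec v^{\perp}$, exactly as in the paper. The differences are cosmetic: you get normality of $K$ and $G=\langle K,g\rangle$ by viewing $K$ as the kernel of a homomorphism $G\to\F_2$, where the paper checks them directly, and the paper writes out the spanning and independence argument for the row-reduced basis that you leave as ``just linear algebra.''
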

The proof can be found in the \hyperref[appendix:tableau-update-measurements]{measurement update subsection} of \cref{appendix:dcp-tableaux}.

\subsection{X-measurements}
In the Heisenberg picture, the compatibility criteria and update rules for $X$-measurements closely mirror those of the phase-polynomial picture, described in \cref{thm:x-measurements}.
For a general $X(m)$ measurement, \cref{eq:x-measurement-reduction} lets us conjugate the tableau by $C_m$, perform the single-qubit $X_a$ update below, and conjugate back by $C_m^\dagger$. The identity measurement leaves the tableau unchanged.
Consider a 3PP state $\ket{\psi}$ with DCP-stabilizer group $G$, and suppose that we measure the single-qubit operator $X_a$, obtaining the outcome $\varepsilon = (-1)^\mu$ with $\mu \in \F_2$. Write $V = X(G)$, regarded as a vector space over $\F_2$.
Since $\ket{\psi}$ spans the one-dimensional space $\Fix(G)$, it is atomic for $G$ by \Cref{thm:atomic-basis}, so its support is a coset $S = x_0 \oplus V$ of $V$.
The two top-level cases of \cref{thm:x-measurements} are therefore exactly:
\begin{align}
    \text{(1)}\;\; e_a \not\in V
    \qquad\text{or}\qquad
    \text{(2)}\;\; e_a \in V.
\end{align}
Let us consider case (1) first. Measurements falling into this case are always compatible with the state and yield each outcome with probability $\tfrac12$, so all that remains is to describe how the tableau updates.

Suppose first that $v_a = 0$ for every $v \in V$.
Then qubit $a$ is unentangled from the rest of the system and sits in a computational basis state $\ket{\sigma}$, where $(-1)^\sigma$ is the deterministic outcome of measuring $Z_a$. The measurement simply replaces $\ket{\sigma}$ by $\ket{\pm}$, according to the sign of $\varepsilon$.
On the tableau, we first detach qubit $a$ from the generators, replacing each factor $\S_a$ by the scalar $i^\sigma$ and each factor $\CZ_{aj}$ by $Z_j^\sigma$, which does not change their action on the state, and then add the generator $\varepsilon X_a$.

The general case reduces to this one. Since $e_a \not\in V$, Gaussian elimination yields a vector $w \in \F_2^n$ with $w_a = 0$ and $\langle w, v \rangle = v_a$ for all $v \in V$. The circuit $C = \prod_{i \,:\, w_i = 1} \CNOT_{i,a}$ then clears the $a$-th coordinate of every element of $V$, and it commutes with $X_a$. Measuring $X_a$ is therefore equivalent to applying $C$, measuring $X_a$ as above, and then applying $C^\dagger$. Thus we conjugate the generators by $C$, update them as in the previous paragraph, and conjugate them back by $C^\dagger$, all of which takes polynomial time by \Cref{prop:dcp-closed}.

We now turn to case 2, where $e_a \in V$. By the surjectivity of $\pi$, there is some $g \in G$ of the form $g = \delta^* X(e_a)$, where $\delta^*$ is a diagonal DCP operator. Let $E : \F_2^n \to \Z_8$ denote the phase function of $\delta^*$, defined by $\delta^* \ket{x} = \omega_8^{E(x)} \ket{x}$. This function determines the discrete derivative on which \Cref{thm:x-measurements} is based:

\begin{proposition}\label{prop:derivative-from-E}
    Let $q$ be a phase polynomial for $\ket{\psi}$, so that, up to global phase,
    \[
        \ket{\psi} = \frac{1}{\sqrt{|S|}} \sum_{x \in S} \omega_8^{q(x)} \ket{x},
        \qquad S = x_0 \oplus V.
    \]
    Then, for every $x \in S$,
    \[
        \Delta_{e_a} q(x) = E(x \oplus e_a) = -E(x) \pmod 8.
    \]
\end{proposition}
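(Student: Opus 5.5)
The plan is to compare coefficients in the stabilizer equation $g\ket{\psi}=\ket{\psi}$, in the same way as in the proof of \Cref{lem:supp-partition}, and then to use the fact that $g^2$ also stabilizes $\ket{\psi}$ to obtain the second equality.

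First I compute the action of $g=\delta^*X(e_a)$ on basis states. We have $g\ket{x}=\delta^*\ket{x\oplus e_a}=\omega_8^{E(x\oplus e_a)}\ket{x\oplus e_a}$. Applying this to $\ket{\psi}=|S|^{-1/2}\sum_{x\in S}\omega_8^{q(x)}\ket{x}$ and reindexing by $y=x\oplus e_a$ gives
\begin{align}
    g\ket{\psi}=\frac{1}{\sqrt{|S|}}\sum_{y\in S}\omega_8^{q(y\oplus e_a)+E(y)}\ket{y}.
\end{align}
The reindexing is valid because $e_a\in V$, so translation by $e_a$ permutes $S$. The global phase is irrelevant here, since $g$ fixes every representative of $\C\ket{\psi}$. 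Comparing coefficients with $\ket{\psi}$ gives $q(y\oplus e_a)+E(y)=q(y)$ for all $y\in S$. This says $\Delta_{e_a}q(y)=-E(y)$, which is the second equality.

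For the first equality, I replace $y$ by $x\oplus e_a$, which again lies in $S$. This gives $q(x)-q(x\oplus e_a)=-E(x\oplus e_a)$, that is, $\Delta_{e_a}q(x)=E(x\oplus e_a)$.

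Both equalities therefore follow directly from the single coefficient identity, applied at two points of $S$. As a consistency check, adding the two forms gives $E(x)+E(x\oplus e_a)=0$ on $S$. This is exactly the statement that $g^2=\delta^*X_a\delta^*X_a$ is diagonal with phase $E(x)+E(x\oplus e_a)$ and fixes $\ket{\psi}$. That in turn agrees with \Cref{lem:supp-in-sigma}, since $g^2\in\Delta(G)$.

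I do not expect a real obstacle. The only points needing care are the order of the factors in $g$, so that the phase is evaluated at $x\oplus e_a$ rather than at $x$, and the sign conventions modulo $8$.
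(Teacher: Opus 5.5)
Your proof is correct and is essentially the paper's argument: use $S\oplus e_a=S$ and compare coefficients in $g\ket{\psi}=\ket{\psi}$. The only difference is that the paper gets the second equality from the stabilizer equation for $g^{-1}=X(e_a)(\delta^*)^{-1}$, while you evaluate the same coefficient identity at the translated point $x\oplus e_a\in S$; these are equivalent, and the $g^2$ step announced in your first sentence is never needed (it only serves as your consistency check).
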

\begin{proof}
    Since $e_a \in V$, we have $S \oplus e_a = S$. For $x \in S$,
    \[
        g\ket{x} = \delta^*\ket{x \oplus e_a} = \omega_8^{E(x \oplus e_a)}\ket{x \oplus e_a},
    \]
    so comparing the amplitudes of $\ket{x \oplus e_a}$ on both sides of $g\ket{\psi} = \ket{\psi}$ gives
    $q(x) + E(x \oplus e_a) = q(x \oplus e_a) \pmod 8$. Likewise, $g^{-1} = X(e_a)(\delta^*)^{-1}$ fixes $\ket{\psi}$ and
    \[
        g^{-1}\ket{x} = \omega_8^{-E(x)}\ket{x \oplus e_a},
    \]
    so that $q(x) - E(x) = q(x \oplus e_a) \pmod 8$.
\end{proof}

The derivative $\Delta_{e_a} q$ can thus be read off the DCP-stabilizer group, without tracking the phase polynomial itself. Substituting $\Delta_{e_a} q = -E$ into \Cref{thm:x-measurements} gives necessary and sufficient conditions for the compatibility of the measurement in terms of quantities computable from $G$, and \Cref{prop:x-measurement-case-decision} shows how to decide in polynomial time which of the cases 2a, 2b and 2c applies, if any. What remains is to describe how the DCP-stabilizer group is updated in each of them.

The starting point is the same in all three cases. Since $X_a = (\delta^*)^{-1} g$ and $g$ fix $\ket{\psi}$, the projection onto the $\varepsilon$-eigenspace of $X_a$ acts on $\ket{\psi}$ as a diagonal operator:
\begin{align}
    \frac{1}{2}(1 + \varepsilon X_a) \ket{\psi}
    &= \frac{1}{2}\big(1 + \varepsilon (\delta^*)^{-1} g\big) \ket{\psi} \notag \\
    &= \frac{1}{2}\big(1 + \varepsilon (\delta^*)^{-1}\big) \ket{\psi}.
    \label{eq:x-projection-diagonal}
\end{align}

In case 2a, $E(x) = -\Delta_{e_a} q(x) = 4f(x) \pmod 8$ on $S$ for an affine function $f$, which we write as $f(x) = c \oplus \langle u, x \rangle$ with $c \in \F_2$ and $u \in \F_2^n$.
Equivalently, $\delta^*$ acts on the support of $\ket{\psi}$ as the signed $Z$-string $(-1)^c Z(u)$.
This is the case, for instance, whenever $\delta^*$ is itself a signed product of $Z$ operators.
In general, $\delta^*$ may differ from $(-1)^c Z(u)$ by a diagonal operator that fixes every $\ket{x}$ with $x \in S$. As $(\delta^*)^{-1}\ket{\psi} = \delta^*\ket{\psi}$ in this case, \cref{eq:x-projection-diagonal} becomes
\begin{align}
    \frac{1}{2}(1 + \varepsilon X_a) \ket{\psi}
    &= \frac{1}{2}(1 + \varepsilon\, \delta^*) \ket{\psi} \\
    &= \frac{1}{2}\big(1 + (-1)^{\mu + c} Z(u)\big) \ket{\psi}.
\end{align}
Thus, measuring $X_a$ with outcome $\mu$ transforms the state exactly as measuring the $Z$-string $Z(u)$ with outcome $(-1)^{\mu + c}$. The outcome probabilities and the updated DCP-stabilizer group are therefore given by \Cref{prop:z-update}, and can be computed in polynomial time.

In cases 2b and 2c, the support is unchanged and, by \Cref{thm:x-measurements}, the difference $r = q' - q$ between the new and the old phase polynomial is itself a third-order phase polynomial. Writing it as a sum of terms,
\[
    r(x) = r_0 + \sum_i a_i x_i + 2\sum_{i<j} b_{ij} x_i x_j + 4\sum_{i<j<k} c_{ijk} x_i x_j x_k,
\]
adding $r$ to the phase polynomial has the same effect on the state, up to the global phase $\omega_8^{r_0}$, as applying the circuit
\[
    U =
    \prod_i \T_i^{a_i}
    \prod_{i<j} \CS_{ij}^{b_{ij}}
    \prod_{i<j<k} \CCZ_{ijk}^{c_{ijk}}.
\]
By \Cref{prop:heisenberg}, $U G U^\dagger$ is then a DCP-stabilizer group of the post-measurement state. Since $\CS$ and $\CCZ$ can be synthesized from $\CNOT$ and $\T$ gates, $U$ is a circuit of polynomially many such gates, so the new tableau can be computed in polynomial time by conjugating the generators gate by gate, as in \Cref{prop:dcp-update}.

\section{Applications to quantum error correction}\label{sec:noisy-simulations}

We now establish sufficient conditions for non-Clifford quantum error-correction circuits to admit efficient simulation under stochastic Pauli noise, and more generally under stochastic DCP faults.
The main challenge is to ensure that every $X$ measurement remains compatible with the 3PP representation, including after earlier faults and measurements have changed the state.

We first identify two useful measurement mechanisms: checks that are deterministic in the absence of noise, and $X$ measurements that anticommute with a current $Z$-type stabilizer.
We prove that in the ideal case, the former are compatible case (2a), and change to other cases with noise, while the latter are compatible case (1).
We then prove that circuits whose ideal $X$ measurements are either deterministic or case (1) remain compatible throughout arbitrary sequences of DCP faults and measurements.
This reduces the analysis of noisy circuits to conditions on their ideal evolution.
We apply these results to magic state distillation and transversal computation, DCP observable checks used in magic state cultivation, gauge fixing and code switching, teleportation, and cat-state syndrome extraction.

\subsection{Deterministic checks and gauge measurements}

We begin with measurements whose outcomes are deterministic in the ideal circuit, as occurs for stabilizer checks measured on a logical DCP state. Although faults can make these outcomes random, the following theorem shows that any DCP error preserves measurement compatibility.

Throughout, we use the following notation for the projector onto an affine set $A$:
\begin{align}
    \Pi_A = \sum_{x \in A} \ket{x}\bra{x}.
\end{align}

\begin{theorem}[Deterministic measurements under noise]
    \label{thm:deterministic-measurements-noise}
    Let $\ket{\psi}$ be a 3PP state satisfying $X(m) \ket{\psi}=(-1)^{c_0} \ket{\psi}$. Then, measuring $X(m)$ on $\ket{\psi}$ is a compatible case (2a), and for every DCP error $E$, measuring $X(m)$ on $E\ket{\psi}$ is a compatible case (1), (2a) or (2b). Moreover, for every measurement outcome $\epsilon = (-1)^s$ with nonzero probability, the resulting state is proportional to
    \begin{align}
        E_s\Pi_{B_s} \ket{\psi}
    \end{align}
    where $E_s$ is DCP and $B_s\subseteq \supp(\psi)$ is affine.
\end{theorem}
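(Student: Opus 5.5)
First reduce to $m=e_a$ using the CNOT conjugation $C_m$ of \cref{eq:x-measurement-reduction}. Conjugating a DCP error by a CNOT circuit gives a DCP error (\Cref{prop:dcp-closed}), and $C_m$ maps affine sets to affine sets. So it suffices to treat $X_a$ on $\ket{\psi}$ with $X_a\ket{\psi}=(-1)^{c_0}\ket{\psi}$, with support $S=x_0\oplus V$ and phase polynomial $q$.

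\emph{Ideal case.} Comparing amplitudes in $X_a\ket{\psi}=(-1)^{c_0}\ket{\psi}$ shows that $S\oplus e_a=S$, so $e_a\in V$. It also gives $\omega_8^{q(x\oplus e_a)}=(-1)^{c_0}\omega_8^{q(x)}$ on $S$, i.e.\ $\Delta_{e_a}q=4c_0$. This is case (2a) of \Cref{thm:x-measurements} with the constant affine function $f\equiv c_0$.

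\emph{Noisy case.} Write $E=\omega_8^kD X(y)$ in the normal form of \Cref{prop:dcp-normal-form}, and let $d$ be the Clifford phase polynomial of $D$. Then $E\ket{\psi}$ has support $S\oplus y$, which still contains direction $e_a$, and phase $q'(x)=q(x\oplus y)+d(x)$. Its derivative is
\begin{align}
\Delta_{e_a}q'(x)=4c_0+\Delta_{e_a}d(x).
\end{align}
The key computation is the derivative of a Clifford phase polynomial $d(x)=d_0+2\sum_i\alpha_ix_i+4\sum_{i<j}\beta_{ij}x_ix_j$:
\begin{align}
\Delta_{e_a}d(x)=2\alpha_a+4\textstyle\sum_{j}\beta_{aj}x_j \pmod 8,
\end{align}
which is $2\alpha_a$ plus $4$ times an affine function $\ell(x)$. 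If $\alpha_a$ is even, this is case (2a) with $f=c_0+\alpha_a/2+\ell$. If $\alpha_a$ is odd, it equals $2+4g$ with $g$ affine, so $2g$ is a (Clifford, hence third-order) phase polynomial, and this is case (2b). Case (1) cannot actually occur, since $e_a$ stays in the support directions. The statement's inclusion of case (1) is harmless, so I would just note this.

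\emph{Form of the post-measurement state.} In case (2a), \Cref{thm:x-measurements} gives post-state $\Pi_{S'}E\ket{\psi}$ with $S'=(S\oplus y)\cap\{f=s\}$ affine. Since $E$ is monomial, $\Pi_{S'}E=E\,\Pi_{S'\oplus y}$, so we take $E_s=E$ and $B_s=S'\oplus y\subseteq S$.

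In case (2b), \Cref{thm:x-measurements} gives the same support with phase $q'-2(-1)^sg$. Because $g$ is affine, $2g$ is the phase function of a diagonal Clifford operator ($S$ and $Z$ gates, up to global phase). So the state equals $E_sE\ket{\psi}$ with DCP $E_s$, and we take $B_s=S=\supp(\psi)$.

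The main obstacle is bookkeeping. One must check that the affine $f$ and $g$ defined on $S\oplus y$ translate correctly back through $y$ and through the $C_m$ reduction. One must also confirm that the sign $-2(-1)^sg$ is realized by $S^{\pm1}$ and $CZ$-free factors, and that the global phase $\omega_8$ offsets in $E$ do not affect compatibility.
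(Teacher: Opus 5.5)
Your proposal is correct and is essentially the paper's proof: the ideal case gives $\Delta_m q=4c_0$, the noisy derivative is $4c_0+\Delta_m d$ with $\Delta_m d=2r+4\ell$, the parity of $r$ decides between (2a) and (2b), and the post-measurement state is $E\,\Pi_{B_s}\ket{\psi}$ or $D_sE\ket{\psi}$ with $B_s=S$; the only difference is that the paper works with $\Delta_m$ directly via \Cref{thm:clifford-phase-derivative} rather than first reducing to $X_a$ with $C_m$. Two small slips to fix: the correct derivative is $\Delta_{e_a}d(x)=2\alpha_a-4\alpha_a x_a+4\sum_{j\neq a}\beta_{aj}x_j \pmod 8$ (you dropped $-4\alpha_a x_a$, which matters when $\alpha_a$ is odd, though the result is still affine), and $2g$ for an affine $g$ involving several bits generally contains $4x_ix_j$ terms, so $E_s$ needs $\CZ$ factors rather than only $S$ and $Z$ gates---this is harmless, since DCP operators include $\CZ$.
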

\begin{proof}
    Let us first identify the ideal case. Plugging
    \begin{align}
        \ket{\psi}=\frac{1}{\sqrt{|S|}} \sum_{x \in S} \omega_8^{q(x)} \ket{x},\quad S = x_0 + V
    \end{align}
    into
    \begin{align}
        X(m) \ket{\psi} = (-1)^{c_0} \ket{\psi},
    \end{align}
    we get
    \begin{align}
        \sum_{x \in S} \omega_8^{q(x)} \ket{x\oplus m} = \sum_{x \in S} \omega_8^{q(x)+4c_0} \ket{x}.
    \end{align}
    After a change of variable on the left-hand side, we can identify phases on both sides:
    \begin{align}
        q(x\oplus m)= q(x)+4c_0,
    \end{align}
    which gives us
    \begin{align}
        \Delta_m q = 4c_0.
    \end{align}
    This is case (2a).

    Let us now look at the noisy case. Let $E$ be a DCP error. We write $E=DX(e)$ where $D$ is a diagonal Clifford gate, satisfying $D\ket{x}=\omega_8^{d(x)} \ket{x}$. The function $d$ is a Clifford phase polynomial of the form characterized in \Cref{thm:clifford-phase-polynomial}.
    Let $q'$ be the phase polynomial associated to $E\ket{\psi}$.  Relabeling
    $y=x\oplus e$ gives
    \begin{align}
        \begin{split}
            E\ket{\psi}
            &= \frac{1}{\sqrt{|S|}} \sum_{x \in S} \omega_8^{q(x)} DX(e) \ket{x} \\
            &= \frac{1}{\sqrt{|S|}} \sum_{x \in S} \omega_8^{q(x)} \omega_8^{d(x\oplus e)} \ket{x \oplus e} \\
            &= \frac{1}{\sqrt{|S|}} \sum_{y \in S\oplus e}
            \omega_8^{q(y\oplus e)+d(y)} \ket{y}.
        \end{split}
    \end{align}
    Therefore, $q'(x)=q(x \oplus e)+d(x)$ on $S\oplus e$.  This is again
    a third-order phase polynomial by
    \Cref{cor:phase-polynomial-affine-pullback,thm:clifford-phase-polynomial},
    and
    \begin{align}
        \begin{split}
            \Delta_m q' &= d(x \oplus m) - d(x) + q(x \oplus e \oplus m) - q(x \oplus e) \\
            &= 4c_0 + \Delta_m d(x)
        \end{split}
    \end{align}
    By \Cref{thm:clifford-phase-derivative}, we can write $\Delta_m d$ as
    \begin{align}
        \Delta_m d(x) = 2r + 4\ell(x)
    \end{align}
    where $r\in\Z_4$ is a constant and $\ell:\F_2^n\to\F_2$ is linear. We therefore have
    \begin{align}
        \Delta_m q'(x) = 4c_0 + 2r + 4\ell(x)
    \end{align}
    If $r$ is even, then
    $f(x)=c_0+r/2+\ell(x)\pmod 2$ is affine and
    $\Delta_mq'(x)=4f(x)\pmod8$, so we are in case (2a).  If $r$ is odd,
    write $r=2t+1$ and define the affine Boolean function
    $g(x)=c_0+t+\ell(x)\pmod2$.  Then
    \begin{align}
        \Delta_m q'(x) = 2 + 4g(x) \pmod 8.
    \end{align}
    Moreover, $2g$ is a third-order phase polynomial by the parity-sum
    characterization in \Cref{thm:phase-polynomial-equivalence}.  Hence this
    is case (2b).

    Let us finally identify the resulting state. For a nonzero outcome
    $\epsilon=(-1)^s$, the post-measurement state is proportional to
    \begin{align}
        \sum_{x\in S\oplus e}
        \omega_8^{q'(x)}
        \left(1+(-1)^s\omega_8^{\Delta_mq'(x)}\right)\ket{x}.
    \end{align}
    If $r$ is even, the factor in parentheses is
    \begin{align}
        1+(-1)^{s+f(x)}
        =2\,\mathbf{1}_{\{f(x)=s\}}.
    \end{align}
    Defining
    \begin{align}
        B_s=\{x\in S:f(x\oplus e)=s\},
    \end{align}
    the resulting state is therefore proportional to
    $E\Pi_{B_s}\ket{\psi}$. Since $f$ is affine, $B_s$ is affine,
    and we can take $E_s=E$.

    If $r$ is odd, we instead have
    \begin{align}
        1+(-1)^s\omega_8^{2+4g(x)}
        =1+i(-1)^{s\oplus g(x)}
        =\sqrt{2}\,\omega_8^{1-2(s\oplus g(x))}.
    \end{align}
    Define the diagonal operator $D_s$ by
    \begin{align}
        D_s\ket{x}
        =\omega_8^{-2(s\oplus g(x))}\ket{x}.
    \end{align}
    Since $s\oplus g$ is affine, $D_s$ is diagonal Clifford by
    \Cref{thm:clifford-phase-polynomial}. The resulting state is
    therefore proportional to $D_sE\ket{\psi}$, so we can take
    $E_s=D_sE$ and $B_s=S$.
\end{proof}

Gauge measurements, by contrast, often have random outcomes even in the ideal circuit. A useful compatibility criterion is anticommutation with a current $Z$-type stabilizer: this ensures that the measured $X$ operator maps the state's support to a disjoint affine space, giving case (1).

\begin{theorem}[Anticommuting $Z$ constraint]
    \label{thm:anticommuting-z-constraint}
    Let $\ket{\psi}$ a 3PP states such that $Z(z)\ket{\psi}=(-1)^c \ket{\psi}$. Then any measurement $X(m)$ anticommuting with $Z(z)$ is a compatible case (1).
\end{theorem}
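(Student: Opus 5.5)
The plan is to show directly that $m\notin V$, where $\ket{\psi}$ has support $S=x_0\oplus V$ as in \Cref{def:3pp}. Case (1) of \Cref{thm:x-measurements}, transported to a general $X(m)$ via the reduction \eqref{eq:x-measurement-reduction}, is precisely the condition $m\notin V$. In that case the measurement is always compatible, each outcome occurs with probability $\tfrac12$, and the update formulas are the ones given there. So the whole argument reduces to a support computation.

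First, I will translate the stabilizer condition into a constraint on the support, exactly as in \Cref{sec:z-measurements-3pp}. Since $Z(z)\ket{x}=(-1)^{\langle z,x\rangle}\ket{x}$, the equation $Z(z)\ket{\psi}=(-1)^c\ket{\psi}$ forces every $x\in S$ to satisfy
\begin{align}
    \langle z,x\rangle=c \pmod 2.
\end{align}
The reason is that the amplitudes $\omega_8^{q(x)}/\sqrt{|S|}$ are nonzero on all of $S$, so each coefficient can be compared individually. Applying this to $x$ and $x\oplus v$ for $v\in V$ and subtracting gives $\langle z,v\rangle=0$ for every $v\in V$; that is, $V\subseteq z^{\perp}$.

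Second, I will use the anticommutation hypothesis. From the Pauli relation $X(m)Z(z)=(-1)^{\langle m,z\rangle}Z(z)X(m)$ recalled in the proof of \Cref{lem:z-commute}, anticommutation means $\langle m,z\rangle=1$. Hence $m\notin z^\perp\supseteq V$, so $m\notin V$, and we are in case (1).

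For completeness I would also note the concrete picture. $X(m)$ maps $S$ to $S\oplus m$, on which $\langle z,\cdot\rangle$ takes the value $c\oplus 1$, so the two cosets are disjoint. The post-measurement support is $S\cup(S\oplus m)$, with phase $q$ on $S$ and $q(x)+4\mu$ at $x\oplus m$. I expect no real obstacle. The only point needing care is that the theorem's case labels for multi-qubit $X(m)$ are defined through the $C_m$ conjugation. I will therefore either cite the sentence following \Cref{thm:x-measurements} stating that case (1) means $m\notin V$, or check directly that $C_m$ maps $V$ to $C_mV$ and $m$ to $e_a$, so the membership condition is preserved.
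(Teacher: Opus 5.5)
Your proposal is correct and follows essentially the same route as the paper: both derive $\langle z,x\rangle=c$ on the whole support $S=x_0\oplus V$ from the stabilizer equation, deduce $\langle z,v\rangle=0$ for every $v\in V$, and conclude $m\notin V$ (hence case (1)) from $\langle m,z\rangle=1$. Your extra check that the $C_m$ reduction preserves the condition $m\notin V$ is correct; the paper leaves this step implicit, relying on its remark that case (1) means $m\notin V$.
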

\begin{proof}
    We show $m \notin V$, by proving that for any $v \in V$, $v \cdot z = 0$, while $m \cdot z = 1$ by assumption. Developing the stabilizer equation $Z(z) \ket{\psi} = (-1)^c \ket{\psi}$ gives us:
    \begin{align}
        \begin{split}
            Z(z) \ket{\psi} &= \frac{1}{\sqrt{|S|}} \sum_{x \in S} \omega_8^{q(x)} Z(z) \ket{x} \\
            &= \frac{1}{\sqrt{|S|}} \sum_{x \in S} (-1)^c \omega_8^{q(x)} \ket{x}
        \end{split}
    \end{align}
    On the other hand, we have that $Z(z) \ket{x} = (-1)^{x \cdot z} \ket{x}$. Therefore, we must have for every $x \in S$, $x \cdot z = c$.

    Now, for $v \in V$, we can write $v = x_0 \oplus (x_0 \oplus v)$, and develop the dot product over the XOR:
    \begin{align}
        \begin{split}
            v \cdot z
            &= (x_0 \oplus (x_0 \oplus v)) \cdot z \\
            &= (x_0 \cdot z) \oplus \left((x_0 \oplus v) \cdot z\right) \\
            &= c+c \\
            &= 0
        \end{split}
    \end{align}
    since we have both $x_0 \in S$ and $x_0 \oplus v \in S$. Therefore, $m \notin V$, and we are in case (1).
\end{proof}

These results identify two common mechanisms for compatible measurements. To cover a full noisy protocol, however, we must also account for the changes induced by earlier measurements. We next show that compatibility persists throughout the circuit whenever every ideal $X$ measurement is deterministic or case (1).

\subsection{Circuits with iterative measurements}

To extend the preceding results to circuits with repeated measurements, the key is to track each noisy branch as a DCP error acting on an affine restriction of an ideal branch. This form is preserved throughout the circuit and allows us to establish compatibility using only the ideal measurement conditions.

\begin{theorem}[Iterative measurements under noise]
    \label{thm:iterative-measurements}
    Consider a 3PP circuit such that every outcome-dependent correction is DCP. Suppose that every ideal $X$ measurement is either a compatible case (1) or deterministic. Then arbitrary inserted DCP faults turn the measurements into compatible cases (1), (2a) or (2b). On every nonzero noisy branch, the resulting state after any given measurement has the form
    \begin{align} \label{eq:noisy-form-post-iterative-measurement}
        \ket{\phi} \propto E\Pi_{A}\ket{\psi},
    \end{align}
    where $E$ is DCP, $A \subseteq \supp(\psi)$ is an affine space, and $\ket{\psi}$ is a nonzero ideal branch.
\end{theorem}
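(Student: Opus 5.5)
Our plan is to prove the theorem by induction over the operations of the circuit, maintaining the invariant that on every nonzero noisy branch the current state is proportional to $E\Pi_A\ket{\psi}$. Here $\ket{\psi}$ is the ideal state on a nonzero ideal branch with the same (or a compatible) record of ideal outcomes, $E$ is DCP, and $A\subseteq\supp(\psi)$ is affine. Initially $E=I$ and $A=\supp(\psi)$.

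\textbf{Gates and corrections.} For a CNOT-dihedral gate $U$ we write $UE\Pi_A\ket{\psi}=(UEU^\dagger)\Pi_{U(A)}\,U\ket{\psi}$. This uses two facts: $U\Pi_AU^\dagger=\Pi_{\phi(A)}$ for the affine permutation $\phi$ underlying $U$, and $U$ preserves $\mathcal D^n$ under conjugation (\Cref{prop:dcp-closed}). Inserted DCP faults and outcome-dependent DCP corrections are simply absorbed into $E$. Preparations of fresh $\ket0$ or $\ket+$ qubits tensor trivially into all three factors.

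\textbf{$Z$ measurements.} Write $E=DX(e)$. Then $Z(v)$ commutes with $E$ up to the sign $(-1)^{\langle v,e\rangle}$. So the projector for outcome $\mu$ on the noisy state equals $E$ times the projector for outcome $\mu\oplus\langle v,e\rangle$ acting on $\Pi_A\ket{\psi}$. Both the ideal projector and $\Pi_A$ are diagonal projectors onto affine sets. Their product is therefore $\Pi_{A'}$ with $A'=A\cap\{\langle v,x\rangle=\mu'\}$. When the ideal outcome $\mu'$ has nonzero probability, we replace $\ket\psi$ by the corresponding ideal post-measurement branch $\Pi_{H}\ket\psi$, which keeps $A'\subseteq\supp$ of the new ideal state. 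If instead $\mu'$ is deterministic in the ideal circuit, then $A'\subseteq\supp(\psi)$ already and $\ket\psi$ is unchanged.

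\textbf{$X$ measurements.} This is the main obstacle. We must show compatibility for states of the form $E\Pi_A\ket\psi$, whereas \Cref{thm:deterministic-measurements-noise,thm:anticommuting-z-constraint} only treat $E\ket\psi$.

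\emph{Deterministic ideal case.} Suppose $X(m)\ket\psi=(-1)^{c_0}\ket\psi$. We first show that $\Pi_A\ket\psi$ is itself a 3PP state, namely the restriction of $q$ to $A$. However, $X(m)$ need not stabilize it, because $A$ need not be $m$-invariant. If $m\notin V_A$ (the direction space of $A$), then the restricted state has support $A$ with $m\notin V_A$, which is case (1). The noisy state $E\Pi_A\ket\psi$ has support $A\oplus e$ with the same direction space, so it is again case (1). Its post-measurement state is $\propto(I+\epsilon X(m))E\Pi_A\ket\psi$. We will rewrite this using $X(m)\Pi_A\ket\psi=(-1)^{c_0}\Pi_{A\oplus m}\ket\psi$ to get $E'\Pi_{A\cup(A\oplus m)}\ket\psi$. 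Since $m\in V$ in the ideal case, $A\oplus m\subseteq\supp(\psi)$, and $E'$ should be a diagonal Clifford correction times $E$ carrying the relative sign. We expect $A\cup(A\oplus m)$ to be affine. To obtain the relative phase as a Clifford correction, we would commute $X(m)$ past $E$, using that the commutator of $X(m)$ with a DCP operator is $i^cZ(\cdot)$ times a Pauli (\Cref{prop:dcp-group-ops}), which yields phases $\omega_8^{2r+4\ell(x)}$.

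If instead $m\in V_A$, then $\Pi_A\ket\psi$ is stabilized by $X(m)$ up to the same sign. We then apply \Cref{thm:deterministic-measurements-noise} directly with $\Pi_A\ket\psi$ in place of $\ket\psi$, which gives cases (2a) and (2b). The resulting form is $E_s\Pi_{B_s}\Pi_A\ket\psi$, with $B_s\cap A$ affine.

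\emph{Ideal case (1).} Here $m\notin V$, so $m\notin V_A$, and the noisy support $A\oplus e$ again lies in case (1). We write $(I+\epsilon X(m))E=E(I+\epsilon' C X(m))$, where $C$ is the diagonal Clifford commutator factor. We would then like to identify $\Pi_{A\oplus m}X(m)\ket\psi$ with $\Pi_{A\oplus m}$ applied to the ideal post-measurement branch $\ket{\psi_\mu}\propto(I+(-1)^\mu X(m))\ket\psi$. This branch has support $S\cup(S\oplus m)$ and satisfies $\Pi_A\ket{\psi_\mu}\propto\Pi_A\ket\psi$. The remaining relative diagonal Clifford phase between the $A$ and $A\oplus m$ halves would be absorbed into a new $E'$, using that an affine indicator times $4$ or $2$ is a Clifford phase polynomial (\Cref{thm:clifford-phase-polynomial}).

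The delicate point throughout is checking that these relative phases are genuinely Clifford phase polynomials, so that $E'$ stays DCP.
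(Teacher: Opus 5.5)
The overall structure of your proposal matches the paper's proof, and most steps go through; one step, the handling of outcome-dependent corrections, does not work as written.

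\textbf{Where you match the paper.} You use the same invariant $E\Pi_A\ket{\psi}$. Gates and $Z$ measurements are handled identically, including the outcome shift by $\langle v,e\rangle$ and the new reference $\Pi_H\ket{\psi}$. You split $X$ measurements the same way. An ideally deterministic one with $m\in V_A$ is handled by applying \Cref{thm:deterministic-measurements-noise} to the 3PP state $\Pi_A\ket{\psi}$. All remaining cases become noisy case (1) with new support $A\cup(A\oplus m)=A+\langle m\rangle$, which is indeed affine.

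For the relative phase you commute $X(m)$ through all of $E$ using $\gcomm{X(m)}{E}=i^cZ(z)$. The paper instead writes $E=\omega_8^kX(e)D$ and expands $d(y\oplus tm)=d(y)+t\,\Delta_md(y)$ with $\Delta_md=2r+4\ell$. These carry the same information, since $\gcomm{X(m)}{D}$ is exactly the diagonal operator with phase $\Delta_md$. Either way, the phase to absorb has the form $t(y)\bigl(2c'+4\ell(y)\bigr)$, with $t$ the affine indicator of $A\oplus m$ and $\ell$ affine. It is Clifford because $4t\ell$ is $4$ times a quadratic Boolean function. That is slightly more than the ``affine indicator times $2$ or $4$'' you cite.

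\textbf{The gap: corrections.} The step that fails is ``outcome-dependent DCP corrections are simply absorbed into $E$.'' The ideal circuit applies its own correction $C_{\mathrm i}$ on the reference branch. The reference record also generally differs from the noisy one: a $Z$ outcome is shifted by $\langle v,e\rangle$, and an ideally deterministic check has reference outcome $c_0$ whatever the noisy outcome. If you absorb $C_{\mathrm n}$ into $E$ and leave $\ket{\psi}$ untouched, then $\ket{\psi}$ is no longer a state of the ideal circuit. The hypothesis on ideal $X$ measurements then cannot be invoked at later steps.

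Concretely, inject $T$ into $\ket{+}$ with outcome $s=1$ and no fault. Your representation has $E=S$ and reference $T^\dagger\ket{+}$ instead of the true ideal state $\ket{T}$. A subsequent controlled-$W_T$ check is deterministic on $\ket{T}$. On $T^\dagger\ket{+}$, however, its ancilla readout has probability $1/2$ and the measured direction lies in the support, so it is neither deterministic nor case (1).

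The repair is the paper's identity
$C_{\mathrm n}E\Pi_A\ket{\psi}=(C_{\mathrm n}EC_{\mathrm i}^\dagger)\,\Pi_{A\oplus c}\,C_{\mathrm i}\ket{\psi}$, with $c=\pi(C_{\mathrm i})$. Here the first factor is DCP, $A\oplus c\subseteq\supp(C_{\mathrm i}\psi)$, and $C_{\mathrm i}\ket{\psi}$ is the corrected ideal branch.
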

\begin{proof}
    We prove the statement by induction.
    At the beginning of the circuit, taking $E=I$ and $A=\supp(\psi)$ gives the correct form for the state.

    Let's now assume that, at a given stage of the circuit, the state has the form of \cref{eq:noisy-form-post-iterative-measurement}.
    We first consider how faults, unitary gates and $Z$ measurements transform the state, before analyzing $X$ measurements.

    A DCP fault $F$ replaces $E$ by $FE$. For a CNOT-dihedral gate $U$ such that
    \begin{align}
        U\ket{x} \propto \ket{f(x)},
    \end{align}
    with $f$ an affine function, we have
    \begin{align}
        UE\Pi_A\ket{\psi}
        =(UEU^\dagger)\Pi_{f(A)}U\ket{\psi}.
    \end{align}
    Since $U$ normalizes the DCP group, since $f(A)$ is affine, and since $\ket{\psi'}=U\ket{\psi}$ is the ideal state after the unitary, this preserves the required form.

    For a measurement of $Z(v)$, \cref{lem:z-commute} gives $Z(v)E=(-1)^c EZ(v)$ for some $c\in\F_2$.
    Thus, commuting the measurement projector through $E$ changes the outcome label from $s$ to $s\oplus c$.
    By \cref{eq:phase-polynomial-z-measurement-update}, this measurement restricts the support to the affine hyperplane
    \begin{align}
        H=\{x\in\F_2^n:\langle v,x\rangle=s\oplus c\}.
    \end{align}
    The resulting state is therefore proportional to
    \begin{align}
        \begin{split}
            E\Pi_H\Pi_A\ket{\psi}
            &=E\Pi_A\Pi_H\ket{\psi} \\
            &=E\Pi_A\Pi_H \Pi_H\ket{\psi} \\
            &=E\Pi_{A \cap H} \Pi_H\ket{\psi}.
        \end{split}
    \end{align}
    We take the new ideal reference branch to be $\ket{\psi'}\propto\Pi_H\ket{\psi}$. It is a valid ideal branch, since it corresponds to measuring the previous ideal state $\ket{\psi}$ and taking the shifted outcome $s \oplus c$, which has nonzero probability of occurring (otherwise, the noisy output would also be zero).
    Moreover, $A \cap H$ is affine and contained in $\supp(\psi')$.
    Thus, the required form is preserved.

    Let us now consider a measurement of $X(m)$ with outcome
    $\epsilon=(-1)^s$, and write
    \begin{align}
        P_s=\frac{I+(-1)^sX(m)}{2},
    \end{align}
    as well as $\supp(\psi)=x_0+V$ and $A=a+W$.
    By the induction hypothesis, $\ket{\psi}$ is an ideal branch.
    The measurement on $\ket{\psi}$ is therefore case (1) or deterministic by assumption.
    Suppose first that the ideal measurement is deterministic, and that
    $m\in W$. Then, $A\oplus m=A$ and we have
    \begin{align}
        [X(m),\Pi_A]=0.
    \end{align}
    Therefore, if $X(m)\ket{\psi}=(-1)^{c_0}\ket{\psi}$, then
    \begin{align}
        X(m)\Pi_A\ket{\psi}
        =(-1)^{c_0}\Pi_A\ket{\psi}.
    \end{align}
    The normalized restricted state
    $\ket{\chi}\propto\Pi_A\ket{\psi}$ is a 3PP state, so
    \Cref{thm:deterministic-measurements-noise} gives
    \begin{align}
        P_sE\ket{\chi}
        \propto E_s\Pi_B\ket{\chi}
        \propto E_s\Pi_B\ket{\psi},
    \end{align}
    where $E_s$ is DCP and $B\subseteq A$ is affine. We used the fact that $\Pi_B \Pi_A=\Pi_B$ (since $B\subseteq A$). This preserves the required form, and the noisy measurement is case (2a) or (2b).

    It remains to consider an ideal case (1) measurement, or an ideal deterministic measurement with $m\notin W$.
    In the first situation, $m\notin V$ implies $m\notin W$, since $W\subseteq V$; in the second, this holds by assumption. Since $\supp(\Pi_A \ket{\psi})=A=a+W$, and the DCP operator $E$ can only translate the support by a constant vector, it means that $\supp(E \Pi_A \ket{\psi})=a'+W$ and $m \notin W$ implies that the measurement is case (1).

    To identify the resulting state, write $E=\omega_8^k X(e)D$, where $D$
    is diagonal Clifford. Since $X(e)$ commutes with $P_s$, we have
    \begin{align}
        P_sE\Pi_A\ket{\psi}
        \propto X(e)P_sD\Pi_A\ket{\psi}.
    \end{align}
    We therefore first calculate $P_sD\Pi_A\ket{\psi}$.
    Writing the phases of $\ket{\psi}$ and $D$ as $q$ and $d$,
    respectively, gives
    \begin{align}
        P_sD\Pi_A\ket{\psi}
        \propto
        \sum_{x\in A}\omega_8^{q(x)+d(x)}
        \left(\ket{x}+(-1)^s\ket{x\oplus m}\right).
    \end{align}
    Its support is the affine set
    \begin{align}
        A'=A+\langle m\rangle.
    \end{align}
    Every $y\in A'$ has a unique expression $y=x\oplus tm$, where $x\in A$ and $t\in\F_2$.
    Moreover, $t$ is an affine function of $y$: choose a linear functional $\lambda$ such that $\lambda(W)=0$ and $\lambda(m)=1$. We can then set $t(y)=\lambda(y\oplus a)$. Indeed, writing $y=a \oplus w \oplus tm$, we get
    \begin{align}
        \lambda(y\oplus a)
        =\lambda(w\oplus tm)
        =\lambda(w)\oplus t\lambda(m)
        =t.
    \end{align}
    The phase of the noisy output at $y$ is therefore
    \begin{align}
        q(x)+d(x)+4st.
    \end{align}

    Let us compare this with an ideal reference output.
    If the ideal measurement is case (1), choose either outcome
    $(-1)^{s_0}$ and take
    $\ket{\psi'}\propto P_{s_0}\ket{\psi}$.
    If the ideal measurement is deterministic, take $s_0=c_0$
    and $\ket{\psi'}=\ket{\psi}$.

    Let us now show that the output state has the form
    \begin{align}
        P_sE\Pi_A\ket{\psi}
        \propto X(e)D'\Pi_{A'}\ket{\psi'}
    \end{align}
    for $D'$ a diagonal Clifford operator that we will determine, and $A'\subseteq\supp(\psi')$. This last inclusion is true, since in the ideal case (1), the reference output has support $S+\langle m\rangle$, which contains
    $A'=A+\langle m\rangle$ because $A\subseteq S$.
    In the ideal deterministic case, $\ket{\psi'}=\ket{\psi}$
    and $m\in V$, so
    $A'\subseteq S+\langle m\rangle=S=\supp(\psi')$.

    To construct $D'$, let us understand the phases of $\ket{\psi'}$, and compare them to the phases of the output state. If the ideal measurement case is (1), the phase of $\ket{\psi'}$ at
    $y=x\oplus tm$ is $q(x)+4s_0t$. If the ideal measurement is deterministic, with outcome $c_0$, it can only be a case (2a) with $\Delta_m q = 4c_0$. The relation
    \begin{align}
        q(x\oplus m)=q(x)+4c_0
    \end{align}
    then shows that its phase at $y$ is again $q(x)+4s_0t$.

    The phase difference between the noisy output and this
    reference state is consequently
    \begin{align}
        d'(y)=d(x)+4(s\oplus s_0)t.
    \end{align}
    To show that this defines a diagonal Clifford operator, we
    express it in terms of $y$. Since $x=y\oplus tm$, we get
    \begin{align}
        \begin{split}
            d'(y)
            &=d(y\oplus tm)+4(s\oplus s_0)t\\
            &=d(y)+t\Delta_m d(y)+4(s\oplus s_0)t.
        \end{split}
    \end{align}
    By \Cref{thm:clifford-phase-derivative}, we can write
    $\Delta_m d(y)=2r+4\ell(y)$, where $\ell$ is linear. Thus,
    \begin{align}
        d'(y)
        =d(y)+2rt+4t\ell(y)+4(s\oplus s_0)t \pmod 8.
    \end{align}
    Since $t$ is affine, this is a Clifford phase polynomial.
    Letting $D'$ be the corresponding diagonal Clifford operator,
    we obtain
    \begin{align}
        P_sE\Pi_A\ket{\psi}
        \propto X(e)D'\Pi_{A'}\ket{\psi'}.
    \end{align}
    This preserves the required form.

    Finally, we consider an outcome-dependent DCP correction.
    Let $C_{\mathrm{n}}$ and $C_{\mathrm{i}}$ be the corrections selected by the noisy record and the ideal reference record, respectively.
    Then
    \begin{align}
        C_{\mathrm{n}}E\Pi_A\ket{\psi}
        =
        \bigl(C_{\mathrm{n}}EC_{\mathrm{i}}^\dagger\bigr)
        \bigl(C_{\mathrm{i}}\Pi_A C_{\mathrm{i}}^\dagger\bigr)
        C_{\mathrm{i}}\ket{\psi}.
    \end{align}
    The first factor is DCP.
    Since $C_{\mathrm{i}}$ translates computational-basis labels by a fixed vector, the second factor is the projector onto the corresponding affine translate of $A$, which lies within $\supp(C_{\mathrm{i}}\psi)$.
    Moreover, $C_{\mathrm{i}}\ket{\psi}$ is the ideal branch after its prescribed correction.
    Thus, outcome-dependent DCP corrections preserve the induction invariant.

    Every noisy $X$ measurement is therefore case (1), (2a) or (2b).
\end{proof}

To apply this theorem, it therefore suffices to verify that every ideal $X$ measurement is deterministic or case (1), for example by identifying an anticommuting $Z$-type stabilizer. No separate compatibility analysis is needed for each noisy branch. The restriction on feedforward is essential: the theorem allows outcome-dependent DCP corrections, but does not assert noise robustness for arbitrary adaptive choices of CNOT-dihedral gates or measured observables.

We now illustrate this criterion in several quantum error-correction protocols.

\subsection{Examples}

We now apply the preceding results to several classes of quantum error correction circuits.

\subsubsection{Magic state distillation and non-Clifford computation}

Logical CNOT-dihedral gates implemented by physical CNOT-dihedral circuits, interleaved with CSS stabilizer checks, directly satisfy the hypotheses of \cref{thm:iterative-measurements}: each ideal gate preserves the code space, so the subsequent checks are deterministic.
This covers magic state distillation based on (generalized) triorthogonal codes, including transversal and quasitransversal constructions with diagonal Clifford corrections \cite{bravyi2005universal,bravyi2012magic, campbell2017unified, haah2018codes, scruby2025quantum, vuillot2022quantum, zhu2025nonclifford}.
It also covers Litinski's reformulation of these protocols as sequences of $Z$-type $\pi/8$ rotations followed by $X$ readouts \cite{litinski2019magic}, since these rotations are CNOT-dihedral and the syndrome readouts are ideally deterministic.
The same argument applies to recent code constructions supporting logical non-Clifford gates through constant-depth physical circuits of $Z$, $CZ$, and $CCZ$ gates, including algebraic geometry constructions \cite{golowich2024asymptotically, golowich2024quantum, he2025asymptotically, nguyen2024good, wills2024constantoverhead, wills2026improved, zhu2025nonclifford} and quantum LDPC constructions based on cup products \cite{bauer2026planar, breuckmann2026cups, davydova2026universal, fuente2026highthreshold, tiew2026copycup}.
These gates can be used either for magic state preparation or for direct non-Clifford computation: in both settings, subsequent CSS stabilizer checks are ideally deterministic, so \cref{thm:iterative-measurements} applies.

\subsubsection{Clifford checks, cultivation, and non-Pauli codes}

The preceding examples rely on Pauli stabilizer checks that are deterministic in the ideal circuit.
More generally, checking non-Pauli Clifford observables provides a way to detect errors in magic states, as exploited in distillation protocols based on measuring Clifford checks \cite{jones2013multilevel, knill2004faulttolerant, meier2012magicstate,haah2017magic,haah2018codes} and magic state cultivation \cite{chamberland2020very, gidney2024magic, vaknin2026high}.
We now show how that the same deterministic-check argument extends to DCP observables.

Let $M$ be a Hermitian DCP operator satisfying $M^2=I$, and suppose $M\ket{\psi}=\epsilon_0\ket{\psi}$.
Preparing an ancilla in $\ket{+}$ and applying controlled-$M$ gives
\begin{align}
    \frac{\ket{0}\ket{\psi}+\ket{1}M\ket{\psi}}{\sqrt{2}}
    =
    \frac{\ket{0}+\epsilon_0\ket{1}}{\sqrt{2}}\otimes\ket{\psi},
\end{align}
so the ancilla's $X$ readout is deterministic.
Moreover, controlled-$M$ admits a CNOT-dihedral realization: controlling $X$, $S$, and $CZ$ gives CNOT, $CS$, and $CCZ$, respectively, while controlling an eighth-root global phase gives a power of $T$.
Thus, sequences of such ideally deterministic checks satisfy \cref{thm:iterative-measurements}.

For example, the magic state $\ket{T}=T\ket{+}$ satisfies $W_T\ket{T}=\ket{T}$ with
\begin{align}
    W_T=TXT^\dagger=\frac{X+Y}{\sqrt{2}}=\omega_8^{-1}SX.
\end{align}
A controlled-$W_T$ circuit therefore checks a deterministic Clifford observable.
Similarly, the $\ket{CCZ}$ state is stabilized by the DCP operators $X_1CZ_{23}$ and their permutations.
The same reasoning applies to encoded checks whenever their physical implementations have the controlled-DCP form above.
This provides a compatibility certificate for such checks in distillation and cultivation, while preparation, growth, and any additional resource-consumption measurements must also satisfy the compatibility conditions.

The argument also applies to syndrome extraction in non-Pauli codes whose stabilizers are Hermitian DCP operators, such as products of $X$ operators decorated with $CZ$ gates.
These checks are deterministic on ideal code states, even when the stabilizer generators do not commute on the full Hilbert space.
Such codes underlie recent protocols for direct non-Clifford computation in two dimensions \cite{davydova2026universal}, including syndrome-extraction circuits with just-in-time decoding \cite{bauer2026planar, davydova2026universal, fuente2026highthreshold}.

Not every Clifford-check implementation has this form.
In particular, distillation and cultivation protocols using controlled-Hadamard or controlled-SWAP operations \cite{chamberland2020very, jones2013multilevel, knill2004faulttolerant, meier2012magicstate} are not directly covered by the argument above.
A change of basis or an alternative implementation may bring a check into DCP form, but compatibility must then be established for the entire transformed circuit.

\subsubsection{Gauge fixing, code switching, and teleportation}

Consider a CSS subsystem code whose stabilizer eigenvalues are fixed.
Measuring a complete set of independent $Z$-type gauge operators fixes the gauge qubits in a $Z$ basis.
Each subsequent $X$-type gauge measurement either anticommutes with a currently fixed $Z$-type gauge operator, making it a compatible case (1), or has an outcome determined by the stabilizers and earlier $X$-type gauge measurements.
Thus, successive rounds of $Z$- and $X$-type gauge measurements satisfy \cref{thm:iterative-measurements}, establishing compatibility under DCP faults.

The same reasoning applies to CSS gauge fixing used in code switching \cite{bombin2014gauge} and lattice surgery \cite{horsman2012surface}: newly measured $X$-type checks are compatible whenever they anticommute with a current $Z$ constraint or are already fixed.
For example, a lattice-surgery measurement of $\bar X_A\bar X_B$ is case (1) whenever block $B$ is initialized in $\ket{0}_L$, since the measured operator anticommutes with the fixed constraint $\bar Z_B=+1$.

State injection provides another simple application.
To inject a $T$ gate into a 3PP state $\ket{\psi}$, prepare an ancilla in $\ket{T}=T\ket{+}$, apply a CNOT from the data to the ancilla, and measure the ancilla in the $Z$ basis.
For outcome $s\in\{0,1\}$, the data state is, up to global phase,
\begin{align}
    \ket{\psi_s}=
    \begin{cases}
        T\ket{\psi}, & s=0,\\
        T^\dagger\ket{\psi}, & s=1.
    \end{cases}
\end{align}
Applying the correction $S^s$ therefore yields $T\ket{\psi}$ in both branches.
The entire gadget uses only a 3PP ancilla, a CNOT, a $Z$ measurement, and a diagonal Clifford correction, so it preserves the 3PP representation under DCP faults.
Encoded implementations using lattice surgery additionally require checking the compatibility of the physical measurements realizing the logical operations.

\subsubsection{Cat-state syndrome extraction}

The physical readouts of a deterministic check can have non-deterministic intermediate measurements, as is the case for instance in Shor's style syndrome extraction \cite{gottesman2009introduction, shor1997faulttolerant}. Consider measuring a weight-$w$
$X$ stabilizer using
\begin{align}
    \ket{\mathrm{cat}_w}
    =\frac{\ket{0}^{\otimes w}+\ket{1}^{\otimes w}}{\sqrt{2}},
\end{align}
with each cat qubit controlling the corresponding $X$ stabilizer measurement, followed by individual $X$ readouts of the cat.

The coupling preserves the cat's $Z_iZ_j$ constraints.
Before each of the first $w-1$ readouts, a constraint
$Z_iZ_j$ involving the measured qubit and an unread qubit
therefore certifies case~(1). The product of all readout
signs equals the measured syndrome. On an ideal code state,
this syndrome is fixed, so the final readout is
deterministic conditional on the preceding outcomes.
\Cref{thm:iterative-measurements} consequently covers noisy cat-state extraction,
even though most individual ancilla outcomes are random.

\section{Benchmarking results}
\label{sec:benchmarking-results}

\newcommand*{\micros}[1]{\num[round-precision=2,round-mode=places]{\fpeval{1000000*(#1)}}}
\newcommand*{\millis}[1]{\num[round-precision=2,round-mode=places]{\fpeval{1000*(#1)}}}

\begin{figure*}
    \includegraphics[width=0.7\linewidth]{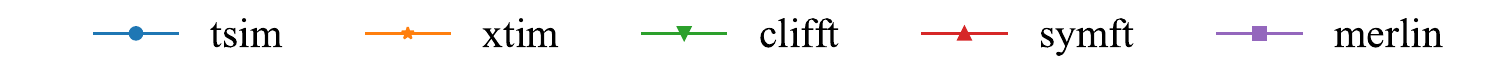} \\[-5pt]
    \begin{subfigure}[b]{0.32\textwidth}
        \includegraphics[width=\linewidth]{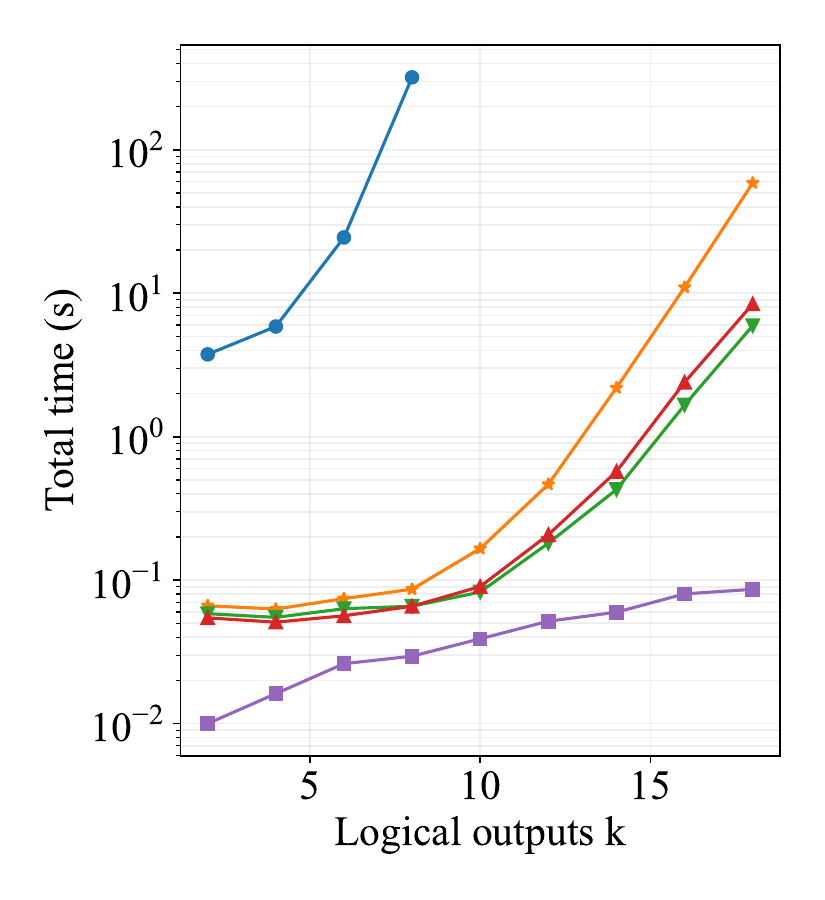}
        \vspace{-20pt}
        \subcaption{} \label{fig:bh-total-time}
    \end{subfigure}
    \begin{subfigure}[b]{0.32\textwidth}
        \includegraphics[width=\linewidth]{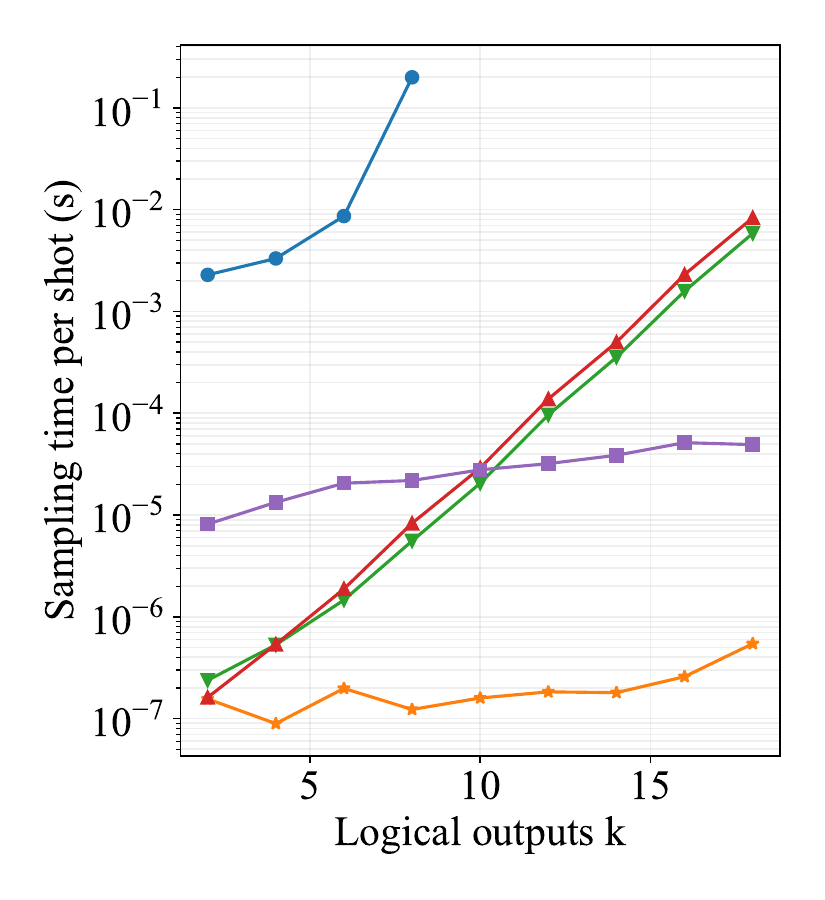}
        \vspace{-20pt}
        \subcaption{} \label{fig:bh-sample-time}
    \end{subfigure}
    \begin{subfigure}[b]{0.32\textwidth}
        \includegraphics[width=\linewidth]{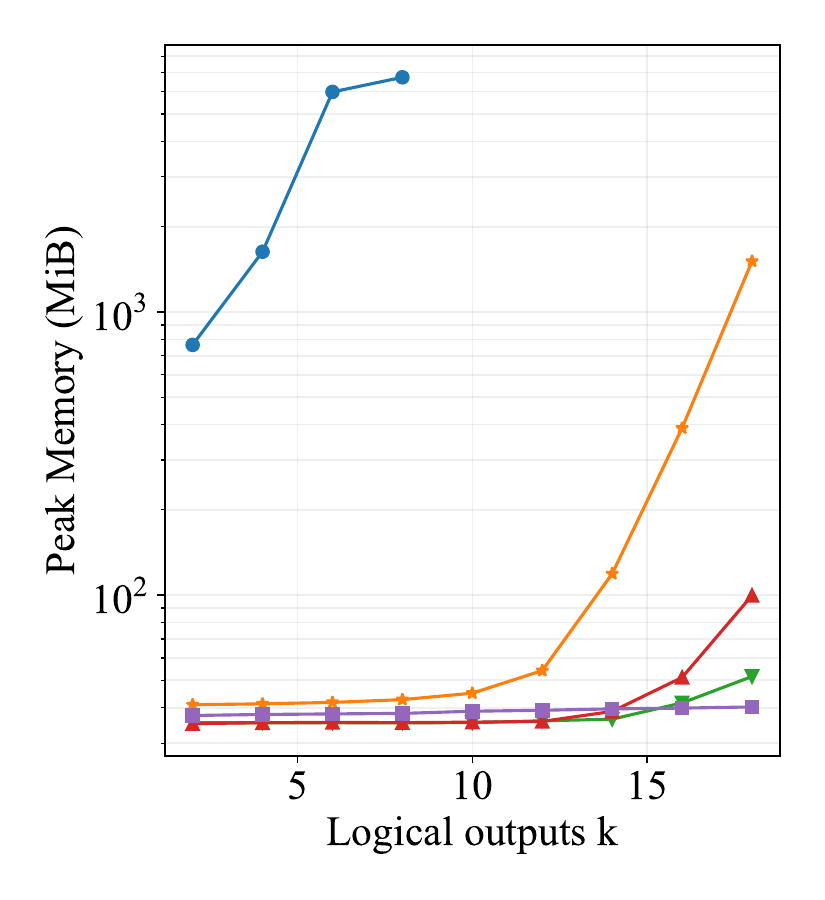}
        \vspace{-20pt}
        \subcaption{} \label{fig:bh-memory}
    \end{subfigure}
    \caption{
        Comparison of different simulators for 1000 attempted shots of the Bravyi-Haah $3k+8$ to $k$ magic state distillation protocol \cite{bravyi2012magic} for increasing numbers of logical qubits $k$.
        (\subref{fig:bh-total-time})~Total runtime, including pre-processing.
        (\subref{fig:bh-sample-time})~Average time per shot, excluding pre-processing.
        (\subref{fig:bh-memory})~Peak memory consumption.
    }
    \label{fig:bh}
\end{figure*}

\setlength{\tabcolsep}{5pt}
\begin{table}[t]
    \centering
    \textbf{Total runtime (including pre-processing) in \qty{}{\milli\second}}
    \begin{tabular}{lrrrrr}
        \toprule
        \textbf{Circuit} &
        \texttt{tsim} &
        \texttt{xtim} &
        \texttt{clifft} &
        \texttt{symft} &
        \texttt{merlin} \\
        \midrule
        15-to-1  & \millis{5.029947541002912} & \millis{0.08585524999944028} & \millis{0.09955558299770928} & \millis{0.057665624997753184} & \textbf{\millis{0.012521416996605694}} \\
        Cultiv-3 & OOM & \millis{0.06529049999517156} & \millis{0.05429354199804948} & \millis{0.06362920799801941} & \textbf{\millis{0.02898095799901057}} \\
        Cultiv-5 & OOM & \millis{0.2535097500003758} & \textbf{\millis{0.067934000002424}} & \millis{0.5772002910016454} & \millis{0.20201220999661018} \\
        Switch-3 & OOM & \millis{2.817529874999309} & OOM & ERR & \textbf{\millis{0.5736940430033428}} \\
        Switch-5 & OOM & OOM & OOM & ERR & \textbf{\millis{6.959127001005982}} \\
        \bottomrule
    \end{tabular}

    \bigskip
    \textbf{Time per shot (excluding pre-processing) in \qty{}{\micro\second}}
    \begin{tabular}{lrrrrr}
        \toprule
        \textbf{Circuit} &
        \textbf{tsim} &
        \textbf{xtim} &
        \textbf{clifft} &
        \textbf{symft} &
        \textbf{merlin} \\
        \midrule
        15-to-1  & \micros{0.0034468530410013045} & \micros{4.025839989481028e-07} & \micros{3.707920004671905e-07} & \textbf{\micros{1.8404200091026724e-07}} & \micros{9.620625001844019e-06} \\
        Cultiv-3 & OOM & \micros{3.898329996445682e-07} & \micros{5.06666001456324e-07} & \textbf{\micros{3.4187499841209503e-07}} & \micros{2.8341749999526654e-05} \\
        Cultiv-5 & OOM & \micros{7.704665997152915e-06} & \micros{7.97404200056917e-06} & \textbf{\micros{6.517500001791632e-06}} & \micros{0.00020014883399926476} \\
        Switch-3 & OOM & \textbf{\micros{2.652375002071494e-06}} & OOM & ERR & \micros{0.0005499323750009353} \\
        Switch-5 & OOM & OOM & OOM & ERR & \textbf{\micros{0.006704935917001421}} \\
        \bottomrule
    \end{tabular}

    \medskip
    \raggedright
    \footnotesize
    \quad OOM: Simulator tried to allocate more than 20GB of memory. \\
    \quad ERR: Simulator failed with a runtime error.

    \caption{
        Simulation time for 1000 attempted shots of the 15-to-1 magic state distillation \cite{bravyi2005universal}, magic state cultivation without the escape stage \cite{gidney2024magic} and the code switching protocol from Ref. \cite{li2026transversal}.
    }
    \label{tab:other-circuits}
\end{table}

We provide an open-source implementation of our simulator, which we call \texttt{merlin} \cite{koch2026merlin}.
In this section, we benchmark \texttt{merlin} against existing non-Clifford simulation libraries on a range of magic-state preparation circuits.
Concretely, we compare against Tsim~\cite{haenel2026tsim}, Clifft~\cite{chase2026clifft}, SymFT~\cite{fang2026symft}, and Xtim~\cite{surti2026efficient}.
All benchmarks use a noise parameter $p=10^{-3}$ and 1000 attempted shots per circuit.
The cultivation benchmarks exclude the escape stage.

While these simulators each use different representations and algorithms, they are all particularly effective when the number of non-stabilizer degrees of freedom remains small.
This is a favorable regime for many existing magic-state preparation protocols.
At the logical level, the target states often only represent a small number of magic states $k$, with $k = 1$ being especially simple.
These simulators exploit the fact that this simplicity can persist after the protocol is expanded into a noisy physical circuit.
However, their cost can increase rapidly as the amount of logical magic grows.
In contrast, our approach does not require $k$ to remain fixed or small.
Our state representation and update rules have polynomial complexity in the physical system size and circuit size, and therefore introduce no exponential dependence on the number of logical magic states $k$.

Figure~\ref{fig:bh} illustrates this difference using the Bravyi-Haah $3k+8$ to $k$ magic state distillation protocol \cite{bravyi2012magic} for increasing numbers of logical output magic states $k$.
The other simulators exhibit exponentially increasing runtime and/or memory consumption as $k$ grows, consistent with the growth of the non-stabilizer representations that they maintain.
In contrast, the runtime and memory consumption of \texttt{merlin} grow smoothly with $k$, reflecting the polynomial scaling of our representation.

Table~\ref{tab:other-circuits} reports results for the 15-to-1 magic state distillation~\cite{bravyi2005universal}, magic-state cultivation~\cite{gidney2024magic} and the code switching protocol for $CCZ$ magic states introduced in Ref. \cite{li2026transversal}.
The distillation and cultivation circuits have previously been used as benchmark cases for non-Clifford simulators~\cite{chase2026clifft,fang2026symft,surti2026efficient}.
They each prepare a single logical magic state and therefore lie in the $k=1$ regime, where Xtim, Clifft, and SymFT achieve substantially lower per-shot runtimes than \texttt{merlin}.
For the code switching benchmarks, we use the $[\![27,3,3]\!]$ and $[\![81,3,5]\!]$ BT code constructions of \cite{li2026transversal} which both encode $k = 3$ logical qubits.
On these circuits, the other simulators' techniques to keep their non-stabilizer representations compact appear to be less effective.
Only Xtim completes the distance-3 simulation, while all other simulator-circuit combinations run out of memory or fail with a runtime error.

These examples highlight the complementary regime targeted by our approach:
while specialized methods can have smaller constant factors when the effective non-stabilizer representation remains small, \texttt{merlin} can handle circuits for which such representations cannot be found or become prohibitively large.

\section{Universal simulation}
\label{sec:universal-simulation}

To simulate circuits that leave the DCP state class, we represent the evolving state as a linear combination of DCP states. The analogous construction using Pauli stabilizer states is quantified by stabilizer rank.

\begin{definition}[Stabiliser rank \cite{bravyi2016improved}]\label{def:stabilizer-rank}
    The \emph{stabilizer rank} $\chi(\psi)$ of a state $\ket{\psi}$ is the smallest integer $\chi$ such that $\ket{\psi}$ can be written as a linear combination $\ket{\psi} = \sum_{i=1}^\chi c_i\ket{\varphi_i}$ of stabiliser states $\ket{\varphi_i}$ with complex coefficients $c_i$.
\end{definition}

The notion of stabiliser rank easily generalises to \emph{DCP rank}:

\begin{definition}[DCP rank]\label{def:dcp-rank}
    The \emph{DCP rank} $\zeta(\psi)$ of a state $\ket{\psi}$ is the smallest integer $\zeta$ such that $\ket{\psi}$ can be written as a linear combination $\ket{\psi} = \sum_{i=1}^\zeta c_i\ket{\varphi_i}$ of DCP states $\ket{\varphi_i}$ with complex coefficients $c_i$.
\end{definition}

Since stabiliser states are DCP states, we trivially have $\zeta(\psi) \leq \chi(\psi)$.
Furthermore, if $\ket{\psi}$ is prepared by a $\CNOT+\T+\Hgate$ circuit with $h$ Hadamards, then we trivially have $\zeta(\psi) \leq 2^h$.
This is because each Hadamard gate can be decomposed into $\Hgate = (X + Z) / \sqrt 2$, leaving only gates that preserve DCP states.
In general, we expect $\zeta$ to scale exponentially, however there might be decomposition strategies that achieve better scaling than this naive approach.
We leave it to future work to investigate this question and improve upon the bounds given here.

Given an explicit decomposition
\begin{align}
    \ket{\psi}=\sum_{j=1}^{R}c_j\ket{\phi_j}
\end{align}
into DCP states, we can compute any computational-basis amplitude by evaluating the amplitude of each term and summing the contributions.
By \cref{thm:efficient-simulation}, this requires $O(n^3R)$ arithmetic operations.
Here $R$ denotes the number of terms in the supplied decomposition, which need not achieve the minimum DCP rank $\zeta(\psi)$.
The cost of constructing the decomposition must also be accounted for.
Throughout, the global-phase convention of each DCP term must be tracked consistently with its coefficient $c_j$.
Weak simulation, that is, sampling computational-basis outcomes, requires an additional argument.
Computing marginal probabilities directly introduces $R^2$ cross terms involving overlaps of projected DCP states, which our polynomial-time algorithms do not generally evaluate.
Several approaches have been proposed in the stabilizer-decomposition literature to avoid or mitigate analogous difficulties.
For example, Refs.~\cite{bravyi2016improved, bravyi2019simulation} approximate marginal probabilities via a norm estimation technique based on overlaps with carefully chosen random stabiliser states.
Ref.~\cite{bravyi2019simulation} also proposed a Metropolis-Hastings Markov Chain Monte Carlo method for approximate sampling.
Finally, Bravyi and Liu \cite{bravyi2022how} introduced the \emph{gate-by-gate} algorithm for exact sampling without computing marginals.
The gate-by-gate algorithm applies directly because it requires only computational-basis amplitudes of circuit prefixes.
We describe this construction below.

Suppose $\ket{\psi} = C\ket{0}$ is prepared by some $\CNOT+\T+\Hgate$ circuit $C = U_m \cdots U_1$ made up of gates $U_i$.
Our goal is to sample from the distribution $P(x) := |\bra{x}C\ket{0}|^2$.
Let $C_k = U_k \cdots U_1$ be the subcircuit containing only the first $k$ gates and define the output distribution at time step $k$ as $P_k(x) := |\bra{x}C_k\ket{0}|^2$.
The idea behind gate-by-gate sampling is to start with a sample from the trivial distribution $P_0$ (i.e. the all-zero bitstring) and update it one gate at a time until we arrive at a sample that is distributed according to $P_m = P$.
This is because adjacent distributions $P_{k-1}$ and $P_k$ are closely related since they only differ by a single gate $U_k$.
Suppose $U_k$ is a $\T$ gate on qubit $i$, then
\[ P_k(x) = |\bra{x}\T_iC_{k-1}\ket{0}|^2 = |\bra{x}C_{k-1}\ket{0}|^2 = P_{k-1}(x), \]
i.e. the distribution stays the same.
For a $\CNOT$ between qubits $i$ and $j$ we have
\begin{align}
    \begin{split}
        P_k(x) &= |\bra{x}\CNOT_{ij}C_{k-1}\ket{0}|^2 \\
        &= |\bra{x \oplus x_ie_j}C_{k-1}\ket{0}|^2 \\
        &= P_{k-1}(x \oplus x_ie_j),
    \end{split}
\end{align}
so a sample $x \sim P_{k-1}$ can be updated to $x' := x \oplus x_ie_j \sim P_k$.
Applying a Hadamard on the other hand will require some more work.
Without loss of generality, suppose $U_k$ is a Hadamard on the first qubit.
Writing $y=(x_2,\ldots,x_n)$, we have
\begin{align}
    P_k(x_1,y)
    &=P_k(x_1\mid y)P_k(y)\\
    &=P_k(x_1\mid y)P_{k-1}(y)\\
    &=\frac{P_k(x_1,y)}{P_k(0,y)+P_k(1,y)}P_{k-1}(y),
\end{align}
whenever $P_{k-1}(y)>0$.
Here $P_k(y)=P_{k-1}(y)$ because the Hadamard acts only on the first qubit.

Thus, given a sample $(x_1,y)\sim P_{k-1}$, we retain $y$ and draw a new first bit according to
\begin{align}
    \Pr(x'_1=b\mid y)
    =\frac{P_k(b,y)}{P_k(0,y)+P_k(1,y)},
    \qquad b\in\{0,1\}.
\end{align}
The denominator is positive for every $y$ sampled with nonzero probability.
Each Hadamard therefore requires two computational-basis amplitude evaluations, whose squared magnitudes give the required probabilities.
A circuit containing $h$ Hadamards consequently requires $2h$ such evaluations.
Let $R_j$ be the number of terms in a DCP decomposition of the state immediately after the $j$th Hadamard.
Excluding the cost of constructing these decompositions and sampling from the resulting probabilities, the amplitude evaluations required for one sample take
\begin{align}
    O\left(n^3\sum_{j=1}^{h}R_j\right)
\end{align}
arithmetic operations.
When using the previously discussed decomposition $H=(X+Z)/\sqrt{2}$, we get $R_j := 2^j$ such that
\begin{align}
    \sum_{j=1}^{h}R_j
    =\sum_{j=1}^{h}2^j
    =2^{h+1}-2
    = O(R_h).
\end{align}
In other words, the cost of all intermediate amplitude evaluations is asymptotically dominated by the cost of evaluating the final state with $R_h$ terms.
This means that gate-by-gate sampling matches the complexity of strong simulation in that setting.
The same conclusion holds for any decomposition strategy whose prefix term counts satisfy $\sum_{j=1}^{h}R_j=O(R_h)$.

\section{Discussion}
\label{sec:discussion}

We have developed an exact polynomial-time simulation framework for a broad class of non-Clifford circuits, with no restriction on the number of logical qubits or non-Clifford gates.
It rests on two equivalent descriptions: third-order phase-polynomial states, specified by affine supports and polynomial phases, and DCP stabilizer states, which extend the stabilizer picture beyond Pauli operators and yield a tableau-based simulator.
We characterize precisely which Pauli measurements preserve this state class and provide efficient tests and update rules supporting intermediate measurements and classical feedforward.

For quantum error correction, a simple condition on the ideal circuit guarantees compatibility under arbitrary inserted DCP faults: each ideal $X$ measurement must either be deterministic or anticommute with a fixed $Z$-type constraint.
This remains sufficient even when noise makes deterministic outcomes random, establishing efficient simulation for protocols involving magic state preparation, non-Pauli checks, and CSS gauge fixing.

Our open-source implementation, \texttt{merlin}~\cite{koch2026merlin}, makes this framework available for circuit-level simulation.
Benchmarks show favorable runtime and memory scaling with the number of logical outputs in Bravyi-Haah distillation, and successful simulation of the distance-5 code switching circuit on which all other tested simulators exhaust memory or fail.
On the other hand, existing methods currently achieve lower per-shot runtimes on the single-output distillation and cultivation examples.

A natural next step is to use this framework to evaluate recent proposals for non-Clifford computation with quantum LDPC codes, including transversal gates, code switching, and high-rate magic state preparation, whenever their circuit implementations satisfy our compatibility conditions.
Combined with suitable decoders, our simulator could enable direct comparisons of logical error rates, acceptance probabilities, and resource overheads at system sizes beyond the reach of existing methods.

An important direction is to extend the range of supported measurement gadgets.
Some magic state distillation protocols based on gate injection, such as the parity-checking constructions of Campbell and Howard~\cite{campbell2018magic}, fall outside the sufficient conditions of  \cref{thm:iterative-measurements} and therefore require a separate compatibility analysis.
Extending our guarantees to cover these protocols would broaden the scope of the simulator.
Another possibility is to describe states in a rotated local Clifford basis, potentially bringing Hadamard-based cultivation within the framework \cite{chamberland2020very,sahay2026foldtransversal}.
Although such a basis change can map the Hadamard observable to a DCP operator, compatibility must also be established for the transformed preparations, gates, and measurements throughout the protocol.

Finally, the DCP decompositions introduced here suggest a route to circuits that leave the 3PP state class: understanding which fault-tolerant protocols admit small DCP decompositions could extend the practical scope of the simulator beyond the polynomial-time regime established in this work.

\begin{acknowledgments}
    Ș.C. is supported by a Department of Computer Science Scholarship from the University of Oxford.
    M.K. is supported by the Clarendon Fund Scholarship and the Sloane Robinson Scholarship.
    AP is supported by the Engineering and Physical Sciences Research Council (EP/S021582/1) and PASQUOPS (BMFTR).

    Ș.C. would like to thank Selwyn Simsek for inspiring conversations, especially for pointing out the fact that DCP operators are invariant under conjugation by $\CNOT+\T$ circuits.

    \vspace{\baselineskip}
    \paragraph*{AI usage.}
    AI tools were used to edit the text, assist with proofs, and write significant parts of the software package.
    The simulation framework, the equivalence between DCP stabilizer and phase-polynomial representations, and the measurement cases motivating the QEC applications were developed by the authors.
    AI tools helped complete the classification of compatible measurements and refine the proofs.
    The authors take full responsibility for the results, text, and software.
\end{acknowledgments}

\onecolumngrid
\appendix
\crefalias{section}{appendix}
\renewcommand{\theequation}{\Alph{section}.\arabic{equation}}
\numberwithin{theorem}{section}
\numberwithin{proposition}{section}
\numberwithin{lemma}{section}
\numberwithin{corollary}{section}
\numberwithin{conjecture}{section}
\numberwithin{assumption}{section}
\numberwithin{definition}{section}
\numberwithin{example}{section}
\numberwithin{notation}{section}
\numberwithin{remark}{section}
\section{Phase polynomials}\label{app:phase-polynomials}

This appendix collects the phase-polynomial facts used throughout the paper.
For $A\subseteq [n]$, write $x_A=\prod_{i\in A}x_i$, and regard the
binary inner product $\langle u,x\rangle$ as an element of
\(\{0,1\}\subseteq\Z_{2^k}\).

\begin{theorem}[Equivalent forms of phase polynomials]
    \label{thm:phase-polynomial-equivalence}
    Let $k\geq 1$.  A function $q:\F_2^n\to\Z_{2^k}$ can be written in
    the weighted form
    \begin{equation}
        \label{eq:weighted-phase-polynomial}
        q(x)=c_\varnothing+
        \sum_{\substack{\varnothing\neq A\subseteq[n]\\ |A|\leq k}}
        2^{|A|-1}c_Ax_A
        \pmod {2^k}
    \end{equation}
    if and only if it can be written as a sum of parity functions,
    \begin{equation}
        \label{eq:phase-polynomial-sum-parities}
        q(x)=a_0+
        \sum_{u\in\F_2^n\setminus\{0\}}a_u\langle u,x\rangle
        \pmod {2^k}.
    \end{equation}
    All coefficients may be taken in \(\Z_{2^k}\), although a coefficient
    $c_A$ with $A\neq\varnothing$ only matters modulo
    $2^{k-|A|+1}$.  In particular, for $k=3$, the weighted form is
    \[
        c_\varnothing+\sum_i c_i x_i
        +2\sum_{i<j}c_{ij}x_ix_j
        +4\sum_{i<j<\ell}c_{ij\ell}x_ix_jx_\ell
        \pmod 8.
    \]
\end{theorem}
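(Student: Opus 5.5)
Both directions rest on one identity relating parities to products in $\Z_{2^k}$. For $u\neq0$ with support $U$, the parity $\langle u,x\rangle\in\{0,1\}$ equals the integer $\bigoplus_{i\in U}x_i$. Repeatedly applying $a\oplus b=a+b-2ab$ (valid over $\Z$ for $a,b\in\{0,1\}$) gives the inclusion--exclusion expansion
\begin{align}
    \langle u,x\rangle=\sum_{\varnothing\neq A\subseteq U}(-2)^{|A|-1}x_A .
\end{align}
One can check this by induction on $|U|$, or by noting that $1-2\langle u,x\rangle=\prod_{i\in U}(1-2x_i)$ and expanding the right-hand side. Reducing modulo $2^k$ kills every term with $|A|>k$. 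Each surviving term has the form $2^{|A|-1}c_Ax_A$ with $c_A=(-1)^{|A|-1}$. Multiplying by $a_u$ and summing over $u$ therefore puts any parity-sum expression into the weighted form \eqref{eq:weighted-phase-polynomial}. This proves the direction from \eqref{eq:phase-polynomial-sum-parities} to \eqref{eq:weighted-phase-polynomial}.

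For the converse, I will show that each weighted monomial $2^{|A|-1}x_A$ with $1\le|A|\le k$ is a $\Z_{2^k}$-combination of parities. I will use the same product identity in inverted form. Set $y_i=1-2x_i=(-1)^{x_i}$. Then $x_A=\prod_{i\in A}(1-y_i)/2$, which gives
\begin{align}
    2^{|A|}x_A=\sum_{B\subseteq A}(-1)^{|B|}\prod_{i\in B}y_i=\sum_{B\subseteq A}(-1)^{|B|}\bigl(1-2\langle e_B,x\rangle\bigr),
\end{align}
where $e_B=\sum_{i\in B}e_i$. The constant terms $\sum_B(-1)^{|B|}$ vanish because $A\neq\varnothing$. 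Hence the identity $2^{|A|}x_A=-2\sum_{\varnothing\neq B\subseteq A}(-1)^{|B|}\langle e_B,x\rangle$ holds over $\Z$. Both sides are divisible by $2$, so I divide by $2$ \emph{as integers}, before any reduction, to get
\begin{align}
    2^{|A|-1}x_A=\sum_{\varnothing\neq B\subseteq A}(-1)^{|B|-1}\langle e_B,x\rangle .
\end{align}
Reducing this modulo $2^k$ and multiplying by $c_A$ expresses each weighted term as a parity sum. Summing over $A$ gives the converse direction.

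I expect the only delicate point to be this division by $2$. It must be carried out in $\Z$ rather than in $\Z_{2^k}$, where $2$ is not invertible. This is why I will state the identity over the integers before reducing. Alternatively, I can verify the displayed formula directly: the number of $B\subseteq A$ with $B\cap S$ of odd size, counted with sign, vanishes unless $S\supseteq A$, by the usual binomial cancellation.

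The remaining claim is that $c_A$ only matters modulo $2^{k-|A|+1}$. This is immediate: changing $c_A$ by a multiple of $2^{k-|A|+1}$ changes $2^{|A|-1}c_Ax_A$ by a multiple of $2^k$. Specializing to $k=3$ gives the displayed cubic form, with $c_{ij}\in\Z_4$ and $c_{ij\ell}\in\Z_2$, matching \Cref{def:third-order-phase-polynomial}.
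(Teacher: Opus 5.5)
Your proposal is correct and follows the paper's proof. Both directions rest on the same two integer identities, $\langle u,x\rangle=\sum_{\varnothing\neq A\subseteq U}(-2)^{|A|-1}x_A$ and $2^{|A|-1}x_A=\sum_{\varnothing\neq B\subseteq A}(-1)^{|B|-1}\langle e_B,x\rangle$; the paper only cites repeated use of $a\oplus b=a+b-2ab$ and Boolean inclusion--exclusion, whereas you derive them via the substitution $1-2x_i=(-1)^{x_i}$, and you also supply the one-line check that $c_A$ only matters modulo $2^{k-|A|+1}$, which the paper does not prove.
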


\begin{proof}
    Repeatedly applying $a\oplus b=a+b-2ab$ gives, for every nonempty
    $B\subseteq[n]$,
    \begin{equation}
        \label{eq:parity-to-weighted}
        \bigoplus_{i\in B}x_i
        =\sum_{\varnothing\neq A\subseteq B}
        (-2)^{|A|-1}x_A.
    \end{equation}
    Modulo $2^k$, all terms with $|A|>k$ vanish.  Applying
    \cref{eq:parity-to-weighted} to each term in
    \cref{eq:phase-polynomial-sum-parities} therefore produces a weighted
    polynomial of the form \eqref{eq:weighted-phase-polynomial}.

    Conversely, for every nonempty $A\subseteq[n]$, Boolean
    inclusion--exclusion gives
    \begin{equation}
        \label{eq:weighted-to-parity}
        2^{|A|-1}x_A
        =\sum_{\varnothing\neq B\subseteq A}
        (-1)^{|B|-1}
        \bigoplus_{i\in B}x_i.
    \end{equation}
    Each parity on the right is
    $\langle \sum_{i\in B}e_i,x\rangle$, where $e_i$ is the $i$th
    standard basis vector.  Substituting \cref{eq:weighted-to-parity} into
    \cref{eq:weighted-phase-polynomial} yields
    \cref{eq:phase-polynomial-sum-parities}.
\end{proof}

We call the equivalent functions in
\Cref{thm:phase-polynomial-equivalence} \emph{$k$th-order phase
polynomials}.

The parity-sum form makes closure under changes of binary coordinates
immediate.

\begin{corollary}[Affine pullback]
    \label{cor:phase-polynomial-affine-pullback}
    If $q:\F_2^n\to\Z_{2^k}$ is a $k$th-order phase polynomial and
    $\phi(y)=Ay\oplus b$ is an affine map from \(\F_2^m\) to \(\F_2^n\),
    then $q\circ\phi$ is a $k$th-order phase polynomial on \(\F_2^m\).
\end{corollary}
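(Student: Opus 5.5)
We will work with the parity-sum form from \Cref{thm:phase-polynomial-equivalence}. It reduces the claim to a statement about single parity functions, which behave simply under affine substitution.

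\textbf{Expanding $q$.} By \Cref{thm:phase-polynomial-equivalence}, write
\[
q(x)=a_0+\sum_{u\neq 0}a_u\langle u,x\rangle \pmod{2^k},
\]
where each $\langle u,x\rangle$ is viewed as an element of $\{0,1\}\subseteq\Z_{2^k}$. Substituting $x=\phi(y)=Ay\oplus b$ gives
\[
q(\phi(y))=a_0+\sum_{u\neq 0}a_u\,\langle u,Ay\oplus b\rangle.
\]
It therefore suffices to show that each composite $\langle u,Ay\oplus b\rangle$ is a $k$th-order phase polynomial on $\F_2^m$. The class is closed under $\Z_{2^k}$-linear combinations and under adding constants, since the parity-sum form is.

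\textbf{Single parity terms.} Over $\F_2$,
\[
\langle u,Ay\oplus b\rangle=\langle A^{T}u,y\rangle\oplus\langle u,b\rangle.
\]
Set $w=A^{T}u\in\F_2^m$ and $\beta=\langle u,b\rangle\in\F_2$. Lifted to $\{0,1\}\subseteq\Z_{2^k}$, the bit $\beta\oplus t$ equals $t$ if $\beta=0$ and $1-t$ if $\beta=1$. Applying this with $t=\langle w,y\rangle$ gives
\[
\langle u,Ay\oplus b\rangle=\beta+(1-2\beta)\,\langle w,y\rangle \pmod{2^k}.
\]
If $w\neq 0$, this is a constant plus a multiple of a single parity function of $y$. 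If $w=0$, the parity vanishes identically and the term is just the constant $\beta$.

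\textbf{Collecting terms.} Summing over $u$ with the coefficients $a_u$ expresses $q\circ\phi$ in the form \eqref{eq:phase-polynomial-sum-parities} on $\F_2^m$. Distinct $u$ may share the same $w$, in which case their coefficients simply add. Applying \Cref{thm:phase-polynomial-equivalence} in the reverse direction then returns a weighted form of order at most $k$.

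The only real subtlety is the mixed arithmetic: the inner product is computed mod $2$, but the result is lifted to $\Z_{2^k}$. The XOR with the constant $\beta$ must therefore be converted via $\beta\oplus t=\beta+t-2\beta t$, which is linear in $t$ because $\beta$ is a constant. The parity-sum form is what makes this easy. Working with the weighted monomial form directly would require expanding products of affine forms and tracking the $2^{|A|-1}$ weights, which is doable but messier.
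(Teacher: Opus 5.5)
Your proof is correct and matches the paper's argument. Like the paper, you substitute into the parity-sum form and use $\beta\oplus t=\beta+(1-2\beta)t$ to rewrite each $\langle u,Ay\oplus b\rangle$ as $\langle u,b\rangle+(1-2\langle u,b\rangle)\langle A^{\mathsf T}u,y\rangle$; your handling of $A^{\mathsf T}u=0$ and of coinciding $w$'s only spells out details the paper leaves implicit.
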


\begin{proof}
    For $\beta\in\F_2$ and $t\in\{0,1\}$,
    \(\beta\oplus t=\beta+(1-2\beta)t\).  Hence
    \[
        \langle u,Ay\oplus b\rangle
        =\langle u,b\rangle
        +\bigl(1-2\langle u,b\rangle\bigr)
        \langle A^{\mathsf T}u,y\rangle.
    \]
    Substitution into the parity-sum form of
    \Cref{thm:phase-polynomial-equivalence} gives another constant plus a
    sum of parity functions.
\end{proof}

We next record how phase polynomials behave under discrete differentiation.
For $v\in\F_2^n$, recall that
\[
    \Delta_vq(x)=q(x\oplus v)-q(x).
\]

\begin{theorem}[Discrete derivatives of phase polynomials]
    \label{thm:phase-polynomial-derivative}
    Let $q:\F_2^n\to\Z_{2^k}$ be a $k$th-order phase polynomial, with
    $k\geq2$.  For every $v\in\F_2^n$, there are a constant
    $r_v\in\Z_{2^k}$ and a $(k-1)$st-order phase polynomial
    $q_v:\F_2^n\to\Z_{2^{k-1}}$ such that
    \begin{equation}
        \label{eq:general-phase-derivative}
        \Delta_vq(x)=r_v+2q_v(x)\pmod {2^k}.
    \end{equation}
    More generally, an $s$-fold discrete derivative, for $1\leq s<k$,
    is a constant plus $2^s$ times a $(k-s)$th-order phase polynomial;
    every $k$-fold derivative is constant.
\end{theorem}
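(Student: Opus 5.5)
By linearity of $\Delta_v$ in $q$, it suffices to treat a single parity term $a_u\langle u,x\rangle$ in the sum-of-parities form of \Cref{thm:phase-polynomial-equivalence}; constants have zero derivative. The plan is to compute the derivative of each parity explicitly, show it has the required shape, sum the pieces, and then iterate to get the multi-fold statement.

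\textbf{Derivative of one parity.} Set $\beta=\langle u,v\rangle\in\F_2$. Then $\langle u,x\oplus v\rangle=\langle u,x\rangle\oplus\beta$. Using the identity $\beta\oplus t=\beta+(1-2\beta)t$ from the proof of \Cref{cor:phase-polynomial-affine-pullback}, we get
\[
\Delta_v\langle u,x\rangle=\beta-2\beta\langle u,x\rangle \pmod{2^k}.
\]
Multiplying by $a_u$ and summing over $u$ gives
\[
\Delta_vq(x)=r_v+2q_v(x),\qquad r_v=\sum_u a_u\beta_u,\qquad q_v(x)=-\sum_{u:\,\langle u,v\rangle=1}a_u\langle u,x\rangle.
\]
The one subtle point is that $2q_v$ only needs to be known modulo $2^k$, so $q_v$ is determined modulo $2^{k-1}$. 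Read modulo $2^{k-1}$, the expression for $q_v$ is a constant plus a sum of parities. By \Cref{thm:phase-polynomial-equivalence} with $k-1\ge1$, it is therefore a $(k-1)$st-order phase polynomial into $\Z_{2^{k-1}}$. This proves \cref{eq:general-phase-derivative}.

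\textbf{Iterated derivatives.} I would argue by induction on $s$. The derivative of a constant vanishes. For $2^s p$ with $p$ a $(k-s)$th-order phase polynomial, $\Delta_w(2^sp)=2^s\Delta_w p$. For this I need a version of the one-step computation valid down to order $1$: for $k-s\ge1$, the same parity computation gives $\Delta_w p=r+2p'$ with $p'$ of order $k-s-1$ modulo $2^{k-s-1}$. Multiplying by $2^s$ gives a constant plus $2^{s+1}$ times a $(k-s-1)$st-order polynomial. When $k-s-1=0$, the factor $2^{s+1}=2^k$ kills that term modulo $2^k$, which is exactly why every $k$-fold derivative is constant. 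So the hypothesis $k\ge2$ is only needed for the stated single-step form with a genuine order-$(k-1)$ target; the parity computation itself works for any $k\ge1$, and I will use it in that generality during the induction.

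\textbf{Main obstacle.} The only real hazard is the modulus bookkeeping: making sure that $q_v$, defined with coefficients in $\Z_{2^k}$, legitimately reduces to a well-defined function into $\Z_{2^{k-1}}$ of the right order. Stating everything in parity-sum form, where reduction of coefficients is transparent, handles this.
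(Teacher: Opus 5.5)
Your proposal is correct and follows the paper's proof essentially verbatim. Both arguments expand $q$ in parity-sum form and use $\langle u,x\oplus v\rangle-\langle u,x\rangle=\langle u,v\rangle\bigl(1-2\langle u,x\rangle\bigr)$, which gives a constant plus $-2$ times a sum of parities whose coefficients can be reduced modulo $2^{k-1}$, and then iterate. Your explicit induction, including the point that at the final step the factor $2^{k}$ kills the order-zero remainder, spells out what the paper compresses into ``iterating this argument''.
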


\begin{proof}
    Write $q$ in the parity-sum form
    \eqref{eq:phase-polynomial-sum-parities}.  Since
    \[
        \langle u,x\oplus v\rangle-\langle u,x\rangle
        =\begin{cases}
            0, & \langle u,v\rangle=0,\\
            1-2\langle u,x\rangle, & \langle u,v\rangle=1,
        \end{cases}
    \]
    we obtain
    \begin{equation}
        \label{eq:derivative-parity-form}
        \Delta_vq(x)
        =\sum_{\langle u,v\rangle=1}a_u
        -2\sum_{\langle u,v\rangle=1}a_u\langle u,x\rangle
        \pmod {2^k}.
    \end{equation}
    The first sum is constant.  After reducing its coefficients modulo
    $2^{k-1}$, the second sum is a phase polynomial over
    \(\Z_{2^{k-1}}\), proving \cref{eq:general-phase-derivative}.
    Iterating this argument proves the remaining statements.
\end{proof}

For $k=3$, a discrete derivative is therefore a Clifford phase polynomial.
We make the corresponding normal forms explicit.

\begin{theorem}[Clifford phase polynomials]
    \label{thm:clifford-phase-polynomial}
    A function $d:\F_2^n\to\Z_8$ is the phase function of a diagonal
    Clifford operator, up to global phase, if and only if it has either of
    the equivalent forms
    \begin{align}
        d(x)
        &=d_0+2\sum_i \alpha_i x_i
          +4\sum_{i<j}\beta_{ij}x_ix_j \pmod 8,
          \label{eq:clifford-phase-weighted}\\
        &=d_0+2\sum_{u\in\F_2^n\setminus\{0\}}
          b_u\langle u,x\rangle \pmod 8,
          \label{eq:clifford-phase-parity}
    \end{align}
    where \(d_0\in\Z_8\), \(\alpha_i,b_u\in\Z_4\), and
    \(\beta_{ij}\in\F_2\).
\end{theorem}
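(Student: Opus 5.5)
The proof has three parts. First, every diagonal Clifford operator has a phase function of the form \eqref{eq:clifford-phase-weighted}. Second, every such function is realized by a diagonal Clifford operator. Third, the weighted form \eqref{eq:clifford-phase-weighted} and the parity form \eqref{eq:clifford-phase-parity} are equivalent.

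\textbf{Sufficiency.} Given a function of the form \eqref{eq:clifford-phase-weighted}, the operator
\[
\omega_8^{d_0}\prod_i \S_i^{\alpha_i}\prod_{i<j}\CZ_{ij}^{\beta_{ij}}
\]
is diagonal Clifford and has exactly this phase function, because $\S\ket{x}=\omega_8^{2x}\ket{x}$ and $\CZ_{ij}\ket{x}=\omega_8^{4x_ix_j}\ket{x}$.

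\textbf{Necessity.} Let $D$ be a diagonal Clifford unitary, $D\ket{x}=e^{i\theta(x)}\ket{x}$. After removing a global phase we may assume $\theta(0)=0$.

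\begin{itemize}
\item Since $D$ is diagonal, $DZ_jD^\dagger=Z_j$ for every $j$.
\item For each $j$, the operator $P_j:=DX_jD^\dagger$ is a Pauli operator. It maps $\ket{x}$ to $e^{i(\theta(x\oplus e_j)-\theta(x))}\ket{x\oplus e_j}$, so its $X$-part is $X_j$, and therefore $P_j=c_j\,Z(u_j)X_j$ for some $u_j\in\F_2^n$ and some constant $c_j$.
\item Comparing the two descriptions of $P_j$ gives
\[
\theta(x\oplus e_j)-\theta(x)=\phi_j+\pi\langle u_j,x\rangle \pmod{2\pi}.
\]
\item Since $P_j$ is Hermitian, $c_j\in\{\pm1,\pm i\}$, up to the sign coming from $(-1)^{(u_j)_j}$. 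In particular $e^{i\phi_j}$ is a fourth root of unity.
\item Building $x$ from $0$ one coordinate at a time and summing these increments shows that $\theta$ takes values in $\frac{\pi}{4}\Z$, so we may write $\theta=\frac{\pi}{4}d$ with $d:\F_2^n\to\Z_8$.
\end{itemize}

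In this language, every discrete derivative $\Delta_{e_j}d(x)=2a_j+4\langle u_j,x\rangle$ is a constant in $2\Z_8$ plus $4$ times a linear form. Writing $d$ in its unique multilinear expansion over $\Z$ and reducing modulo $8$, this constraint forces:
\begin{itemize}
\item linear coefficients that are even;
\item quadratic coefficients that are multiples of $4$;
\item cubic and higher coefficients that vanish modulo $8$.
\end{itemize}
Each of these follows by taking iterated derivatives. The second derivative $\Delta_{e_i}\Delta_{e_j}d$ must be a constant in $4\Z_8$, and every third derivative must vanish. These are exactly the conditions defining \eqref{eq:clifford-phase-weighted}.

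\textbf{Main obstacle.} The delicate step is pinning down $e^{i\phi_j}$, that is, ensuring $d$ lands in $\Z_8$ with an even linear part. I would handle this by using that a Pauli conjugate of a Hermitian Pauli operator is again Hermitian with a phase in $\{\pm1\}$. The relation $(Z(u_j)X_j)^2=(-1)^{(u_j)_j}$ then fixes $c_j$ up to a fourth root of unity, and $\theta(e_j)-\theta(0)=\phi_j$ has the required form.

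\textbf{Equivalence of the two forms.} This follows the same route as the proof of \Cref{thm:phase-polynomial-equivalence} with $k=2$, then multiplied by $2$. Using \eqref{eq:parity-to-weighted},
\[
2\bigoplus_{i\in B}x_i = 2\sum_i x_i - 4\sum_{i<j}x_ix_j \pmod 8,
\]
with higher terms vanishing modulo $8$. Conversely,
\[
4x_ix_j = 2x_i + 2x_j - 2(x_i\oplus x_j),
\]
which expresses the weighted form in the parity-sum form with coefficients $b_u\in\Z_4$.
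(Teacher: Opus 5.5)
Your proof is correct, and for necessity it takes a genuinely different route from the paper. The paper's argument is essentially one line. It invokes the fact that the diagonal Clifford group is generated, up to global phase, by the $\S_i$ and $\CZ_{ij}$, and reads off their phase functions $2x_i$ and $4x_ix_j$. It then gets the parity form by dividing the nonconstant part by $2$ and applying \Cref{thm:phase-polynomial-equivalence} with $k=2$.

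You instead derive necessity directly from the normalizer property. Requiring $DX_jD^\dagger$ to be a Pauli operator with $X$-part $e_j$ forces $\Delta_{e_j}d(x)=2a_j+4\langle u_j,x\rangle$. Iterated derivatives then give linear coefficients that are even, quadratic coefficients divisible by $4$, and vanishing higher coefficients.

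What your route buys is self-containment. The generation fact the paper takes as known is equivalent to the necessity direction of the statement. Your argument proves it: every diagonal Clifford operator is a global phase times $\prod_i\S_i^{\alpha_i}\prod_{i<j}\CZ_{ij}^{\beta_{ij}}$. The paper's route buys brevity.

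Two points in your sketch should be made explicit.
\begin{enumerate}
\item The coefficient of $x_A$ in the multilinear expansion modulo $8$, for $A=\{a_1,\ldots,a_k\}$, equals $(\Delta_{e_{a_1}}\cdots\Delta_{e_{a_k}}d)(0)$ by M\"obius inversion. This identity is what turns your derivative constraints into coefficient constraints.
\item The fourth-root condition on $e^{i\phi_j}$ follows more directly than through Hermiticity. Either use $(DX_jD^\dagger)^2=I$, or apply the increment relation at both $x$ and $x\oplus e_j$, which gives $2\phi_j\equiv\pi(u_j)_j \pmod{2\pi}$. The increments are then in $\tfrac{\pi}{2}\Z$, not merely $\tfrac{\pi}{4}\Z$.
\end{enumerate}

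Your sufficiency argument and your proof that the two forms are equivalent match the paper's, up to presentation.
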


\begin{proof}
    The diagonal Clifford group is generated, up to global phase, by the
    $\S_i$ and \(\CZ_{ij}\) gates.  Their phase functions are $2x_i$ and
    $4x_ix_j$, respectively, which proves
    \eqref{eq:clifford-phase-weighted}.  Conversely, every polynomial of
    that form is implemented by
    \(\prod_i \S_i^{\alpha_i}\prod_{i<j}\CZ_{ij}^{\beta_{ij}}\).

    Finally, dividing the nonconstant part of
    \eqref{eq:clifford-phase-weighted} by $2$ gives a second-order phase
    polynomial over \(\Z_4\).  Applying
    \Cref{thm:phase-polynomial-equivalence} with $k=2$ proves its
    equivalence with \eqref{eq:clifford-phase-parity}.
\end{proof}

\begin{theorem}[Discrete derivatives of Clifford phase polynomials]
    \label{thm:clifford-phase-derivative}
    Let $d:\F_2^n\to\Z_8$ be a Clifford phase polynomial.  For every
    $v\in\F_2^n$, there are $r_v\in\Z_4$ and a linear function
    \(\ell_v:\F_2^n\to\F_2\) such that
    \begin{equation}
        \label{eq:clifford-phase-derivative}
        \Delta_vd(x)=2r_v+4\ell_v(x)\pmod 8.
    \end{equation}
\end{theorem}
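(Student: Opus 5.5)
The plan is to compute the derivative directly from the weighted form \eqref{eq:clifford-phase-weighted} of \Cref{thm:clifford-phase-polynomial}. Write
\[
    d(x)=d_0+2\sum_i\alpha_ix_i+4\sum_{i<j}\beta_{ij}x_ix_j \pmod 8,
\]
and split $\Delta_v d$ into the contribution of the linear part and that of the quadratic part. The constant $d_0$ cancels. By linearity of $\Delta_v$ over $\Z_8$, it then suffices to handle the single terms $2\alpha_ix_i$ and $4\beta_{ij}x_ix_j$ separately and add the results.

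For the linear terms we use, as in the proof of \Cref{cor:phase-polynomial-affine-pullback}, the identity $x_i\oplus v_i=x_i+v_i-2v_ix_i$ as integers. This gives
\[
    \Delta_v(2\alpha_ix_i)=2\alpha_iv_i-4\alpha_iv_ix_i \pmod 8.
\]
This is a constant $2\alpha_iv_i$ in $2\Z_4$, plus $4$ times the linear Boolean function $\alpha_iv_ix_i \bmod 2$. Only $\alpha_i\bmod 2$ matters here, since the term is multiplied by $4$.

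For the quadratic terms, modulo $8$ the factor $4$ means we only need $(x_i\oplus v_i)(x_j\oplus v_j)-x_ix_j$ modulo $2$. Over $\F_2$ this equals $v_jx_i+v_ix_j+v_iv_j$, so
\[
    \Delta_v(4\beta_{ij}x_ix_j)=4\beta_{ij}(v_iv_j+v_jx_i+v_ix_j)\pmod 8.
\]
This is $4$ times an affine Boolean function. Its constant part $4\beta_{ij}v_iv_j=2\cdot(2\beta_{ij}v_iv_j)$ is absorbed into $2r_v$.

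Summing all contributions, we set
\[
    r_v=\sum_i\alpha_iv_i+2\sum_{i<j}\beta_{ij}v_iv_j\in\Z_4,
\]
\[
    \ell_v(x)=\sum_i\alpha_iv_ix_i+\sum_{i<j}\beta_{ij}(v_jx_i+v_ix_j)\pmod 2.
\]
Here $\ell_v$ is linear with no constant term, which yields \eqref{eq:clifford-phase-derivative}.

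The only delicate point is the bookkeeping between $\Z_8$, $\Z_4$ and $\F_2$. One must check that the sum of terms $4\cdot(\text{bit})$ equals $4$ times the XOR of those bits modulo $8$, so that it can be combined into $4\ell_v(x)$. This holds because $4a+4b\equiv 4(a\oplus b)\pmod 8$, and it is precisely why $\ell_v$ can be taken Boolean and linear rather than integer-valued.

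Alternatively, one could apply \Cref{thm:phase-polynomial-derivative} with $k=3$ to the parity form \eqref{eq:clifford-phase-parity}, where all coefficients are even. There the derivative is $\sum_{\langle u,v\rangle=1}2b_u-4\sum_{\langle u,v\rangle=1}b_u\langle u,x\rangle$, and the second sum reduces modulo $2$ to the single parity $\langle\sum_{\langle u,v\rangle=1}b_uu,\,x\rangle$. We would present the direct computation and mention this as a one-line check.
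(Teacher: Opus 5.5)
Your proof is correct, but it takes a different route from the paper.

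The paper works in the parity-sum form \eqref{eq:clifford-phase-parity}, $d(x)=d_0+2\sum_u b_u\langle u,x\rangle$, and specializes the general identity \eqref{eq:derivative-parity-form} of \Cref{thm:phase-polynomial-derivative}. It takes $r_v=\sum_{\langle u,v\rangle=1}b_u \bmod 4$, and $\ell_v$ is the XOR of the parities $\langle u,x\rangle$ with $\langle u,v\rangle=1$ and $b_u$ odd. This is exactly the alternative check you sketch at the end.

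You instead differentiate monomial by monomial in the weighted form of \Cref{def:clifford-phase-polynomial}. Your bookkeeping is sound: under the factor $4$, only $\alpha_i \bmod 2$ and the quadratic bracket modulo $2$ matter, and $4a+4b\equiv 4(a\oplus b)\pmod 8$.

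The paper's route is shorter given the existing machinery. It presents the result as the $k=3$ instance of the general fact that a discrete derivative lowers the order by one and gains a factor of $2$. However, it relies on the weighted-to-parity conversion in \Cref{thm:clifford-phase-polynomial}.

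Your route is self-contained from the definition. It also gives closed formulas in the coordinates a DCP tableau stores. Viewing $\beta$ as a symmetric zero-diagonal matrix, $r_v=\varphi_{\alpha,\beta}(v)$ and $\ell_v(x)=\langle\alpha\odot v+\beta v,x\rangle \bmod 2$. This is the same computation as \eqref{eq:dcp-phase-translation}. It reproduces the commutator formula of \Cref{prop:dcp-group-ops} for $g_1=X(v)$ and $g_2=D(\alpha,\beta)$, since $\gcomm{X(v)}{D(\alpha,\beta)}$ acts on $\ket{x}$ as $\omega_8^{\Delta_v d(x)}$.
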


\begin{proof}
    Use \cref{eq:clifford-phase-parity} and write
    $d(x)=d_0+2\sum_u b_u\langle u,x\rangle$.
    \cref{eq:derivative-parity-form} gives
    \[
        \Delta_vd(x)
        =2\sum_{\langle u,v\rangle=1}b_u
        -4\sum_{\langle u,v\rangle=1}b_u\langle u,x\rangle
        \pmod 8.
    \]
    Take $r_v=\sum_{\langle u,v\rangle=1}b_u\pmod4$ and
    \[
        \ell_v(x)=
        \bigoplus_{\substack{u:\,\langle u,v\rangle=1\\ b_u\text{ odd}}}
        \langle u,x\rangle.
    \]
    The minus sign is immaterial after multiplication by $4$ modulo $8$,
    and \(\ell_v\) is linear.
\end{proof}

\section{Classification of compatible \texorpdfstring{$X$}{X} measurements}
\label{appendix:x-measurements}

This appendix proves the classification of compatible $X$ measurements and derives the corresponding outcome probabilities and post-measurement states.
The proof separates measurements that enlarge the affine support from those acting within it, whose compatibility is determined by the discrete derivative of the phase polynomial.

\begingroup
\renewcommand{\label}[1]{}
\begin{xmeasurementrestatement}
    \xMeasurementClassificationStatement
\end{xmeasurementrestatement}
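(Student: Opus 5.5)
The plan is to compute the unnormalized post-measurement vector $\ket{\widetilde\psi_\mu}=\tfrac12(1+(-1)^\mu X_a)\ket{\psi}$ directly in the computational basis and split according to whether $e_a\in V$.

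\textbf{Case $e_a\notin V$.} The sets $S$ and $S\oplus e_a$ are disjoint, and $S\cup(S\oplus e_a)=x_0\oplus(V+\langle e_a\rangle)$ is affine. The vector $\ket{\widetilde\psi_\mu}$ has amplitude $\omega_8^{q(x)}/2$ at $x\in S$ and $(-1)^\mu\omega_8^{q(x)}/2$ at $x\oplus e_a$. Both outcomes therefore have norm squared $1/2$. For the phase, pick a linear functional $\lambda$ with $\lambda(V)=0$ and $\lambda(e_a)=1$, and set $t(y)=\lambda(y\oplus x_0)$. Then $q'_\mu(y)=q(y\oplus t(y)e_a)+4\mu t(y)$. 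This is third order: the first term is an affine pullback (\Cref{cor:phase-polynomial-affine-pullback}) of $q$ along $y\mapsto y\oplus t(y)e_a$, and the second term is linear times $4$. So this case is always compatible.

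\textbf{Case $e_a\in V$.} Now $S\oplus e_a=S$, and the amplitude of $\ket{\widetilde\psi_\mu}$ at $x\in S$ is
\[
\tfrac12\,\omega_8^{q(x)}\bigl(1+(-1)^\mu\omega_8^{\Delta_{e_a}q(x)}\bigr).
\]
By \Cref{thm:phase-polynomial-derivative}, $\Delta_{e_a}q=r+2c(x)$ with $c$ a Clifford-type ($\Z_4$-valued, quadratic) phase polynomial.

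Compatibility requires the post-measurement state to have affine support with uniform magnitude. We first read off the support. The factor $1+(-1)^\mu\omega_8^{k}$ vanishes iff $k\equiv 4(1-\mu)\pmod 8$, and its modulus depends only on $k$. So the support of $\ket{\widetilde\psi_\mu}$ is $\{x\in S:\Delta_{e_a}q(x)\ne 4\mu+4\}$.

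We then split by the parity of $\Delta_{e_a}q$. This parity is constant on $S$, since it equals $r\bmod 2$.

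\emph{Even parity.} Write $\Delta_{e_a}q=2\cdot(\text{value in }\Z_4)$. Uniform magnitude on the support forces the values to lie in $\{0,4\}$ or in $\{2,6\}$, because the moduli of $1\pm\omega_8^{k}$ differ between these classes. In the $\{0,4\}$ case, $\Delta_{e_a}q=4f$. The support is $\{f=\mu\}$, and affineness of both nonempty branches forces $f$ to be affine, since the indicator of a non-affine set is not a coset. Here I would use that a quadratic $f$ whose level set and its complement are both affine (or empty) is affine on $S$. In the $\{2,6\}$ case the support stays $S$, and $1+(-1)^\mu i(-1)^{g}=\sqrt2\,\omega_8^{-2(-1)^\mu g+\text{const}}$. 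Hence $q'=q-2(-1)^\mu g$ up to a constant, and this is a third-order phase polynomial iff $2g$ is. A mix of values from both classes breaks uniformity, so it is incompatible.

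\emph{Odd parity.} Here $\Delta_{e_a}q\in\{1,3,5,7\}$, and the factor is never zero, so the support is $S$. Uniform magnitude of $|1+(-1)^\mu\omega_8^k|$ forces $k\in\{d,-d\}$, that is, $\Delta_{e_a}q=(-1)^{h}d$. Using $2\Delta_{e_a}q=$ const $+4c$ mod $8$, $h$ must be quadratic, and I expect the argument to show it is affine; this is needed because $h$ enters the phase linearly. Then $1+(-1)^\mu\omega_8^{\pm d}=|\cdot|\,\omega_8^{\pm(4\mu- d)/2\cdots}$ yields $q'=q+(4\mu-d)h$ plus a constant, and $\Pr(\mu)=\tfrac14|1+(-1)^\mu\omega_8^d|^2=\tfrac12(1+(-1)^\mu\cos(d\pi/4))$.

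\textbf{Main obstacle: necessity.} The difficult part is showing that $f$ (case 2a) and $h$ (case 2c) must be affine, and that in 2b the new phase must be third order. Sufficiency is direct computation.

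For 2c, the phase of $1+(-1)^\mu\omega_8^{(-1)^h d}$ is $\pm\theta$ with $\theta\notin\tfrac{\pi}{4}\Z$ generally (e.g. $d=1$ gives $\pi/8$). For the result to remain an $\omega_8$-phase state, the global-phase-adjusted difference $2\theta h$ must lie in $\tfrac{\pi}{4}\Z$ times a third-order polynomial. I would attack this by comparing the phase differences $q'(x)-q'(y)$ between pairs of points. This forces $h$ to be constant on suitable cosets and, combined with $h$ being at most quadratic and the third-order structure of $\Delta_{e_a}$ (a second derivative of $q'-q$ must be a Clifford polynomial), yields affineness.

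I would try first the second-derivative criterion: $q'-q$ is third order iff all of its third derivatives are constant. Applying this to $\lambda h$ with $\lambda$ odd gives that $h$ has degree at most $2$ with constrained coefficients. Then, using the measurement outcome for both $\mu$ simultaneously, I would aim to cut this down to degree $1$.
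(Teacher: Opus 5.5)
Most of your argument follows the paper's proof, and those parts are fine: the split on whether $e_a\in V$, the explicit phase $q(y\oplus t(y)e_a)+4\mu t(y)$ in case (1), and the magnitude argument confining $\Delta_{e_a}q$ on $S$ to $\{0,4\}$, $\{2,6\}$ or $\{d,-d\}$. Your case (1) phase is more careful than the paper's bare assertion that the state remains 3PP, and your parity-constancy observation neatly replaces intersecting the admissible image sets for $\varepsilon=\pm1$. Cases (2a) and (2b), and the sufficiency half of (2c), are also fine.

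The genuine gap is the necessity half of (2c). You never show that compatibility forces $h$ to be affine, and the plan you sketch would not get you there. Invoking both outcomes cannot help. For $\mu=0$ and $\mu=1$, the requirements that $(4\mu-d)h$ agree on $S$ with a third-order phase polynomial up to a constant differ by $4h$. For every Boolean $h$ of degree at most three, $4h$ is itself a third-order phase polynomial, since $4(a\oplus b)=4a+4b \pmod 8$. So the two outcomes carry identical information.

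You also misjudge what your third-derivative idea yields. The implication you need is that third-order phase polynomials have constant $3$-fold derivatives, which is \Cref{thm:phase-polynomial-derivative}. Since $\Delta_v^3F=4\Delta_vF$, applying it to $F=\lambda h$ with $\lambda$ odd already forces every Boolean derivative $h(x\oplus v)\oplus h(x)$, $v\in V$, to be constant. That is, $h$ is already affine on $S$, with no further reduction needed.

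The cleanest fix is one line, and you already had its ingredient. The identity $1+\varepsilon\omega_8^{-d}=(1+\varepsilon\omega_8^{d})\,\varepsilon\omega_8^{-d}$ shows that the factor at points with $h=1$ equals the factor at points with $h=0$ times the eighth root of unity $\omega_8^{4\mu-d}$. Hence the post-measurement phase is exactly $q+(4\mu-d)h$ up to a constant, with no residual $\pi/8$ phases. Compatibility means this agrees on $S$ with a third-order phase polynomial. Since $4\mu-d$ is odd, hence a unit modulo $8$, $h$ itself (viewed in $\{0,1\}\subseteq\Z_8$) then agrees on $S$, up to an additive constant, with a third-order phase polynomial. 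Reducing its weighted form modulo $2$ kills the $2b_{ij}$ and $4c_{ijk}$ terms and leaves an affine Boolean function. This is the paper's argument.

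Even more directly, you discarded exactly the needed information by passing to $2\Delta_{e_a}q$. Use $\Delta_{e_a}q=r+2c$ itself, with $c$ a $\Z_4$-valued second-order phase polynomial. On $S$ the identity $d-2dh=r+2c$ gives $dh\equiv (d-r)/2-c \pmod 4$. Reducing modulo $2$ shows that $h$ is automatically affine once $\Delta_{e_a}q$ takes values in $\{d,-d\}$ on $S$.
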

\endgroup
\begin{proof}
Let $\varepsilon := (-1)^\mu$ and consider the projector $\Pi_\varepsilon = \frac{1}{2}(1 + \varepsilon X_a)$.

\begin{enumerate}

\item
First, suppose that $e_a \not\in V$.
Then, $S$ and $S \oplus e_a$ are disjoint, such that
\begin{align}
    \Pi_\varepsilon \ket{\psi}
    &= \frac{1}{2\sqrt{|S|}} \left( \sum_{x\in S} \omega_8^{q(x)}\ket{x} + \varepsilon\sum_{x\in S} \omega_8^{q(x)}\ket{x \oplus e_a} \right) \\
    &= \frac{1}{2\sqrt{|S|}} \left( \sum_{x\in S} \omega_8^{q(x)}\ket{x} + \sum_{x\in S} \omega_8^{q(x) + 4\mu}\ket{x \oplus e_a} \right) \\
    &= \frac{1}{2\sqrt{|S|}} \sum_{\substack{x\in S\\ k \in\mathbb F_2}} \omega_8^{q(x) + 4k\mu} \ket{x + ke_a},
\end{align}
which remains 3PP.
The post-measurement state is given by $S'_\mu = S \cup (S \oplus e_a)$ and the unique phase polynomial $q'_\mu$ with $q'_\mu(x) = q(x)$ and $q'_\mu(x \oplus e_a) = q(x) + 4\mu$ for all $x \in S$.
Furthermore, $Pr(\mu) = |\Pi_\varepsilon\ket{\psi}|^2 = \frac{|S'_\mu|}{4|S|}|\ket{\psi}|^2 = \frac{1}{2}$.

\item
Now, suppose $e_a \in V$.
In that case,
\begin{align}
    \Pi_\varepsilon \ket{\psi}
    &= \frac{1}{2\sqrt{|S|}} \left( \sum_{x\in S} \omega_8^{q(x)}\ket{x} + \varepsilon\sum_{x\in S} \omega_8^{q(x)}\ket{x \oplus e_a} \right) \\
    &= \frac{1}{2\sqrt{|S|}} \left( \sum_{x\in S} \omega_8^{q(x)}\ket{x} + \varepsilon\sum_{x\in S} \omega_8^{q(x \oplus e_a)}\ket{x} \right) \\
    &= \frac{1}{2\sqrt{|S|}} \sum_{x\in S} \omega_8^{q(x)} (1 + \varepsilon\omega_8^{\Delta_{e_a}q(x)}) \ket{x}  \label{eqn:projected-3pp}
\end{align}
An important property of 3PP states is that all basis states in their support have equal probability.
In other words, a necessary condition for $\Pi_\varepsilon \ket{\psi}$ to remain 3PP is that  $|\bra{x}\Pi_\varepsilon \ket{\psi}| = |\bra{y}\Pi_\varepsilon \ket{\psi}|$ for all $x,y$ with $\bra{x}\Pi_\varepsilon \ket{\psi} \neq 0$ and $\bra{y}\Pi_\varepsilon \ket{\psi} \neq 0$.
Following \cref{eqn:projected-3pp}, this is equivalent to
\[
    \left|1 + \varepsilon\omega_8^{\Delta_{e_a}q(x)}\right| = \left|1 + \varepsilon\omega_8^{\Delta_{e_a}q(y)}\right|
\]
for all $x,y$ with $1 + \varepsilon\omega_8^{\Delta_{e_a}q(x)} \neq 0$ and $1 + \varepsilon\omega_8^{\Delta_{e_a}q(y)} \neq 0$.
Computing those magnitudes, we get:
\begin{align*}
    \left| 1 + \varepsilon\omega_8^{\Delta_{e_a}q(x)} \right| &=
    \left| \omega_8^{\Delta_{e_a} q(x) / 2} \left( \omega_8^{-\Delta_{e_a} q(x) / 2} + \varepsilon\omega_8^{\Delta_{e_a} q(x) / 2} \right) \right| \\
    &= \begin{cases}
        2 \left| \cos \left(\dfrac{\Delta_{e_a} q(x) \pi}{8} \right) \right| & \text{if } \varepsilon = +1, \\
        2 \left| \sin \left(\dfrac{\Delta_{e_a} q(x) \pi}{8} \right) \right| & \text{if } \varepsilon = -1.
    \end{cases}
\end{align*}
For $\varepsilon = +1$, this means that the image of $\Delta_{e_a}q$ must be a subset of one of $\{0, 4\}, \{2,4,6\}, \{1,4,7\}$, or $\{3,4,5\}$.
Similarly, for $\varepsilon = -1$ its image must be a subset of one of $\{0, 4\}, \{0,2,6\}, \{0,1,7\}$, or $\{0,3,5\}$.
Since both conditions must hold simultaneously, compatibility requires
\begin{align} \label{eqn:gradient-two-valued}
    \operatorname{Im}(\Delta_{e_a}q)\subseteq D
    \quad\text{for some}\quad
    D\in\bigl\{\{0,4\},\{2,6\},\{1,7\},\{3,5\}\bigr\}.
\end{align}
We will discuss these cases separately:

\begin{enumerate}

\item
Suppose $\Delta_{e_a} q(x) \in \{ 0, 4 \}$ for all $x \in S$, i.e. $\Delta_{e_a} q(x) = 4f(x)$ for some function $f : S \to \mathbb F_2$.
Then,
\[
    1 + \varepsilon\omega_8^{E(x)}
    = \begin{cases}
        2 & \text{if } f(x) = \mu \\
        0 & \text{otherwise}.
    \end{cases}
\]
Inserting this into \cref{eqn:projected-3pp}, we see that basis states with $f(x) \neq \mu$ are removed from the support while the others are scaled by 2, giving us
\[ \Pi_\varepsilon\ket{\psi} = \frac{1}{\sqrt{|S|}} \sum_{\substack{x \in S \\ f(x) = \mu}} \omega_8^{q(x)} \ket{x}. \]
This state remains 3PP for both outcomes $\mu = 0,1$ simultaneously iff $f$ is affine.
In this case, we can read off that $S'_\mu = S \cap \{ x ~|~ f(x) = \mu \}$ and $q'_\mu = q$.
Furthermore, $Pr(\mu) = |\Pi_\varepsilon\ket{\psi}|^2 = \frac{|S'_\mu|}{|S|}|\ket{\psi}|^2 = \frac{|S'_\mu|}{|S|}$.
Since $S$ and $S'_\mu$ are affine spaces differing by a single constraint, this ratio must be either 0, 1, or $\frac{1}{2}$ depending on whether $f(x) = \mu$ holds for none, all, or some $x \in S$.

\item
Suppose $\Delta_{e_a}q(x) \in \{2, 6\}$ for all $x \in S$, i.e. $\Delta_{e_a}q(x) = 2 + 4g(x) \pmod 8$ for some function $g: S \to \mathbb F_2$.
Then,
\[
    1 + \varepsilon\omega_8^{\Delta_{e_a}q(x)}
    = 1 + i\varepsilon(-1)^{g(x)}
    = \sqrt 2 \omega_8^{\varepsilon(-1)^{g(x)}}
    = \sqrt 2 \omega_8^{\varepsilon(1 - 2g(x))},
\]
where we used the identities $1+s i = \sqrt{2} e^{is\pi/4}$ for $s \in \{-1,1\}$ and  $(-1)^m = 1 - 2m$ for $m \in \{0,1\}$.
Inserting this into \cref{eqn:projected-3pp}, we get the projected state
\[ \Pi_\varepsilon \ket{\psi} = \frac{\sqrt 2 \omega_8^\varepsilon}{2\sqrt{|S|}} \sum_{x \in S} \omega_8^{q(x) - 2\varepsilon g(x)} \ket{x} \]
which remains 3PP iff $2g$ is a valid third-order phase polynomial.
In this case, we can read off that $S' = S$ and $q'(x) = q(x) - 2\varepsilon g(x)$.
Furthermore, $Pr(\mu) = |\Pi_\varepsilon\ket{\psi}|^2 = \frac{1}{2}|\ket{\psi}|^2 = \frac{1}{2}$.

\item
Suppose $\Delta_{e_a}q(x) \in \{ d, -d \}$ for all $x \in S$ where $d \in \{ 1, 3, 5, 7 \}$, i.e. $\Delta_{e_a}q(x) = (-1)^{h(x)}d \pmod 8$ for some function $h: S \to \mathbb F_2$.
Then,
\begin{align*}
    1 + \varepsilon\omega_8^{\Delta_{e_a}q(x)}
    &= (1 + \varepsilon\omega_8^d) \left(\frac{1 + \varepsilon\omega_8^{(-1)^{h(x)} d}}{1 + \varepsilon\omega_8^{(-1)^{h(x) + 1} d}}\right)^{h(x)} \\
    &= (1 + \varepsilon\omega_8^d) (\varepsilon\omega_8^{-d})^{h(x)} \\
    &= (1 + \varepsilon\omega_8^d) \omega_8^{(4\mu - d)h(x)}.
\end{align*}
Inserting this into \cref{eqn:projected-3pp}, we get the projected state
\[ \Pi_\varepsilon \ket{\psi} = \frac{1 + \varepsilon\omega_8^d}{2\sqrt{|S|}} \sum_{x \in S} \omega_8^{q(x) + (4\mu - d)h(x)} \ket{x}. \]
We claim that this state remains 3PP iff $h$ is affine over $\mathbb F_2$.
First, suppose $h$ is affine over $\mathbb F_2$, i.e. $h(x) = a_0 \oplus \left(\bigoplus_i a_i x_i\right)$ for some $a_0,a_i \in \mathbb F_2$. Writing $a=(a_i)_i$, we have
\[
    h(x) = a_0 \oplus \langle a,x\rangle
    = a_0 + (1-2a_0)\langle a,x\rangle \pmod 8.
\]
Thus, $h$ has the form of \cref{eq:phase-polynomial-sum-parities}, and hence $q'_\mu=q+(4\mu-d)h$ is a third-order phase polynomial.
Conversely, note that
\[ h(x) = (4\mu - d)^{-1}(q'_\mu(x) - q(x)) \pmod 2. \]
Thus, if $q'_\mu$ is a 3PP phase polynomial, then $(4\mu - d)^{-1}(q'_\mu - q)$ must also be one.
Reducing mod 2 removes all quadratic and cubic terms, showing that $h$ is affine over $\mathbb F_2$.
If the measurement is compatible, we can furthermore read off that $S'_\mu = S$ and
\[ Pr(\mu) = |\Pi_\varepsilon\ket{\psi}|^2 = \left|\frac{1 + \varepsilon\omega_8^d}{2}\right|^2 |\ket{\psi}|^2 = \frac{1}{2}(1 + \varepsilon\cos(d\pi/4)). \]

\end{enumerate}

In any other case, the measurement cannot be compatible as established in \cref{eqn:gradient-two-valued}. \qedhere

\end{enumerate}
\end{proof}

The measurement classification in \cref{thm:x-measurements} depends on the derivative $\Delta_{e_a}q$ of the phase polynomial in the direction of the measured qubit.
Remember that, by \cref{prop:derivative-from-E},  we can directly construct this derivative from the DCP stabilizer group of $\ket{\psi}$ without having to explicitly track the phase polynomial itself.

More precisely, let $\ket{\psi}$ be a 3PP state with phase polynomial $q$ and DCP stabilizer group $G$. Suppose $e_a \in X(G)$ and choose an element $g \in G$ with $\pi(g) = e_a$.

Write $g=X_a\delta$, where $\delta=\omega_8^kD(s,M)$ is diagonal.
The stabilizer equation $g\ket{\psi}=\ket{\psi}$ gives
\begin{align}
    \Delta_{e_a}q(x)
    =k+2\sum_i s_i x_i+4\sum_{i<j}M_{ij}x_ix_j
    \pmod 8
\end{align}
for every $x\in S$.
Thus, we can extract $\Delta_{e_a} q$ directly from the normal form of the DCP stabilizer element $g$.
Furthermore, given $\Delta_{e_a} q$, we can distinguish the different cases of \cref{thm:x-measurements} efficiently.
To this end, Let $S=x_0\oplus V$, and let $v^{(1)},\ldots,v^{(r)}$ be a basis of $V$.
Define the restricted derivative
\begin{align}
    \widetilde D(c)
    :=\Delta_{e_a}q\left(x_0\oplus\bigoplus_{i=1}^r c_i v^{(i)}\right),
    \qquad c\in\mathbb F_2^r.
\end{align}
By affine pullback, this is a Clifford phase polynomial, which we write as
\begin{align}
    \widetilde D(c)
    =\tilde k+2\sum_i\tilde s_i c_i
    +4\sum_{i<j}\tilde M_{ij}c_ic_j
    \pmod 8,
\end{align}
where $\tilde k\in\mathbb Z_8$, $\tilde s_i\in\mathbb Z_4$, and $\tilde M$ is symmetric over $\mathbb F_2$ with zero diagonal.

\begin{proposition}\label{prop:x-measurement-case-decision}
    The different cases of \cref{thm:x-measurements} can be decided based on $\tilde k$, $\tilde s$, and $\tilde M$:
    \begin{enumerate}
        \item
        Case \ref{thm:x-measurements:even-0-4} applies iff $\tilde k \in \{0, 4\}$, $\tilde s_i \in \{0, 2\}$, and $\tilde M_{ij} = 0$.

        \item
        Case \ref{thm:x-measurements:even-2-6} applies iff $\tilde k \in \{2, 6\}$, $\tilde s_i \in \{0, 2\}$, and $\rank \tilde M \leq 2$.

        \item
        Case \ref{thm:x-measurements:odd} applies iff $\tilde k$ is odd,
        $\tilde s_i\in\{0,-\tilde k\}\pmod 4$ for every $i$, and
        $\tilde M_{ij}=\tilde s_i\tilde s_j\pmod 2$ for every $i<j$.
    \end{enumerate}
\end{proposition}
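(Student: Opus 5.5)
The plan is to work entirely in the affine coordinates $c\in\F_2^r$. The map $c\mapsto x_0\oplus\bigoplus_i c_iv^{(i)}$ is an affine bijection $\F_2^r\to S$. It therefore identifies affine Boolean functions on $S$ with affine Boolean functions on $\F_2^r$. By \Cref{cor:phase-polynomial-affine-pullback} and the left-inverse trick used after \Cref{def:3pp}, it also identifies third-order phase polynomials on the two sides. Each condition of \Cref{thm:x-measurements} can thus be restated as a condition on $\widetilde D$.

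The basic tool is uniqueness of the weighted Clifford form on $\F_2^r$. The values $\widetilde D(0)$, $\widetilde D(e_i)$ and $\widetilde D(e_i\oplus e_j)$ determine $\tilde k\in\Z_8$, $2\tilde s_i\in\Z_8$ (hence $\tilde s_i\in\Z_4$) and $4\tilde M_{ij}$ (hence $\tilde M_{ij}\in\F_2$) by inclusion--exclusion. So to identify coefficients it suffices to expand the target expression into the same weighted form and compare.

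\emph{Case 2a.} For $f=f_0\oplus\langle a,c\rangle$ we have $4f=4f_0+4\sum_i a_ic_i \pmod 8$, because $4(\alpha\oplus\beta)=4\alpha+4\beta$. Uniqueness then gives $\tilde k=4f_0\in\{0,4\}$, $\tilde s_i=2a_i\in\{0,2\}$ and $\tilde M=0$. Conversely, these coefficient conditions exhibit $\widetilde D=4f$ with $f$ affine.

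\emph{Case 2c.} Replacing $d$ by $-d$ flips the constant of $h$, so I may take $h=\langle a,c\rangle$ linear. Using $\langle a,c\rangle=\sum a_ic_i-2\sum_{i<j}a_ia_jc_ic_j \pmod 4$ gives
\[
(-1)^hd=d-2dh=d-2d\sum_i a_ic_i+4\sum_{i<j}a_ia_jc_ic_j \pmod 8,
\]
where $d$ odd was used in the last term. Hence $\tilde k=d$ is odd, $\tilde s_i=-\tilde k a_i\in\{0,-\tilde k\}\pmod 4$, and $\tilde M_{ij}=a_ia_j=\tilde s_i\tilde s_j\pmod 2$. For the converse, set $a_i:=\tilde s_i \bmod 2$ and run the computation backwards.

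\emph{Case 2b.} This is the main obstacle. Here $\widetilde D$ takes values in $\{2,6\}$ exactly when $\tilde k\in\{2,6\}$ and $\tilde s_i\in\{0,2\}$, with $\tilde M$ arbitrary. The function $g$ is then forced to be
\[
g=\tfrac{\tilde k-2}{4}\oplus\bigoplus_i\tfrac{\tilde s_i}{2}c_i\oplus\bigoplus_{i<j}\tilde M_{ij}c_ic_j .
\]
It remains to show that $2g$ is a third-order phase polynomial iff $\rank\tilde M\le2$.

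Write $g=L\oplus Q$ with $L$ affine and $Q$ the quadratic part. Then $2g=2L+2Q-4LQ \pmod 8$, and $4LQ$ is cubic, so only $2Q$ matters. Both rank and third-orderness are invariant under invertible linear changes of $c$ (again by \Cref{cor:phase-polynomial-affine-pullback}). I will bring $\tilde M$, which is an alternating form, to the normal form $Q\sim c_1c_2\oplus\cdots\oplus c_{2t-1}c_{2t}$ modulo linear terms, which can be absorbed into $L$.

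If $t\le1$, then $2Q=2c_1c_2$, which is clearly third-order. If $t\ge2$, restrict to the coordinate subspace spanned by $c_1,\dots,c_4$; this restriction is an affine pullback, so it preserves third-orderness. On that subspace $2(c_1c_2\oplus c_3c_4)=2c_1c_2+2c_3c_4-4c_1c_2c_3c_4$.

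To rule this out I will use \Cref{thm:phase-polynomial-derivative} with $k=3$. Every third-order phase polynomial has constant third derivatives, so its fourfold derivative vanishes. But the fourfold derivative of $2(c_1c_2\oplus c_3c_4)$ in directions $e_1,\dots,e_4$ equals $-4\neq0 \pmod 8$, which is a contradiction. I expect that getting the rank normal form cleanly together with this derivative check is the delicate part of the argument; everything else is coefficient bookkeeping.
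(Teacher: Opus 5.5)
Your proposal is correct and follows the paper's proof essentially step by step. You extract $\tilde k,\tilde s,\tilde M$ from $\widetilde D$ at $0$, $e_i$, $e_i\oplus e_j$, use $4\bigoplus a_i=4\sum a_i$ and the parity expansion of $(-1)^h d$ for cases 2a and 2c, and for case 2b you identify the forced quadratic $g$ and bring $\tilde M$ to symplectic normal form, which is the content of \Cref{lem:3pp-rank}. The only variation is in ruling out $t\geq 2$: you strip off the affine part first, restrict to four coordinates, and invoke the vanishing of fourfold derivatives from \Cref{thm:phase-polynomial-derivative}. The paper instead reads off the nonzero $z_1z_2z_3z_4$ coefficient via uniqueness of the multilinear representation modulo $8$; these are the same fact, since that fourfold derivative is exactly the top multilinear coefficient.
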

\begin{proof}
We begin by giving explicit representations for $\tilde k, \tilde s_i$, and  $\tilde M_{ij}$ in terms of $\Delta_{e_a} q$.
Concretely, noting that $\widetilde D(0) = \tilde k$, $\widetilde D(e_i) = \tilde k + 2\tilde s_i$, and $\widetilde D(e_i \oplus e_j) = \tilde k + 2\tilde s_i + 2\tilde s_j + 4\tilde M_{ij}$, we have
\begin{align}
    \tilde k &= \Delta_{e_a}q(x_0) \pmod 8 \label{eqn:k-tilde} \\
    2\tilde s_i &= \Delta_{e_a}q(x_0 \oplus v^{(i)}) - \Delta_{e_a}q(x_0) \pmod 8 \label{eqn:s-tilde} \\
    4\tilde M_{ij} &= \Delta_{e_a}q(x_0 \oplus v^{(i)} \oplus v^{(j)}) - \Delta_{e_a}q(x_0 \oplus v^{(i)}) - \Delta_{e_a}q(x_0 \oplus v^{(j)}) + \Delta_{e_a}q(x_0) \pmod 8 \label{eqn:M-tilde}.
\end{align}
In other words $2\tilde s_i = \Delta_{v^{(i)}}\Delta_{e_a}q(x_0)$ and $4\tilde M_{ij} = \Delta_{v^{(i)}}\Delta_{v^{(j)}}\Delta_{e_a}q(x_0)$ are the first and second discrete derivatives of $\Delta_{e_a}q$ at $x_0$ in the directions of $V$'s basis vectors.
Furthermore, we note the following facts about arithmetic over Boolean variables $a_i \in \mathbb F_2$ modulo 8:
\begin{align}
    2\bigoplus a_i &= 2\sum_i a_i - 4\sum_{i<j} a_ia_j \pmod 8 \label{eqn:2-xor} \\
    4\bigoplus a_i &= 4\sum_{i} a_i \pmod 8 \label{eqn:4-xor}
\end{align}
We consider each case separately:
\begin{enumerate}
\item
First, suppose that $\Delta_{e_a}\tilde q(c) = 4f(c) \pmod 8$ for some affine function $f: \mathbb F_2^r \to \mathbb F_2$. Write $f(c) = \bigoplus_i f_ic_i \oplus f_0$.
Using \cref{eqn:4-xor}, we see that $\Delta_{e_a}\tilde q(c) = 4f_0 + 4\sum_i f_ic_i \pmod 8$.
We can read off that $\tilde k := 4f_0 \in \{0, 4\}$, $2\tilde s_i := 4f_i$ is even and $\tilde M_{ij} = 0$ since there are no quadratic terms.

Conversely, suppose $\tilde k \in \{0,4\}$, $\tilde s_i \in \{0,2\}$, and $\tilde M_{ij} = 0$.
Then $f(c) := \left(\bigoplus_i \frac{s_i}{2}c_i\right) \oplus \frac{k}{4}$ is affine and $\widetilde D(c) = 4f(c)$ such that case \ref{thm:x-measurements:even-0-4} applies.

\item
Suppose $\Delta_{e_a}\tilde q(c) = 2 + 4g(c) \pmod 8$ for some function $g: \mathbb F_2^r \to \mathbb F_2$ such that $2g$ is a 3PP phase polynomial.
Evaluating equations \eqref{eqn:k-tilde} and \eqref{eqn:s-tilde}, we get
\begin{align*}
    \tilde k &= \Delta_{e_a}\tilde q(0) = 2 + 4g(0) \pmod 8 \\
    2\tilde s_i &= \Delta_{e_a}\tilde q(e_i) - \Delta_{e_a}\tilde q(0) = 4(g(e_i) - g(0)) \pmod 8
\end{align*}
showing that $\tilde k \in \{2,6\}$ and that all $\tilde s_i$ are even.
Then
\begin{align*}
    4g(c)
    = \Delta_{e_a}\tilde q(c) - 2
    = \tilde k - 2 + 2\sum_i \tilde s_ic_i + 4\sum_{i<j} \tilde M_{ij}c_ic_j \pmod 8
\end{align*}
Using \cref{eqn:4-xor}, we conclude that
\begin{equation} \label{eqn:g-quadratic}
    g(c) = \frac{\tilde k - 2}{4} \oplus \left(\bigoplus_i \frac{\tilde s_i}{2}c_i\right) \oplus \left(\bigoplus_{i<j} \tilde M_{ij}c_ic_j\right).
\end{equation}
In other words, $g$ must be a quadratic Boolean function.
Therefore, $\rank \tilde M \leq 2$ by Lemma~\ref{lem:3pp-rank}.

Conversely, suppose $\tilde k \in \{2, 6\}$, $\tilde s_i \in \{0, 2\}$, and $\rank\tilde M \leq 2$.
Define $g$ by \cref{eqn:g-quadratic} and use \cref{eqn:4-xor} to show
\begin{align}
    2+4g(c)
    =\tilde k+2\sum_i\tilde s_i c_i
    +4\sum_{i<j}\tilde M_{ij}c_ic_j
    =\widetilde D(c)
    \pmod 8.
\end{align}
By Lemma~\ref{lem:3pp-rank}, $2g$ is a 3PP phase polynomial such that case \ref{thm:x-measurements:even-2-6} applies.

\item
Suppose first that case \ref{thm:x-measurements:odd} applies.
By absorbing the constant term of the affine function into the sign of $d$, we may write
\begin{align}
    \widetilde D(c)=\tilde k(-1)^{h(c)},
    \qquad
    h(c)=\bigoplus_i h_i c_i,
\end{align}
where $\tilde k$ is odd and $h_i\in\mathbb F_2$.
Expanding the parity modulo $8$ gives
\begin{align}
    \widetilde D(c)
    &=\tilde k-2\tilde k\sum_i h_i c_i
    +4\tilde k\sum_{i<j}h_ih_jc_ic_j
    \pmod 8.
\end{align}
Comparing coefficients yields
\begin{align}
    \tilde s_i&=-\tilde k h_i\pmod 4,\\
    \tilde M_{ij}&=h_ih_j\pmod 2.
\end{align}
Since $\tilde k$ is odd, we also have $h_i=\tilde s_i\pmod 2$, proving the stated conditions.

Conversely, suppose those conditions hold.
Define $h_i:=\tilde s_i\pmod 2\in\mathbb F_2$ and $h(c):=\bigoplus_i h_ic_i$.
The condition $\tilde s_i\in\{0,-\tilde k\}\pmod 4$ implies $\tilde s_i=-\tilde k h_i\pmod 4$, while the quadratic condition gives $\tilde M_{ij}=h_ih_j\pmod 2$.
Therefore,
\begin{align}
    \tilde k(-1)^{h(c)}
    &=\tilde k-2\tilde k\sum_i h_i c_i
    +4\tilde k\sum_{i<j}h_ih_jc_ic_j\\
    &=\tilde k+2\sum_i\tilde s_i c_i
    +4\sum_{i<j}\tilde M_{ij}c_ic_j\\
    &=\widetilde D(c)\pmod 8.
\end{align}
Thus case \ref{thm:x-measurements:odd} applies.
\qedhere
\end{enumerate}
\end{proof}

\begin{lemma} \label{lem:3pp-rank}
    Let $g: \mathbb F_2^n \to \mathbb F_2$ be a quadratic Boolean function given by a bilinear form
    \[ g(x) = g_0 \oplus \left( \bigoplus_i a_i x_i \right) \oplus \left( \bigoplus_{i<j} M_{ij}x_ix_j \right) \]
    with $g_0 \in \mathbb F_2, a\in\mathbb F_2^n, M \in \mathbb F_2^{n\times n}$ where $M$ is symmetric with zero diagonal.
    Then $2g$ is a 3PP phase polynomial iff $\rank M \leq 2$.
\end{lemma}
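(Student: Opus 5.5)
The plan is to expand $2g$ modulo $8$ in the monomial basis and read off the degree-$4$ obstruction. I will use one auxiliary fact: every function $\F_2^n\to\Z_8$ has a \emph{unique} multilinear expansion $\sum_A \gamma_A x_A$ with $\gamma_A\in\Z_8$, by Möbius inversion over the Boolean lattice. A function is then a third-order phase polynomial exactly when $\gamma_A=0$ for $|A|\ge 4$, $2\mid\gamma_A$ for $|A|=2$, and $4\mid\gamma_A$ for $|A|=3$. This is \Cref{def:third-order-phase-polynomial} combined with uniqueness.

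\emph{Reduction to the quadratic part.} Write $g=L\oplus Q$, where $L=g_0\oplus\bigoplus_i a_ix_i$ is affine and $Q=\bigoplus_{i<j}M_{ij}x_ix_j$. Since $2(A\oplus B)=2A+2B-4AB \pmod 8$, we get
\[
2g=2L+2Q-4LQ.
\]
The term $2L$ is already a third-order phase polynomial, by the parity-sum form of \Cref{thm:phase-polynomial-equivalence}. The term $4LQ$ is $4$ times a Boolean polynomial of degree at most $3$, and only its value modulo $2$ matters. Hence it is $4$ times a sum of monomials of degree at most $3$, which is also allowed. So $2g$ is a third-order phase polynomial iff $2Q$ is.

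\emph{Sufficiency.} If $M=0$ there is nothing to prove. Since $M$ is alternating, its rank is even, so the remaining case is $\rank M=2$. Then $M=uv^{\mathsf T}+vu^{\mathsf T}$ for some $u,v\in\F_2^n$. Consequently $Q(x)=\langle u,x\rangle\langle v,x\rangle\oplus\lambda(x)$ with $\lambda$ linear, because two quadratic forms with the same polar form differ by a linear function. After a second application of the reduction step it suffices to treat $2\ell_1\ell_2$, where $\ell_1=\langle u,x\rangle$ and $\ell_2=\langle v,x\rangle$.

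Expand each parity by \cref{eq:parity-to-weighted}: $\ell_1=\sum_A(-2)^{|A|-1}x_A$ and similarly for $\ell_2$. This gives
\[
2\ell_1\ell_2=\sum_{A,B}\pm 2^{|A|+|B|-1}x_{A\cup B}.
\]
A term survives modulo $8$ only if $|A|+|B|\le 3$. In that case $|A\cup B|\le3$, and the weight $2^{|A|+|B|-1}$ is divisible by $2^{|A\cup B|-1}$. Hence $2\ell_1\ell_2$ has the weighted third-order form.

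\emph{Necessity.} This is the main step. Suppose $\rank M\ge4$. Symplectic Gram--Schmidt over $\F_2$ gives an invertible change of coordinates $x=Py$ in which the polar form of $Q$ becomes $\bigoplus_{k}(y_{2k-1}y_{2k}+y_{2k}y_{2k-1})$. In these coordinates $Q\circ P=\bigoplus_{k=1}^{\rank M/2}y_{2k-1}y_{2k}\oplus(\text{linear})$.

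Now pull back further along the affine embedding that sets $y_5=\cdots=y_n=0$. By \Cref{cor:phase-polynomial-affine-pullback}, third-orderness of $2g$ would be inherited by this restriction. By the reduction step, it would then be inherited by $2(y_1y_2\oplus y_3y_4)$ on $\F_2^4$.

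Apply the identity for $2(A\oplus B)$ once more:
\[
2(y_1y_2\oplus y_3y_4)=2y_1y_2+2y_3y_4-4y_1y_2y_3y_4 \pmod 8.
\]
This has coefficient $4\neq0$ on the degree-$4$ monomial. By uniqueness of the multilinear expansion it is not a third-order phase polynomial, which is a contradiction.

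The only delicate points are the uniqueness claim and the fact that linear terms and coordinate changes do not disturb the conclusion. Both are handled by the reduction step and by affine-pullback closure.
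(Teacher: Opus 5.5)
Your proof is correct and follows essentially the paper's route: pass to symplectic canonical coordinates by an affine change of variables, expand the XOR modulo $8$, and use uniqueness of the multilinear expansion to see that the quartic term $4y_1y_2y_3y_4$ rules out a third-order phase polynomial when $\rank M\ge 4$. The differences are cosmetic: you prove sufficiency directly in the original coordinates by expanding $2\langle u,x\rangle\langle v,x\rangle$, and for necessity you restrict to $\F_2^4$ rather than checking that no other quartic term appears in the full canonical expansion.
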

\begin{proof}
We first put the quadratic part into canonical coordinates, then determine when the weighted polynomial for $2g$ contains quartic terms.
Since $M$ is symmetric with zero diagonal, its rank is even, say $\operatorname{rank}M=2t$.
An invertible linear change of coordinates puts $g$ into the form
\begin{align}
    \hat g(z)
    =\ell(z)\oplus\bigoplus_{i=1}^t z_{2i-1}z_{2i},
    \qquad z\in\mathbb F_2^n,
\end{align}
where $\ell$ is affine and may depend on all $n$ variables.
By affine pullback in both directions, $2g$ is a third-order phase polynomial iff $2\hat g$ is one.

Expanding the XOR modulo $8$ gives
\begin{align}
    2\hat g(z)
    &=2\ell(z)+2\sum_{i=1}^t z_{2i-1}z_{2i}
    -4\sum_{i=1}^t\ell(z)z_{2i-1}z_{2i}\\
    &\quad-4\sum_{i<j}z_{2i-1}z_{2i}z_{2j-1}z_{2j}
    \pmod 8.
\end{align}
The first term is a third-order phase polynomial because $\ell$ is affine.
The second term has quadratic coefficients divisible by $2$.
In the third term, the prefactor $4$ allows us to expand $\ell$ over $\mathbb F_2$, so this term has degree at most three and all coefficients are divisible by $4$.
Thus, when $t\leq 1$, every term has the required weighted form.

If $t\geq 2$, the last sum contains $z_1z_2z_3z_4$ with coefficient $4$ modulo $8$.
None of the other terms has degree four, so this coefficient cannot cancel.
By uniqueness of the multilinear polynomial representation modulo $8$, $2\hat g$ is then not a third-order phase polynomial.
Consequently, $2g$ is a third-order phase polynomial iff $t\leq 1$, equivalently $\operatorname{rank}M\leq 2$.
\end{proof}

\section{Efficient simulation in the phase polynomial representation}
\label{appendix:efficient-simulation}

This appendix establishes the runtime and memory bounds for simulation in the phase-polynomial representation.
We analyze the cost of state updates and compatibility tests, then show how to compute amplitudes and Pauli expectation values, sample computational-basis outcomes, and test equality of represented states.

\begin{efficientsimulationrestatement}[Efficient simulation of 3PP circuits]
\efficientSimulationStatement{}
\end{efficientsimulationrestatement}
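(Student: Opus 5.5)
We will prove the two parts of \Cref{thm:efficient-simulation} separately. Throughout, the state is stored as $(x_0,B,p)$, with $p$ in the weighted form of \Cref{def:third-order-phase-polynomial} in the parameter variables $y\in\F_2^r$. This uses $O(n)+O(nr)+O(r^3)=O(n^3)$ coefficients, each from $\Z_8$, and gives the memory bound. The guiding principle is that every update is either an affine change of the \emph{ambient} coordinates or an affine elimination of a \emph{parameter} variable. The phase polynomial is rewritten only when a parameter is eliminated or a phase term is added. Each such rewrite touches at most the $O(r^2)$ monomials containing the eliminated variable, and each of these expands into $O(r)$ terms. This is how we aim for $O(n^3)$ per operation.

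\textbf{Circuit simulation.} For preparations we append a coordinate: $\ket{0}$ extends $x_0$ by $0$, and $\ket{+}$ adds a new parameter column $e_{n+1}$ to $B$. For $X_a$ and $\CNOT_{a\to b}$ we act on the ambient side only: $x_0\mapsto\phi(x_0)$ and $B\mapsto AB$. This costs $O(n^2)$ and leaves $p$ unchanged. For $\T_a^{\pm1}$ and $Z_a$ we add $\pm y$-expression $(x_0)_a\oplus\langle B_{a,\cdot},y\rangle$, respectively $4$ times it, to $p$ via the parity form of \Cref{thm:phase-polynomial-equivalence}. Expanding the parity with the weights $(-2)^{|A|-1}$ and truncating gives $O(r^3)$ terms.

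For a $Z(v)$ measurement, the outcome is determined by the linear form $y\mapsto\langle v,x_0\oplus By\rangle$, as in \cref{eq:phase-polynomial-z-measurement-update}. If $B^{\mathsf T}v=0$ the outcome is deterministic. Otherwise we sample $\mu$ uniformly, pick $j$ with $(B^{\mathsf T}v)_j=1$, and solve for $y_j$ as an affine function of the other parameters. We substitute this into $p$ by affine pullback (\Cref{cor:phase-polynomial-affine-pullback}) and delete column $j$ of $B$ after updating $x_0$. Only the monomials containing $y_j$ change, which is $O(r^3)$ work.

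For an $X(m)$ measurement, we reduce to the parameter space. We decide whether $m\in\Im B$ by Gaussian elimination in $O(n^3)$ time, obtaining $t$ with $Bt=m$ in the affirmative case. If $m\notin\Im B$ we are in case (1): we append $m$ as a new column of $B$ with phase $4\mu y_{r+1}$. If $m\in\Im B$, the measurement is the parameter shift $y\mapsto y\oplus t$. Then $\widetilde D=\Delta_t p$ is a Clifford phase polynomial by \Cref{thm:phase-polynomial-derivative}, and its coefficients $(\tilde k,\tilde s,\tilde M)$ can be computed in $O(r^3)$ time. \Cref{prop:x-measurement-case-decision} then decides compatibility and the case. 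The rank test in case (2b) costs $O(r^3)$.
\begin{itemize}
\item In case (2a) the update is a $Z$-type restriction by the affine function $f$, handled exactly as above.
\item In cases (2b) and (2c) we add $-2(-1)^\mu g$ or $(4\mu-d)h$ to $p$. Here $h$ is affine, and $g$ is quadratic with $\rank\tilde M\le 2$. We rewrite $2g$ in weighted form using the change of coordinates from \Cref{lem:3pp-rank}.
\end{itemize}
Outcome probabilities are given explicitly in \Cref{thm:x-measurements}, so sampling $\mu$ costs $O(1)$. ``Expected'' time only accounts for the random choices. Summing over $m$ operations gives $O(mn^3)+T_{\mathrm{classical}}$.

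\textbf{Queries.} For (a), we solve $x_0\oplus By=x$ by elimination, or return $0$ if there is no solution, and then evaluate $p(y)$ over its $O(r^3)$ terms. For (b), we draw $y$ uniformly and output $x_0\oplus By$ in $O(nr)$ time. For (c), we write $P\propto X(m)Z(z)$. The expectation vanishes unless $m=Bt$. If $m=Bt$, it equals
\begin{align}
2^{-r}\sum_{y}\omega_8^{\Delta_t p(y)+4\langle z,x_0\oplus By\rangle},
\end{align}
and the exponent is a Clifford phase polynomial, that is, $\tilde k+2(\text{quadratic form over }\Z_4)$. This is a quadratic Gauss sum of the kind appearing in stabilizer-inner-product computations. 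We evaluate it exactly in $O(r^3)$ time by iteratively eliminating variables, using the standard recursion: a variable appearing only linearly forces a delta constraint, and otherwise completing the square removes a pair of variables. We expect this Gauss-sum step to be the main technical obstacle, in particular handling the $\Z_4$-valued linear coefficients correctly when completing the square modulo $8$.

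For (d), we first check $x_0'\in x_0\oplus\Im B$ and $\Im B'=\Im B$. We then express the second state in the first state's parameters via $y'=L(x_0\oplus x_0'\oplus By)$. Pulling back the cubic polynomial $p'$ under this general affine map costs $O(r^4)\subseteq O(n^4)$. The two states are equal up to global phase iff $p-p'\circ(\cdot)$ is constant on $\F_2^r$. By uniqueness of the multilinear weighted representation, with coefficients $c_A$ taken modulo $2^{4-|A|}$, this amounts to checking that all non-constant reduced coefficients vanish.
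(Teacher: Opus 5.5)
Your plan follows the paper's proof almost step for step. Both use the same data $(x_0,B,p)$, update $X$ and CNOT on the ambient side, add parity terms for $T$ and $Z$, and eliminate one parameter for $Z$-type restrictions and case (2a). Both then test $\Delta_t p$ (with $Bt=m$) against the coefficient criteria of the case-decision proposition, evaluate Pauli expectations as a $\Z_4$ Gauss sum, and test equality by comparing coefficients in common coordinates. Handling $X(m)$ directly through $\Delta_t p$, instead of first conjugating by CNOTs to a single-qubit $X_a$, is an inessential variation.

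The genuine omission is outcome sampling, which is not $O(1)$. In case (2c) the probabilities are $(2\pm\sqrt2)/4$, which are irrational. Any procedure that halts after at most $k$ fair random bits can only realize probabilities in $2^{-k}\Z$, so no bounded-time sampler is exact. This is exactly why the theorem says ``expected'' time, and your remark that expectation ``only accounts for the random choices'' does not supply the argument. The paper's fix is to generate a uniform binary expansion lazily and, after $k$ bits, compare it with rational bounds of precision $O(2^{-k})$ on the target probability. The comparison is still unresolved with probability $O(2^{-k})$, and the bounds cost $\mathrm{poly}(k)$ to compute, so the expected cost is constant.

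The Gauss-sum step you flagged also needs one more idea. Your recursion is the binary one, and it fails when a variable has odd $\Z_4$ linear coefficient $s$: summing it out gives $1+i^{s}(-1)^{\lambda(y')}$, with $\lambda$ linear in the remaining variables $y'$, rather than a delta. However, $1+i^{s}(-1)^{\lambda}=\sqrt2\,\omega_8^{\sigma}\,i^{-\sigma\lambda}$, where $\sigma\in\{1,-1\}$ and $\sigma\equiv s\pmod 4$. Expanding the parity $\lambda$ modulo $4$ shows that $i^{-\sigma\lambda}$ is again $\omega_8$ raised to a Clifford phase polynomial in $y'$, computable in $O(r^2)$ time. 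Repeat this until every linear coefficient is even. The exponent is then $\tilde k+4Q$ with $Q$ a binary quadratic form, your delta/pair elimination applies, and the total cost is $O(r^3)$.

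There are also three smaller slips.
\begin{itemize}
\item When you eliminate $y_j=c\oplus\bigoplus_{i\neq j}w_iy_i$ in a $Z$-type restriction, you must also replace each remaining column $B_i$ by $B_i\oplus w_iB_j$. Shifting $x_0$ by $cB_j$ and deleting column $j$ alone parametrizes the wrong affine space.
\item In case (2b), expand $2g=2\ell+2uv-4\ell uv$ directly in the $y$ variables, with $u,v$ the linear forms spanning the quadratic part. Pulling a weighted polynomial back through a general change of coordinates costs $O(r^4)$, which breaks the $O(n^3)$ per-operation bound.
\item Your displayed sum equals $\bra{\psi}Z(z)X(m)\ket{\psi}$, so for $P\propto X(m)Z(z)$ it needs an extra factor $(-1)^{\langle m,z\rangle}$.
\end{itemize}
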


\begin{proof}
We first establish the cost of updating the state representation and checking measurement compatibility, then prove the bounds for queries on represented states.
Throughout, we use dense coefficient arrays and ordinary Gaussian elimination, with arithmetic on phase coefficients performed modulo $8$.

The affine support is specified by $x_0$ and $B$, requiring $O(n^2)$ bits.
The phase polynomial has the weighted form
\begin{align}
    p(y)
    = d+\sum_i a_i y_i
    +2\sum_{i<j}b_{ij}y_i y_j
    +4\sum_{i<j<k}c_{ijk}y_i y_j y_k,
\end{align}
where $d,a_i\in\mathbb{Z}_8$, $b_{ij}\in\mathbb{Z}_4$, and $c_{ijk}\in\mathbb{F}_2$.
Since $r\leq n$, storing these coefficients requires $O(n^3)$ bits.

We will repeatedly use two elementary facts about polynomial updates.
First, each substitution $y_i\mapsto y_i\oplus 1$ or $y_i\mapsto y_i\oplus y_j$ can be performed in $O(r^2)$ time, since only the $O(r^2)$ monomials containing $y_i$ are affected, and each produces a bounded number of terms.
Second, an affine Boolean function can be expanded as a weighted polynomial modulo $8$ in $O(r^3)$ time.
Indeed, for a parity of variables,
\begin{align}
    \bigoplus_{i\in I}y_i
    =\sum_{i\in I}y_i
    -2\sum_{\substack{i<j\\i,j\in I}}y_i y_j
    +4\sum_{\substack{i<j<k\\i,j,k\in I}}y_i y_j y_k
    \pmod 8,
\end{align}
and an affine constant is incorporated using $1\oplus f=1-f$.

We now consider circuit evolution.
Preparations append a physical coordinate and, for a $\ket{+}$ preparation, a free support coordinate.
An $X$ gate changes $x_0$, while a $\CNOT$ gate applies the corresponding row operation to $x_0$ and $B$.
Neither requires changing $p$.
A $\T$ or $Z$ gate adds an integer multiple of the affine Boolean function describing the affected physical bit to $p$.
These updates, and hence arbitrary Pauli updates, take at most $O(n^3)$ time.

For a $Z$-type measurement, write the measured operator as $Z(v)$.
Its value on a support point is determined by the affine function
\begin{align}
    f(y)=\langle v,x_0\oplus By\rangle.
\end{align}
If $f$ is constant, the outcome is deterministic and the state is unchanged.
Otherwise, each outcome has probability $1/2$ and imposes one affine equation on $y$.
Choose a variable with nonzero coefficient in that equation and use $O(r)$ elementary affine substitutions to make the equation a constraint on that variable alone.
Applying these substitutions to $p$ costs $O(r^3)$, and updating the support parametrization costs at most $O(nr^2)$.
Fixing the constrained variable and removing it therefore gives the conditional-state representation in $O(n^3)$ time.

A measurement of a nonidentity $X$-type Pauli string of weight $w$ reduces to a single-qubit measurement by \cref{eq:x-measurement-reduction}, using at most $2(w-1)$ CNOTs in total.
Each CNOT updates rows of $x_0$ and $B$ in $O(n)$ time and leaves $p$ unchanged, so the reduction adds only $O(n^2)$ time.
The $O(n^3)$ expected cost established below therefore applies to one original Pauli-string measurement, without changing the operation count $m$ or the total runtime.
The identity measurement is deterministic and leaves the state unchanged.

Next, consider a proposed measurement of $X_a$.
Gaussian elimination decides whether $e_a\in\operatorname{im}B$ in $O(n^3)$ time.
If not, the measurement is compatible case~(1).
For outcome $\mu$, append $e_a$ as a new column of $B$ and replace $p$ by
\begin{align}
    p'(y,z)=p(y)+4\mu z.
\end{align}
Both outcomes have probability $1/2$, and the update takes at most $O(n^3)$ time.

If $e_a\in\operatorname{im}B$, solve $Bt=e_a$ and construct
\begin{align}
    D(y)=p(y\oplus t)-p(y).
\end{align}
The translation $y\mapsto y\oplus t$ uses at most $r$ elementary substitutions, so constructing $D$ costs $O(r^3)$.
By the discrete-derivative property, $D$ is a Clifford phase polynomial and therefore has only $O(r^2)$ coefficients.

We now check the three internal-measurement cases and bound their updates.
For case~(2a), coefficient inspection determines whether $D=4f$ for an affine Boolean function $f$.
Conditioning on $f(y)=\mu$ is then the same affine restriction used for $Z$ measurements, costing $O(n^3)$ time.

For case~(2b), coefficient inspection first determines whether $D=2+4g$ for a quadratic Boolean function $g$.
The compatibility criterion is that the alternating matrix of its quadratic part has rank at most two, which can be checked in $O(r^3)$ time.
When this condition holds, binary elimination produces affine $\ell$ and linear $u,v$ such that
\begin{align}
    g(y)=\ell(y)\oplus u(y)v(y).
\end{align}
Here one may take $u=v=0$ when the quadratic part vanishes.
Using
\begin{align}
    2g=2\ell+2uv-4\ell uv\pmod 8,
\end{align}
we can construct the weighted coefficients of $2g$ in $O(r^3)$ time.
For the term $2uv$, only the linear and quadratic parts of the parity expansions of $u$ and $v$ contribute modulo $8$.
For the term $4\ell uv$, it suffices to multiply the affine forms over $\mathbb{F}_2$.
Thus the conditional phase
\begin{align}
    p'=p-2(-1)^\mu g
\end{align}
can be constructed within the same bound.

For case~(2c), let $d=D(0)$.
If $d$ is odd, the linear coefficients of $D$ determine the only possible affine Boolean function $h$ with $h(0)=0$ satisfying
\begin{align}
    D(y)=d(-1)^{h(y)}.
\end{align}
Comparing coefficients verifies this identity in polynomial time bounded by $O(r^3)$.
If it holds, expanding $h$ and forming
\begin{align}
    p'=p+(4\mu-d)h
\end{align}
also costs $O(r^3)$.
If none of these cases applies, the measurement is incompatible.
Consequently, compatibility checking and every compatible conditional-state update take $O(n^3)$ time.

It remains to account for sampling measurement outcomes.
The possible probabilities belong to the fixed set
\begin{align}
    \left\{0,\frac12,1,\frac{2-\sqrt2}{4},
    \frac{2+\sqrt2}{4}\right\}.
\end{align}
The rational probabilities are sampled directly using fair random bits.
For either irrational probability, compare a uniformly random binary expansion with successively refined rational bounds on that probability.
After $k$ bits and bounds of precision $O(2^{-k})$, the probability that the comparison remains unresolved is $O(2^{-k})$.
Since the bounds can be computed in time polynomial in $k$, this procedure has constant expected runtime.
Combining outcome sampling with the updates above gives $O(n^3)$ expected time per operation and $O(mn^3)+T_{\mathrm{classical}}$ expected time for the circuit.

We next establish the costs of queries on represented states.
For a computational-basis amplitude, Gaussian elimination decides whether
\begin{align}
    By=x\oplus x_0
\end{align}
has a solution in $O(n^3)$ time.
Since $B$ has full column rank, any solution is unique, and
\begin{align}
    \langle x|\psi\rangle
    =
    \begin{cases}
        2^{-r/2}\omega_8^{p(y)}, & By=x\oplus x_0,\\
        0, & x\notin x_0\oplus\operatorname{im}B.
    \end{cases}
\end{align}
Evaluating $p(y)$ costs $O(r^3)$, proving the amplitude bound.
For computational-basis sampling, choose $y$ uniformly in $\mathbb{F}_2^r$ and return $x_0\oplus By$.
This costs $O(nr)=O(n^2)$ time.

To compute a Pauli expectation, write
\begin{align}
    P=i^\kappa X(a)Z(b).
\end{align}
If $a\notin\operatorname{im}B$, the translated support is disjoint from the original support and the expectation is zero.
Otherwise, solve $Bt=a$ and expand
\begin{align}
    \bra{\psi}P\ket{\psi}
    =i^\kappa 2^{-r}
    \sum_{y\in\mathbb{F}_2^r}
    \omega_8^{p(y)-p(y\oplus t)
    +4\langle b,x_0\oplus By\rangle}.
    \label{eq:pauli-expectation-gauss-sum}
\end{align}
Constructing the exponent costs $O(n^3)$ by the translation procedure above.
It is a Clifford phase polynomial, so the remaining sum is a quadratic Gauss sum over $\mathbb{Z}_4$.
Reduction of the quadratic form by binary elimination evaluates this sum exactly in $O(r^3)$ time, giving the claimed $O(n^3)$ bound.
The result can be retained exactly using a root of unity and a power of $\sqrt{2}$, including the possibility of zero.

Finally, consider equality of two represented states.
Gaussian elimination checks whether their affine supports coincide in $O(n^3)$ time.
If they do not, the states are unequal.
If they do, their support coordinates are related by an invertible affine transformation, which can also be found in $O(n^3)$ time.
Decompose this transformation into $O(r^2)$ elementary affine substitutions and apply them to one phase polynomial.
Since each substitution costs $O(r^2)$, expressing both phases in common coordinates costs $O(n^4)$ time.

In these common coordinates, the states are equal up to global phase exactly when their phase difference is constant modulo $8$.
Uniqueness of the multilinear polynomial representation over $\mathbb{Z}_8$ makes this equivalent to the vanishing of every nonconstant coefficient of the difference.
Checking these coefficients costs $O(r^3)$, completing the $O(n^4)$ equality bound.
\end{proof}

\section{Algebra of DCP operators} \label{appendix:dcp-algebra}

This appendix develops the algebraic tools used to represent and manipulate DCP operators and subgroups.
We first prove the normal form and derive explicit formulas for group operations, then describe the Howell normal form and canonical remainders used to construct canonical subgroup representations.

\subsection{Normal form and group operations}

\begin{dcpnormalformrestatement}[Normal form for DCP operators]
\dcpNormalFormStatement
\end{dcpnormalformrestatement}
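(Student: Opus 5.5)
We prove existence and uniqueness separately. Call a quadruple $(k,\vec s,M,\vec x)$ admissible if it has the ranges in the statement, and let $\mathcal N$ be the set of operators $\omega_8^{k}D(\vec s,M)X(\vec x)$ with admissible data. For existence, the plan is to show that $\mathcal N$ contains every generator of $\mathcal{D}^n$ and is closed under multiplication. Since $\mathcal{D}^n$ is finite, a multiplicatively closed subset containing $I$ is a subgroup, so this gives $\mathcal N=\mathcal{D}^n$.

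The generators are immediate. We have $\omega_8 I=\omega_8^1D(0,0)X(0)$, $\S_i=D(e_i,0)$, $\CZ_{ij}=D(0,E_{ij}+E_{ji})$ and $X_i=X(e_i)$.

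For closure, note that $D(\vec s,M)$ acts diagonally with phase function
\[
\omega_8^{d(y)},\qquad d(y)=2\sum_i s_iy_i+4\sum_{i<j}M_{ij}y_iy_j .
\]
The key step is to commute $X(\vec x)$ past a diagonal factor. We have $X(\vec x)D(\vec s',M')=D''X(\vec x)$, where $D''$ is diagonal with phase function $d'(y\oplus\vec x)$. By \Cref{cor:phase-polynomial-affine-pullback} applied to the Clifford form of \Cref{thm:clifford-phase-polynomial}, $d'(y\oplus\vec x)$ is again a Clifford phase polynomial, so it equals $\omega_8^{c}D(\vec s'',M'')$ for some admissible $(c,\vec s'',M'')$.

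This can also be computed explicitly. Substituting $y_i\mapsto y_i\oplus x_i$ gives
\[
2s_i(y_i\oplus x_i)=2s_ix_i+2s_i(1-2x_i)y_i \pmod 8,
\]
\[
4M_{ij}(y_i\oplus x_i)(y_j\oplus x_j)=4M_{ij}\bigl(x_ix_j+x_jy_i+x_iy_j+y_iy_j\bigr) \pmod 8 .
\]
Both right-hand sides are of normal-form shape. The product of two normal forms is then
\[
\omega_8^{k+k'+c}\,D(\vec s,M)\,D(\vec s'',M'')\,X(\vec x\oplus\vec x').
\]
Diagonal operators multiply by adding phase functions. This adds $\vec s$ mod $4$ and $M$ mod $2$, and $X$ strings multiply by XOR, so the product lies in $\mathcal N$. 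These formulas also give the multiplication rule used later in \Cref{prop:dcp-group-ops}.

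Uniqueness is where the real care is needed. Suppose
\[
\omega_8^kD(\vec s,M)X(\vec x)=\omega_8^{k'}D(\vec s',M')X(\vec x').
\]
Apply both sides to $\ket{0}$. The left side is proportional to $\ket{\vec x}$ and the right side to $\ket{\vec x'}$, so $\vec x=\vec x'$. Cancelling $X(\vec x)$ leaves equality of diagonal operators, that is,
\[
k+d(y)=k'+d'(y)\pmod 8\quad\text{for all }y .
\]
Evaluating at $y=0$ gives $k=k'$. At $y=e_i$ it gives $2s_i=2s'_i$ mod 8, hence $s_i=s'_i$ mod 4. At $y=e_i+e_j$ it gives $4M_{ij}=4M'_{ij}$ mod 8, hence $M_{ij}=M'_{ij}$.

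The main point to get right is that every coefficient range is exactly faithful. We have $k\in\Z_8$ because the global phase is in $\langle\omega_8\rangle$, $s_i$ lives mod 4 because $\S^4=I$, and $M_{ij}$ lives mod 2 because $\CZ^2=I$. The diagonal of $M$ is zero because $\CZ_{ii}$ is not a generator. With these ranges, the evaluation argument above determines each coefficient uniquely.
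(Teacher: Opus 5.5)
Your proof is correct and follows the paper's. For existence you commute $X(\vec x)$ past the diagonal part in one step, via the substitution $y\mapsto y\oplus\vec x$ in the phase function; this is exactly the computation the paper uses for the product formula in \Cref{prop:dcp-group-ops}, while the paper's existence proof rewrites generator by generator using $X_i\S_i=i\S_i^{-1}X_i$ and $X_i\CZ_{ij}=Z_j\CZ_{ij}X_i$. Uniqueness is the same evaluation on $\ket{0}$, $\ket{e_i}$ and $\ket{e_i\oplus e_j}$.

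The only loose point is the unproved claim that $\mathcal{D}^n$ is finite, which you don't need. Your set $\mathcal N$ is visibly finite, and alternatively each generator's inverse is a positive power of a generator, so closure under multiplication already makes $\mathcal N$ a group containing $\mathcal{D}^n$.
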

\begin{proof}
    We first prove existence by moving all Pauli $X$ factors to
    the right, then prove uniqueness by comparing the action on
    computational-basis states.

    Every DCP operator can be written as a product of scalar
    phases, $\S_i$, $\CZ_{ij}$, and $X_i$, since
    $Z_i=\S_i^2$ and $Y_i=iX_iZ_i$. The relations
    \[
    X_i\S_i=i\S_i^{-1}X_i,
    \qquad
    X_i\CZ_{ij}=Z_j\CZ_{ij}X_i
    \]
    allow us to move each $X_i$ past every diagonal generator;
    generators acting on disjoint qubits commute. The additional
    factors introduced are diagonal, with $Z_j=\S_j^2$, and all
    scalar phases are powers of $\omega_8$. Thus
    \[
    U=\omega_8^k
    \left(\prod_i \S_i^{s_i}\right)
    \left(\prod_{i<j}\CZ_{ij}^{M_{ij}}\right)
    X(\vec{x}).
    \]
    Since $\S_i^4=I$, $\CZ_{ij}^2=I$, and $X_i^2=I$, we may
    take $k\in\Z_8$, $\vec{s}\in\Z_4^n$, and
    $\vec{x}\in\F_2^n$. The exponents $M_{ij}$ uniquely specify
    a symmetric binary matrix with zero diagonal. This proves
    existence.

    To prove uniqueness, suppose
    \[
    \omega_8^k D(\vec{s},M)X(\vec{x})
    =
    \omega_8^{k'}D(\vec{s}',M')X(\vec{x}').
    \]
    Applied to any computational-basis state $\ket{z}$, the two
    sides produce nonzero multiples of $\ket{z\oplus\vec{x}}$
    and $\ket{z\oplus\vec{x}'}$, respectively. Hence
    $\vec{x}=\vec{x}'$, and cancelling the common $X$ factor gives
    \[
    \omega_8^k D(\vec{s},M)
    =
    \omega_8^{k'}D(\vec{s}',M').
    \]
    Both diagonal operators fix $\ket{0^n}$, so evaluating this
    identity on $\ket{0^n}$ gives $k=k'$ in $\Z_8$.
    For the standard basis vector $e_i$,
    \[
    D(\vec{s},M)\ket{e_i}=i^{s_i}\ket{e_i},
    \]
    so $s_i=s_i'$ in $\Z_4$ for every $i$. Finally, for $i<j$,
    \[
    D(\vec{s},M)\ket{e_i\oplus e_j}
    =
    i^{s_i+s_j}(-1)^{M_{ij}}\ket{e_i\oplus e_j}.
    \]
    Since $\vec{s}=\vec{s}'$, this implies $M_{ij}=M_{ij}'$
    for every $i<j$, and therefore $M=M'$. All four parameters
    are unique.
\end{proof}

\begin{proposition}[Group operations on DCP operators]\label{prop:dcp-group-ops}
For $\vec{s}\in\Z_4^{\,n}$ and $M\in\F_2^{\,n\times n}$ symmetric with zero diagonal, write
\begin{align} \label{eq:phi-s-M}
    \varphi_{\vec{s},M}(\vec{v}) \;:=\; \vec{s}\cdot \vec{v} \,+\, 2 \sum_{a<b} M_{ab}\, v_a v_b \pmod 4, \qquad \vec{v} \in \F_2^{\,n},
\end{align}
so that $D(\vec{s},M)\ket{\vec{v}} = i^{\varphi_{\vec{s},M}(\vec{v})}\ket{\vec{v}}$, and let $\odot$ denote the entrywise product of vectors. Given two DCP operators $g_1 = \omega_8^{k_1} D(\vec{s}_1,M_1) X(\vec{x}_1)$ and $g_2 = \omega_8^{k_2} D(\vec{s}_2,M_2) X(\vec{x}_2)$, we have the following formulas for their product, inverse, commutator, and conjugation:
\begin{itemize}
    \item \textbf{Product:}
    \[
    g_1 g_2 \;=\; \omega_8^{\,k_1+k_2+2\varphi_{\vec{s}_2,M_2}(\vec{x}_1)}\, D\big(\vec{s}_1+\vec{s}_2\odot(1-2\vec{x}_1)+2M_2\vec{x}_1,\ M_1+M_2\big)\, X(\vec{x}_1\oplus \vec{x}_2).
    \]
    \item \textbf{Inverse:}
    \[
    g_1^{-1} \;=\; \omega_8^{\,-k_1+2\varphi_{-\vec{s}_1,M_1}(\vec{x}_1)}\, D\big((-\vec{s}_1)\odot(1-2\vec{x}_1)+2M_1\vec{x}_1,\ M_1\big)\, X(\vec{x}_1).
    \]
    \item \textbf{Commutator:}
    \[
    \gcomm{g_1}{g_2} \;:=\; g_1 g_2 g_1^{-1} g_2^{-1} \;=\; i^{\,c}\, Z(\vec{z}),
    \qquad
    c \;=\; \varphi_{\vec{s}_2,M_2}(\vec{x}_1) - \varphi_{\vec{s}_1,M_1}(\vec{x}_2) \pmod 4,
    \]
    \[
    \vec{z} \;=\; M_2\vec{x}_1 + \vec{s}_2\odot \vec{x}_1 + M_1\vec{x}_2 + \vec{s}_1\odot \vec{x}_2 \pmod 2,
    \]
    where $Z(\vec{z}) := Z_1^{z_1} \otimes \dots \otimes Z_n^{z_n}$
    \item \textbf{Conjugation:}
    \[
    g_1 g_2 g_1^{-1} \;=\; \omega_8^{\,k_2+2c}\, D(\vec{s}_2+2\vec{z},\ M_2)\, X(\vec{x}_2),
    \]
    with $(c,\vec{z})$ as in the commutator above.
\end{itemize}
\end{proposition}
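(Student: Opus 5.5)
The approach is to compute everything through the action on computational-basis states, using the phase function $\varphi_{\vec{s},M}$. The basic commutation rule is
\begin{align}
X(\vec{x})D(\vec{s},M)\ket{\vec{v}}=i^{\varphi_{\vec{s},M}(\vec{v})}\ket{\vec{v}\oplus\vec{x}} ,
\end{align}
so $X(\vec{x})D(\vec{s},M)X(\vec{x})$ is the diagonal operator with phase $i^{\varphi_{\vec{s},M}(\vec{v}\oplus\vec{x})}$. The product formula then reduces to rewriting $\varphi_{\vec{s}_2,M_2}(\vec{v}\oplus\vec{x}_1)$ in normal form, and the other three bullets follow from the product formula.

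\textbf{Key identity.} For $a,b\in\F_2$, the identity $a\oplus b=a+b(1-2a)$ handles the linear part. Componentwise, $\vec{s}\cdot(\vec{v}\oplus\vec{x})=\vec{s}\cdot\vec{x}+(\vec{s}\odot(1-2\vec{x}))\cdot\vec{v}\pmod 4$. For the quadratic part, $2M_{ab}(v_a\oplus x_a)(v_b\oplus x_b)\pmod 4$ depends only on the parities. Expanding gives
\begin{align}
2M_{ab}v_av_b+2M_{ab}x_ax_b+2M_{ab}(x_av_b+x_bv_a)\pmod 4 .
\end{align}
Summed over $a<b$, this equals $2\sum M_{ab}v_av_b+2\sum M_{ab}x_ax_b+2(M\vec{x})\cdot\vec{v}$. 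Hence
\begin{align}
\varphi_{\vec{s},M}(\vec{v}\oplus\vec{x})=\varphi_{\vec{s},M}(\vec{x})+\varphi_{\vec{s}\odot(1-2\vec{x})+2M\vec{x},\,M}(\vec{v}) .
\end{align}
Since $i=\omega_8^2$, applying this with $\vec{s}=\vec{s}_2$, $M=M_2$, $\vec{x}=\vec{x}_1$ gives
\begin{align}
X(\vec{x}_1)D(\vec{s}_2,M_2)=\omega_8^{2\varphi_{\vec{s}_2,M_2}(\vec{x}_1)}D(\vec{s}_2\odot(1-2\vec{x}_1)+2M_2\vec{x}_1,M_2)X(\vec{x}_1) .
\end{align}
Diagonal operators multiply by adding $(\vec{s},M)$, because $\varphi$ is additive in its parameters. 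Substituting into $g_1g_2$ therefore gives the product formula.

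\textbf{Inverse.} I would take the claimed $g_1^{-1}$ and verify that its product with $g_1$ is $I$ using the product formula. This is safer than deriving it directly. The steps are as follows. Use $(1-2\vec{x})\odot(1-2\vec{x})=1$ in $\Z_4$ to cancel the $\vec{s}$ terms. The $2M_1\vec{x}_1$ terms cancel mod 4. Then check that the phase exponent vanishes. Here one uses that $\varphi_{\vec{s},M}(\vec{x})$ changes under $\vec{s}\mapsto\vec{s}'$ in a way that is computable from the key identity applied at $\vec{v}=\vec{x}$. This bookkeeping of the scalar is where I expect sign slips, so I would check it on $\S X$ and $\CZ_{12}X_1$.

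\textbf{Commutator and conjugation.} Since $\pi$ is a homomorphism into the abelian group $\F_2^n$, the commutator is diagonal, and every $\CZ$ exponent cancels because the $M$ parts add. The cleanest route is to compute $g_1g_2$ and $g_2g_1$ with the product formula and take their quotient $\gcomm{g_1}{g_2}=(g_1g_2)(g_2g_1)^{-1}$, a ratio of two operators with the same $X$ part. The $\vec{s}$ parts then differ by
\begin{align}
\vec{s}_2\odot(-2\vec{x}_1)+2M_2\vec{x}_1-\vec{s}_1\odot(-2\vec{x}_2)-2M_1\vec{x}_2 ,
\end{align}
which equals $2\vec{z}$ mod 4, using $-2\equiv2$ mod 4. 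This is $\S^{2\vec{z}}=Z(\vec{z})$. The phases differ by $\omega_8^{2(\varphi_{\vec{s}_2,M_2}(\vec{x}_1)-\varphi_{\vec{s}_1,M_1}(\vec{x}_2))}=i^c$. One subtlety must be handled: the quotient of two normal forms with equal $X$ part is computed as the diagonal ratio only after moving the common $X(\vec{x}_1\oplus\vec{x}_2)$ through. Since the two $X$ parts are identical, it cancels directly as $DX(DX')^{-1}=D{D'}^{-1}$. Finally, $g_1g_2g_1^{-1}=\gcomm{g_1}{g_2}g_2=i^cZ(\vec{z})\omega_8^{k_2}D(\vec{s}_2,M_2)X(\vec{x}_2)$. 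Absorbing $Z(\vec{z})=D(2\vec{z},0)$ into the diagonal part gives the conjugation formula.
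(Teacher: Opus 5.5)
Your proposal is correct and follows the paper's proof essentially step for step: the same translation identity $\varphi_{\vec{s},M}(\vec{v}\oplus\vec{x})=\varphi_{\vec{s},M}(\vec{x})+\varphi_{\vec{s}\odot(1-2\vec{x})+2M\vec{x},M}(\vec{v})$ gives the product formula, the commutator comes from comparing the product formulas for $g_1g_2$ and $g_2g_1$, and conjugation is $\gcomm{g_1}{g_2}g_2$. The only deviation is the inverse, which the paper derives directly from $g_1^{-1}=\omega_8^{-k_1}[X(\vec{x}_1)D(-\vec{s}_1,M_1)X(\vec{x}_1)]X(\vec{x}_1)$, avoiding the scalar check you flagged; in your verification route that check closes because the same identity at $\vec{v}=\vec{x}_1$ gives $\varphi_{-\vec{s}_1,M_1}(\vec{x}_1)+\varphi_{\vec{s}',M_1}(\vec{x}_1)=\varphi_{-\vec{s}_1,M_1}(\vec{0})=0$, where $\vec{s}'$ is the linear parameter of the claimed inverse.
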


\begin{proof}
We first compute how conjugation by $X(\vec{x})$ acts on
a diagonal DCP operator. We then derive the product and
inverse formulas, and obtain the commutator by comparing
the products in the two possible orders.

For binary vectors $\vec{v},\vec{x}$, expanding
$v_a\oplus x_a=v_a+x_a-2v_ax_a$ gives
\begin{align}
    \varphi_{\vec{s},M}(\vec{v}\oplus\vec{x})
    &=
    \varphi_{\vec{s},M}(\vec{x})
    +\varphi_{\vec{t},M}(\vec{v})
    \pmod 4,
    \label{eq:dcp-phase-translation}
    \\
    \vec{t}
    &=
    \vec{s}\odot(1-2\vec{x})+2M\vec{x}
    \pmod 4.
\end{align}
Indeed, the linear terms contribute
$s_a x_a+s_a(1-2x_a)v_a$, while each quadratic term
satisfies
\[
    2M_{ab}(v_a\oplus x_a)(v_b\oplus x_b)
    =
    2M_{ab}
    (v_av_b+v_ax_b+x_av_b+x_ax_b)
    \pmod 4.
\]
Collecting the constant, linear, and quadratic terms gives
\cref{eq:dcp-phase-translation}. Consequently,
\begin{equation}
    X(\vec{x})D(\vec{s},M)X(\vec{x})
    =
    i^{\varphi_{\vec{s},M}(\vec{x})}
    D\big(
        \vec{s}\odot(1-2\vec{x})+2M\vec{x},M
    \big).
    \label{eq:dcp-diagonal-conjugation}
\end{equation}

For the product formula, write
\begin{align*}
    g_1g_2
    &=
    \omega_8^{k_1+k_2}
    D(\vec{s}_1,M_1)
    \big[
        X(\vec{x}_1)D(\vec{s}_2,M_2)X(\vec{x}_1)
    \big]
    X(\vec{x}_1\oplus\vec{x}_2).
\end{align*}
Applying \cref{eq:dcp-diagonal-conjugation} and using
\[
    D(\vec{s},M)D(\vec{s}',M')
    =
    D(\vec{s}+\vec{s}',M+M')
\]
gives the stated expression.

For the inverse formula, note that
$D(\vec{s}_1,M_1)^{-1}=D(-\vec{s}_1,M_1)$. Hence
\[
    g_1^{-1}
    =
    \omega_8^{-k_1}
    \big[
        X(\vec{x}_1)D(-\vec{s}_1,M_1)X(\vec{x}_1)
    \big]
    X(\vec{x}_1),
\]
and another application of
\cref{eq:dcp-diagonal-conjugation} gives the result.

To obtain the commutator, let
\begin{align*}
    \vec{a}_{12}
    &=
    \vec{s}_1+\vec{s}_2\odot(1-2\vec{x}_1)
    +2M_2\vec{x}_1,
    \\
    \vec{a}_{21}
    &=
    \vec{s}_2+\vec{s}_1\odot(1-2\vec{x}_2)
    +2M_1\vec{x}_2
\end{align*}
be the linear parameters in the product formulas for
$g_1g_2$ and $g_2g_1$. Their difference satisfies
\[
    \vec{a}_{12}-\vec{a}_{21}
    =
    2\big(
        M_2\vec{x}_1+\vec{s}_2\odot\vec{x}_1
        +M_1\vec{x}_2+\vec{s}_1\odot\vec{x}_2
    \big)
    =
    2\vec{z}
    \pmod 4.
\]
The two products have the same quadratic parameter
$M_1+M_2$ and the same $X$ parameter
$\vec{x}_1\oplus\vec{x}_2$. Their scalar factors differ
by $i^c$, where
\[
    c=
    \varphi_{\vec{s}_2,M_2}(\vec{x}_1)
    -
    \varphi_{\vec{s}_1,M_1}(\vec{x}_2)
    \pmod 4.
\]
Since $D(2\vec{z},0)=Z(\vec{z})$, it follows that
\[
    g_1g_2=i^c Z(\vec{z})g_2g_1.
\]
Multiplying on the right by $g_1^{-1}g_2^{-1}$ proves
the commutator formula.

Finally,
\begin{align*}
    g_1g_2g_1^{-1}
    &=\gcomm{g_1}{g_2}g_2\\
    &=i^c Z(\vec{z})\,
      \omega_8^{k_2}D(\vec{s}_2,M_2)X(\vec{x}_2)\\
    &=\omega_8^{k_2+2c}
      D(\vec{s}_2+2\vec{z},M_2)X(\vec{x}_2),
\end{align*}
which proves the conjugation formula.
\end{proof}

\subsection{Howell normal form and canonical remainders}
\label{app:howell}

Howell normal form generalizes reduced row-echelon form
from fields to rings $\Z_m$. It provides a canonical
generating matrix for a submodule and a reduction
procedure for choosing representatives of its cosets.
Throughout, we fix the coordinate order and represent
elements of $\Z_m$ by integers in $\{0,\ldots,m-1\}$.

\begin{definition}[Howell normal form]
    Let $N\leq\Z_m^d$. A matrix $H$ with nonzero rows
    $\vec{h}_1,\ldots,\vec{h}_r$ is a \emph{Howell
    normal form} of $N$ if its rows generate $N$ and:
    \begin{enumerate}
        \item The first nonzero entry of $\vec{h}_i$
        occurs in column $j_i$, with
        $j_1<\cdots<j_r$.

        \item Each pivot $p_i:=H_{i,j_i}$ is a positive
        divisor of $m$.

        \item Entries above a pivot satisfy
        $0\leq H_{\ell,j_i}<p_i$ for $\ell<i$.

        \item For every $j\in\{1,\ldots,d\}$,
        \begin{align}
            \{\vec{a}\in N:
              a_1=\cdots=a_{j-1}=0\}
            =
            \operatorname{span}_{\Z_m}
            \{\vec{h}_i:j_i\geq j\}.
            \label{eq:howell-tail}
        \end{align}
    \end{enumerate}
    Zero rows are omitted.
\end{definition}

The last condition ensures that cancellations in early
coordinates are fully captured by the later rows.
For example, over $\Z_8$, the row $(2,1)$ alone is
in echelon form but not Howell form: multiplying it
by $4$ gives $(0,4)$. The Howell normal form of the
module it generates is
\begin{align}
    H=
    \begin{pmatrix}
        2 & 1\\
        0 & 4
    \end{pmatrix}.
\end{align}

Every submodule has a unique Howell normal form.
Given an $s\times d$ generating matrix, this form
can be computed in polynomial time
\cite{storjohann1998fast}.
Thus it depends only on the embedded submodule,
not on its chosen generators. Since pivot positions
are distinct, it has at most $d$ nonzero rows.

The following lemma gives the reduction procedure
used to choose canonical DCP lifts.

\begin{lemma}[Canonical remainders]
    \label{lem:howell-remainders}
    Let $N\leq\Z_m^d$ have Howell normal form $H$,
    with pivot positions $j_1,\ldots,j_r$ and
    pivots $p_1,\ldots,p_r$.
    Every coset of $N$ contains exactly one vector
    $\vec{a}$ satisfying
    \begin{align}
        0\leq a_{j_i}<p_i
        \qquad (i=1,\ldots,r).
        \label{eq:howell-reduced}
    \end{align}
    This vector can be computed by successively
    reducing the pivot coordinates using the rows
    of $H$.
\end{lemma}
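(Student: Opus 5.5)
The proof splits into existence, uniqueness, and correctness of the reduction procedure. For existence, I take an arbitrary vector $\vec{a}$ in the coset and process the pivot rows in increasing order $i=1,\ldots,r$. At step $i$, I subtract $\lfloor a_{j_i}/p_i\rfloor\,\vec{h}_i$ from $\vec{a}$, reading $a_{j_i}$ as an integer in $\{0,\ldots,m-1\}$. This leaves $a_{j_i}\in\{0,\ldots,p_i-1\}$.

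The point to check is that later steps do not spoil earlier pivot coordinates. Row $\vec{h}_{i'}$ with $i'>i$ has its first nonzero entry in column $j_{i'}>j_i$, so it vanishes at column $j_i$. Hence subtracting multiples of later rows leaves $a_{j_i}$ unchanged. Each subtraction is by an element of $N$, so the coset is preserved, and the final vector satisfies \cref{eq:howell-reduced}. This also shows that the successive reduction computes the representative, using $r\leq d$ row operations.

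Uniqueness is the main step and is where condition~4 of the Howell form is needed. Suppose $\vec{a},\vec{a}'$ lie in the same coset, both satisfy \cref{eq:howell-reduced}, and $\vec{a}\neq\vec{a}'$. Then $\vec{\delta}=\vec{a}-\vec{a}'\in N$ is nonzero; let $j$ be its first nonzero coordinate. I claim $j$ must be a pivot column. By \cref{eq:howell-tail}, $\vec{\delta}$ is a $\Z_m$-combination $\sum_{j_i\geq j}\lambda_i\vec{h}_i$. Every such row is zero before column $j_i\geq j$. So if $j$ were not a pivot, all contributing rows would vanish at column $j$, contradicting $\delta_j\neq0$.

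So $j=j_i$ for some $i$, and $\delta_{j_i}=\lambda_i p_i$ is a multiple of $p_i$ in $\Z_m$. Since $p_i$ divides $m$, the set of multiples of $p_i$ in $\Z_m$ is exactly $\{0,p_i,2p_i,\ldots\}$ read as integers. But $a_{j_i},a'_{j_i}\in[0,p_i)$, so their difference modulo $m$, viewed in $(-p_i,p_i)$, is a multiple of $p_i$ only if it is $0$. This contradicts $\delta_{j_i}\neq 0$, so $\vec{a}=\vec{a}'$.

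The delicate points are this divisibility argument, which uses that $p_i\mid m$ and the representative convention $\{0,\ldots,m-1\}$, and the use of \cref{eq:howell-tail} to force the leading index of $\vec{\delta}$ onto a pivot. Condition~3 on entries above pivots is not needed for this lemma; it serves only to make $H$ itself unique.
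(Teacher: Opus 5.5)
Your proposal is correct and follows essentially the same argument as the paper. Existence comes from successive pivot reduction in increasing order, using that each row vanishes before its pivot column. Uniqueness uses the tail condition \eqref{eq:howell-tail} to force the leading nonzero coordinate of the difference onto a pivot, and then the divisibility argument with $p_i \mid m$ (the paper likewise never uses condition~3).
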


\begin{proof}
    We first construct a reduced representative,
    then prove its uniqueness.

    Start with any representative $\vec{a}$ of
    the coset. At step $i$, write its current
    pivot coordinate as
    \begin{align}
        a_{j_i}=q_ip_i+t_i,
        \qquad 0\leq t_i<p_i,
    \end{align}
    and replace $\vec{a}$ by
    $\vec{a}-q_i\vec{h}_i$, with arithmetic performed
    modulo $m$. This preserves the coset and makes
    the $j_i$-th coordinate equal to $t_i$.
    Since $\vec{h}_i$ vanishes before column $j_i$,
    previously reduced pivot coordinates remain
    unchanged. After all rows have been processed,
    \cref{eq:howell-reduced} holds.

    For uniqueness, suppose $\vec{a}$ and $\vec{b}$
    are distinct reduced representatives of the
    same coset. Then
    $\vec{c}:=\vec{a}-\vec{b}\in N$.
    Let $j$ be the first nonzero coordinate of
    $\vec{c}$. By \cref{eq:howell-tail},
    $\vec{c}$ is a linear combination of rows
    whose pivots occur at or after $j$.
    If $j$ were not a pivot position, all such
    rows would vanish in column $j$, a contradiction.
    Hence $j=j_i$ for some $i$, and $c_{j_i}$
    is a multiple of $p_i$ in $\Z_m$.

    Since $p_i$ divides $m$, the integer
    $a_{j_i}-b_{j_i}$ is therefore divisible by
    $p_i$. But both coordinates lie in
    $\{0,\ldots,p_i-1\}$, so
    \begin{align}
        -(p_i-1)
        \leq a_{j_i}-b_{j_i}
        \leq p_i-1.
    \end{align}
    The only multiple of $p_i$ in this interval
    is zero, contradicting the choice of $j$.
    Thus the reduced representative is unique.
\end{proof}

Write $\operatorname{rem}_H(\vec{a})$ for this
unique representative of $\vec{a}+N$.
The lemma immediately gives
\begin{align}
    \vec{a}\in N
    &\iff \operatorname{rem}_H(\vec{a})=\vec{0},
    \\
    \vec{a}+N=\vec{b}+N
    &\iff
    \operatorname{rem}_H(\vec{a})
    =\operatorname{rem}_H(\vec{b}).
\end{align}
Once $H$ is known, reduction takes $O(rd)$
coordinate operations and integer divisions on
representatives smaller than $m$.
In particular, for the fixed modulus $m=8$, it
takes $O(d^2)$ time. The remainder map need not
be linear; it selects representatives of quotient
classes without asserting that the quotient splits.

To apply this construction to a DCP subgroup $G$,
take
\begin{align}
    N=\iota(\Delta(G))\leq\Z_8^d,
    \qquad d=1+n+\binom{n}{2},
\end{align}
and let $H$ be its Howell normal form.
For a lift $h_j=D_jX(\vec{x}^{(j)})\in G$, define
$\widetilde{D}_j$ by
\begin{align}
    \iota(\widetilde{D}_j)
    =
    \operatorname{rem}_H\big(\iota(D_j)\big).
\end{align}
This is a valid diagonal DCP operator: reduction
changes $\iota(D_j)$ by an element of
$\iota(\Delta(G))$, so the result remains in the
image of $\iota$. Moreover,
$\widetilde{D}_jD_j^{-1}\in\Delta(G)$, and hence
\begin{align}
    \widetilde{h}_j
    :=
    \widetilde{D}_jX(\vec{x}^{(j)})
    =
    (\widetilde{D}_jD_j^{-1})h_j
    \in G.
\end{align}
All lifts of $\vec{x}^{(j)}$ have diagonal coordinate
vectors in the same coset of $N$. Therefore
$\widetilde{h}_j$ depends only on $G$, $\vec{x}^{(j)}$,
and the fixed coordinate order, not on the
initial lift.

\section{DCP tableaux: equivalence and update rules}
\label{appendix:dcp-tableaux}

This appendix provides the remaining details for comparing and updating DCP tableaux.
We give a polynomial-time test for whether two tableaux represent the same pure state, then prove the stabilizer update rules for $Z$-type measurements.

\subsection{Equivalence of DCP tableaux}\label{app:tableau-equivalence}

The stabilizing closure introduced in
\Cref{sec:dcp-stabilizer-formalism} identifies two DCP-stabilizer groups
precisely when they have the same fixed space. Here we give a direct
polynomial-time test for this equivalence in the pure-state case.

Let \(G_a\leq\mathcal D^n\) be a DCP-stabilizer group for
\(\ket{\psi_a}\), \(a\in\{1,2\}\), and retain a support representative
\(x_a\in\supp(\psi_a)\). Thus the data to be compared are
\begin{equation}
    T_a=(G_a,x_a),
    \qquad
    \Fix(G_a)=\C\ket{\psi_a}.
\end{equation}
Such a representative is already part of the phase-polynomial description
in \cref{eq:simulation-representation}; it may equally be carried
alongside the canonical group representation. There is no need for separate
update rules: its affine support is updated by the rules of
\Cref{sec:simulator-phase-polynomial-representation}.

Set \(V_a=X(G_a)\). Since the fixed space of \(G_a\) is one-dimensional,
\Cref{thm:atomic-basis} implies
\begin{equation}
    \supp(\psi_a)=x_a\oplus V_a.
    \label{eq:tableau-affine-support}
\end{equation}
Consequently the supports agree if and only if
\begin{equation}
    V_1=V_2
    \quad\text{and}\quad
    x_1\oplus x_2\in V_1.
    \label{eq:tableau-support-test}
\end{equation}

Suppose this test passes, write the common space as \(V\), and choose a basis
\(v^{(1)},\ldots,v^{(r)}\) of \(V\). For every \(i\), choose lifts
\begin{equation}
    g_{a,i}\in G_a,
    \qquad
    \pi(g_{a,i})=v^{(i)},
\end{equation}
and define the diagonal quotient
\begin{equation}
    \delta_i:=g_{1,i}g_{2,i}^{-1}\in\ker\pi.
\end{equation}
The lifts can be obtained by row reduction of the \(X\)-parts while carrying
the corresponding DCP operators.

\begin{proposition}[Tableau-equivalence criterion]
    \label{prop:tableau-equivalence}
    The tableaux \(T_1\) and \(T_2\) represent the same state up to global
    phase if and only if \cref{eq:tableau-support-test} holds and every
    $\delta_i$ acts as the identity on the common support
    $S=x_1\oplus V$.
\end{proposition}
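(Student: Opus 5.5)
The plan is to prove the two implications separately, reducing everything to the amplitudes of $\ket{\psi_1}$ and $\ket{\psi_2}$ on their common support. For the forward direction, assume $\ket{\psi_1}=\lambda\ket{\psi_2}$. Then the supports coincide. By \cref{eq:tableau-affine-support} this means $x_1\oplus V_1=x_2\oplus V_2$. Two cosets are equal exactly when their directions agree and the representatives differ by an element of the common direction, which is \cref{eq:tableau-support-test}. Next, both $g_{1,i}$ and $g_{2,i}$ fix $\ket{\psi_1}$: the first by definition, the second because it fixes $\ket{\psi_2}$, a scalar multiple of $\ket{\psi_1}$. Hence $\delta_i=g_{1,i}g_{2,i}^{-1}$ also fixes $\ket{\psi_1}$. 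Since $\delta_i$ is diagonal, \Cref{lem:supp-in-sigma} (applied to the group generated by $\delta_i$) shows that it acts as the identity on every basis vector of $\supp(\psi_1)=S$.

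For the converse, assume \cref{eq:tableau-support-test} holds and each $\delta_i$ is trivial on $S$. Both states are then atomic with the same support $S$. I would show that $\ket{\psi_2}$ is fixed by the whole group $G_1$; then $\Fix(G_1)=\C\ket{\psi_1}$ forces proportionality. It is not enough to check the generators $g_{1,i}$, because $G_1$ also contains diagonal elements, and some of these may not lie in $G_2$.

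I would therefore argue at the level of amplitudes instead. Write $\ket{\psi_a}=\sum_{x\in S}\alpha^{(a)}_x\ket{x}$ with every $\alpha^{(a)}_x\neq0$. The coefficient relation from the proof of \Cref{lem:supp-partition}, applied to $g_{a,i}$, gives $g_{a,i}\ket{x}=(\alpha^{(a)}_{x\oplus v^{(i)}}/\alpha^{(a)}_x)\ket{x\oplus v^{(i)}}$ for all $x\in S$. Since $g_{1,i}=\delta_i g_{2,i}$ and $\delta_i$ acts trivially on $x\oplus v^{(i)}\in S$, the operators $g_{1,i}$ and $g_{2,i}$ agree on every $\ket{x}$ with $x\in S$. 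This yields
\[
\alpha^{(1)}_{x\oplus v^{(i)}}/\alpha^{(1)}_x=\alpha^{(2)}_{x\oplus v^{(i)}}/\alpha^{(2)}_x
\]
for all $x\in S$ and all $i$. Consequently the ratio $\alpha^{(2)}_x/\alpha^{(1)}_x$ is invariant under translation by each basis vector $v^{(i)}$. These basis vectors generate $V$, so repeated translations from $x_1$ reach every point of $S=x_1\oplus V$, and the ratio is constant on $S$. Both states vanish off $S$, so $\ket{\psi_2}=\lambda\ket{\psi_1}$.

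The expected subtlety is exactly the point noted above: the argument must use only the lifts $g_{a,i}$ and their restriction to $S$. It must avoid the claim that $G_1$ and $G_2$ coincide, which need not hold, since distinct groups can fix the same line, and it must not attempt to check the diagonal parts of the two groups against each other.
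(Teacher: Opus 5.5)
Your proposal is correct and follows essentially the same route as the paper. The forward direction gets the support test and the triviality of $\delta_i$ on $S$ from the shared stabilized state, and the converse uses $g_{1,i}=\delta_i g_{2,i}$ on $S$ to show that the amplitude ratio $\alpha^{(2)}_x/\alpha^{(1)}_x$ is invariant under each translation by $v^{(i)}$, hence constant on $S=x_1\oplus V$.
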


\begin{proof}
    If the states agree up to global phase, their supports agree, giving
    \cref{eq:tableau-support-test}. Moreover, both \(g_{1,i}\) and
    \(g_{2,i}\) fix the same state. Their quotient is diagonal, so it must
    act as the identity on every computational-basis vector with nonzero
    amplitude, hence on all of \(S\).

    Conversely, suppose the stated conditions hold and write
    \begin{equation}
        \ket{\psi_a}=\sum_{x\in S}\alpha_a(x)\ket{x}.
    \end{equation}
    Decompose each lift as \(g_{a,i}=D_{a,i}X(v^{(i)})\), with \(D_{a,i}\)
    diagonal, and write
    \(D_{a,i}\ket{x}=d_{a,i}(x)\ket{x}\). The stabilizer equations give
    \begin{equation}
        \alpha_a(x)=d_{a,i}(x)\alpha_a(x\oplus v^{(i)}).
    \end{equation}
    Since
    \(\delta_i=D_{1,i}D_{2,i}^{-1}\) is the identity on \(S\), we have
    \(d_{1,i}(x)=d_{2,i}(x)\) there. Hence
    \(\alpha_1(x)/\alpha_2(x)\) is invariant under translation by every
    \(v^{(i)}\). These translations connect all points of \(S\), so this ratio
    is constant and \(\ket{\psi_1}\) and \(\ket{\psi_2}\) differ only by a
    global phase.
\end{proof}

It remains only to make the identity test explicit. Write
\begin{equation}
    \delta_i=\omega_8^{k_i}D(s_i,M_i),
\end{equation}
and let \(B\) be the matrix with columns \(v^{(1)},\ldots,v^{(r)}\). Its phase on a
basis vector is the Clifford phase polynomial
\begin{equation}
    F_i(x)
    =k_i+2s_i\cdot x
      +4\sum_{j<\ell}(M_i)_{j\ell}x_jx_\ell
      \pmod 8.
\end{equation}
By \Cref{cor:phase-polynomial-affine-pullback,thm:clifford-phase-polynomial},
the restriction \(c\mapsto F_i(x_1\oplus Bc)\) has Boolean degree at most
two, with its linear and quadratic coefficients divisible by \(2\) and \(4\),
respectively. It is therefore identically zero if and only if it vanishes at
\begin{equation}
    c=0,
    \qquad c=e_j,
    \qquad c=e_j\oplus e_\ell
    \quad (1\leq j<\ell\leq r).
    \label{eq:tableau-phase-test-points}
\end{equation}
Thus each diagonal quotient requires only \(O(r^2)\) phase evaluations. The
support comparison, construction of the lifts, and all phase tests are
polynomial-time operations. The result is independent of the chosen lifts:
two lifts of the same \(v^{(i)}\) differ by a diagonal stabilizer, which acts
identically on the corresponding support.

\subsection{Measurement update rules for DCP tableaux}
\label{appendix:tableau-update-measurements}

\begin{dcpzmeasurementrestatement}[Updating under $Z$-measurements]
\dcpZMeasurementUpdateStatement
\end{dcpzmeasurementrestatement}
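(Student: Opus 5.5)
Both cases rest on the support structure from \Cref{thm:atomic-basis}. Since $\Fix(G)=\C\ket{\psi}$, the state is atomic, so $\supp(\psi)=x_0\oplus V$ with $V=X(G)$. By \Cref{lem:z-commute}, $z=Z(\vec v)$ anticommutes with $g\in G$ exactly when $\langle \vec v,\pi(g)\rangle=1$. Hence the map $\chi:G\to\F_2$, $g\mapsto\langle\vec v,\pi(g)\rangle$, is a homomorphism, and its kernel is the centralizer $K=C_G(z)$.

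For case (a), $\chi$ is surjective, so $K$ has index two in $G$. Some $v\in V$ has $\langle\vec v,v\rangle=1$, so the linear function $x\mapsto\langle\vec v,x\rangle$ is balanced on $x_0\oplus V$. The computation of \cref{eq:phase-polynomial-z-measurement-update} then gives each outcome probability $\tfrac12$ and post-measurement state $\ket{\psi_\varepsilon}\propto\Pi_{H_\varepsilon}\ket{\psi}$, where $H_\varepsilon=\{x:(-1)^{\langle\vec v,x\rangle}=\varepsilon\}$.

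We next show $\Fix(\langle K,\varepsilon z\rangle)=\C\ket{\psi_\varepsilon}$. For inclusion, every $k\in K$ commutes with $z$, hence with the projector $\tfrac12(1+\varepsilon z)$. So $k\ket{\psi_\varepsilon}\propto \tfrac12(1+\varepsilon z)k\ket{\psi}=\ket{\psi_\varepsilon}$ up to normalization, and $\varepsilon z$ fixes $\ket{\psi_\varepsilon}$ by construction. For the reverse inclusion, take $\ket{\phi}$ fixed by $K$ and $\varepsilon z$, and fix one anticommuting $g_j\in G$. Set $\ket{\phi'}=\ket{\phi}+g_j\ket{\phi}$. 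Since $g_j$ normalizes $K$ and $g_j^2\in K$ (as $\chi(g_j^2)=0$), $\ket{\phi'}$ is fixed by $K$ and by $g_j$, hence by $G=\langle K,g_j\rangle$. Therefore $\ket{\phi'}=\lambda\ket{\psi}$. Projecting with $\tfrac12(1+\varepsilon z)$ gives $\ket{\phi}=\lambda\,\tfrac12(1+\varepsilon z)\ket{\psi}$, because $g_j\ket{\phi}$ lies in the $-\varepsilon$ eigenspace of $z$. This is the step I expect to need the most care: checking that $K$ is normalized by $g_j$ and that $g_j^2\in K$. Both follow from $\chi$ being a homomorphism, but the argument must work in a nonabelian group.

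The generating-set claim is bookkeeping in the canonical representation of \Cref{def:canon-form}. Diagonal generators have $\chi=0$, and lifts $h_i$ with $\chi(h_i)=0$ stay in $K$. Each other anticommuting lift $h_i$ is replaced by $h_ih_j$, which has $\chi=0$. The chosen lift $h_j$ then represents the nontrivial coset of $K$. By \Cref{prop:dcp-generators} applied to the subgroup $K$, these elements generate $K$: their $X$-parts span $\ker(\langle\vec v,\cdot\rangle)\cap V$, and $\Delta(K)=\Delta(G)$. One also needs $\Delta(G)\subseteq K$, which holds because diagonal elements commute with $z$.

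For case (b), $\langle\vec v,v\rangle=0$ for all $v\in V$, so $\langle\vec v,x\rangle$ is constant, say $c$, on $x_0\oplus V$. Then $Z(\vec v)\ket{\psi}=(-1)^c\ket{\psi}$, the outcome is deterministic, and $\varepsilon=(-1)^c$ is the unique sign with $\varepsilon z\in\mathcal D(\psi)$. The other sign is excluded, since both $z$ and $-z$ would fix $\ket{\psi}$ only if $\ket{\psi}=0$. The state is unchanged, so $G$ remains a DCP-stabilizer group.
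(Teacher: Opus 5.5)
Your proposal is correct and follows essentially the same route as the paper. The reverse inclusion via $\ket{\phi}+g_j\ket{\phi}$ and the row-operation bookkeeping for $K$ match the paper's argument, with your homomorphism $\chi$ giving normality of $K$ and $g_j^2\in K$ directly. The only local difference is how you get the probability $\tfrac12$ and the deterministic case. You derive both from the atomic support $x_0\oplus X(G)$, which implicitly uses uniform magnitudes via \Cref{lem:supp-partition} or \Cref{thm:3pp-dcp}. The paper instead uses two structure-free arguments: the identity $\bra{\psi}z\ket{\psi}=\bra{\psi}t^\dagger z t\ket{\psi}=-\bra{\psi}z\ket{\psi}$, and the fact that an operator commuting with $G$ preserves the one-dimensional space $\Fix(G)$.
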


\begin{proof}
    Let us first prove (a), starting with the outcome probabilities.
    Write $z=Z(\vec v)$, and choose $t\in G$ that anticommutes with $z$.
    Since $t$ fixes $\ket{\psi}$ and satisfies $t^\dagger zt=-z$, we have
    \begin{align}
        \bra{\psi}z\ket{\psi}
        =\bra{\psi}t^\dagger zt\ket{\psi}
        =-\bra{\psi}z\ket{\psi}.
    \end{align}
    Thus $\bra{\psi}z\ket{\psi}=0$.
    Hence, using the projector
    $\Pi_\varepsilon=(I+\varepsilon z)/2$, we have
    \begin{align}
        \Pr(\varepsilon)
        =\bra{\psi}\Pi_\varepsilon\ket{\psi}
        =\frac12.
    \end{align}

    It remains to show that $\Fix(H_\varepsilon)=\C\Pi_\varepsilon\ket{\psi}$.
    For the $\supseteq$ inclusion, every element of $K$ commutes with $\Pi_\varepsilon$ and fixes $\ket{\psi}$, while $(\varepsilon z)\Pi_\varepsilon=\Pi_\varepsilon$.
    Thus, $H_\varepsilon$ fixes $\Pi_\varepsilon\ket{\psi}$, and $\C\Pi_\varepsilon\ket{\psi}\subseteq\Fix(H_\varepsilon)$.

    We now prove the reverse inclusion.
    Let $\ket{\varphi}\in\Fix(H_\varepsilon)$.
    Since $t$ anticommutes with $z$, both $t^2$ and $t^{-1}kt$ commute with $z$ for every $k\in K$, and therefore belong to $K$.
    Consequently, $(I+t)\ket{\varphi}$ is fixed by both $K$ and $t$.
    Moreover, $G=\langle K,t\rangle$.
    Indeed, every element $g\in G$ either commutes or anticommutes with $z$ by \cref{lem:z-commute}.
    In the first case, $g\in K$; in the second, $t^{-1}g\in K$, and therefore $g\in tK$ and $G=\langle K,t\rangle$.
    Hence,
    \begin{align}
        (I+t)\ket{\varphi}\in\Fix(G)=\C\ket{\psi}.
    \end{align}
    Moreover, $z\ket{\varphi}=\varepsilon\ket{\varphi}$ and $zt=-tz$, so
    \begin{align}
        \ket{\varphi}
        =\Pi_\varepsilon(I+t)\ket{\varphi}
        \in\C\Pi_\varepsilon\ket{\psi}.
    \end{align}
    This proves the reverse inclusion.
    Since $\Pi_\varepsilon\ket{\psi}\neq 0$, the fixed space is one-dimensional.

    We now verify that the stated generator update produces $K$.
    By \cref{prop:dcp-generators}, it suffices to obtain generators of $\Delta(K)$ and lifts of a basis of $X(K)$.
    We will show that the diagonal generators can be retained, and that the stated row operations produce the required lifts.

    First, every diagonal element of $G$ commutes with $z$, so $\Delta(K)=\Delta(G)$.
    Thus the original diagonal generators already generate $\Delta(K)$.

    It remains to construct lifts of a basis of $X(K)$.
    By \cref{lem:z-commute}, an element of $G$ with $X$-part $\vec{x}$ commutes with $z=Z(\vec v)$ exactly when $\langle\vec{x},\vec v\rangle=0$.
    Therefore,
    \begin{align}
        X(K)=X(G)\cap\vec v^\perp.
    \end{align}
    Let $g_1,\ldots,g_r$ be the original lifts, whose $X$-parts $\vec{x}^{(1)},\ldots,\vec{x}^{(r)}$ form a basis of $X(G)$.
    Choose an anticommuting lift $g_j$, so that $\langle\vec{x}^{(j)},\vec v\rangle=1$.
    For each $i\neq j$, define
    \begin{align}
        h_i=
        \begin{cases}
            g_i, & \langle\vec{x}^{(i)},\vec v\rangle=0,\\
            g_i g_j, & \langle\vec{x}^{(i)},\vec v\rangle=1.
        \end{cases}
    \end{align}
    Each $h_i$ commutes with $z$, and its $X$-part is
    \begin{align}
        \vec{y}^{(i)}
        =\vec{x}^{(i)}+\langle\vec{x}^{(i)},\vec v\rangle\vec{x}^{(j)}.
    \end{align}
    We must show that the vectors $\vec{y}^{(i)}$, for $i\neq j$, form a basis of $X(K)$.
    We first show that the $\vec{y}^{(i)}$ span $X(K)$.
    Let $\vec{x}\in X(K)$ and expand it in the original basis:
    \begin{align}
        \vec{x}=\sum_{i=1}^r a_i\vec{x}^{(i)}.
    \end{align}
    Since $\langle\vec{x},\vec v\rangle=0$ and $\langle\vec{x}^{(j)},\vec v\rangle=1$, we obtain
    \begin{align}
        0=a_j+\sum_{i\neq j}a_i\langle\vec{x}^{(i)},\vec v\rangle,
    \end{align}
    so $a_j=\sum_{i\neq j}a_i\langle\vec{x}^{(i)},\vec v\rangle$, where all arithmetic is over $\mathbb F_2$.
    Substituting this expression gives
    \begin{align}
        \vec{x}
        &=\sum_{i\neq j}a_i
        \bigl(\vec{x}^{(i)}+\langle\vec{x}^{(i)},\vec v\rangle\vec{x}^{(j)}\bigr)
        =\sum_{i\neq j}a_i\vec{y}^{(i)}.
    \end{align}
    Thus the $\vec{y}^{(i)}$ span $X(K)$.

    To show linear independence, suppose that $\sum_{i\neq j}b_i\vec{y}^{(i)}=0$.
    Expanding in the original basis gives
    \begin{align}
        \sum_{i\neq j}b_i\vec{x}^{(i)}
        +\left(\sum_{i\neq j}b_i\langle\vec{x}^{(i)},\vec v\rangle\right)\vec{x}^{(j)}
        =0.
    \end{align}
    Since the $\vec{x}^{(i)}$ are linearly independent, the coefficient $b_i$ of each $\vec{x}^{(i)}$ with $i\neq j$ must vanish.
    Therefore the $\vec{y}^{(i)}$ are linearly independent and form a basis of $X(K)$.

    Consequently, the retained diagonal generators together with the lifts $h_i$ generate $K$.
    Adding $\varepsilon z$ therefore gives generators of $H_\varepsilon=\langle K,\varepsilon z\rangle$, as claimed.

    We now prove (b). Suppose that $z$ commutes with every element of $G$.
    Then $z$ preserves $\Fix(G)$: if $g\in G$ and $\ket{\varphi}\in\Fix(G)$, then
    $gz\ket{\varphi}=z\ket{\varphi}$.  Since
    $\Fix(G)=\C\ket{\psi}$, the state $\ket{\psi}$ is an eigenvector of $z$, and we have $(\varepsilon z)\ket{\psi}=\ket{\psi}$, and hence $\varepsilon z\in\mathcal{D}(\psi)$.  The post-measurement state is
    $\ket{\psi}$ itself, so its DCP-stabilizer group $G$ is unchanged.
\end{proof}

\bibliography{main}

\end{document}